\documentclass[12pt, reqno]{amsart}
\usepackage{fullpage}
\usepackage{amscd, amsaddr, amsmath}
\usepackage{tikz, tikz-cd, extpfeil}
\usepackage{amsmath, enumitem, hyperref}
\usepackage{fullpage}

\usepackage{color}
\usepackage{hyperref}
\usepackage{wasysym}
\usepackage[all]{xy}
\usepackage{textgreek}

\newtheorem{Theorem}{Theorem}[section]
\newtheorem{Lemma}[Theorem]{Lemma}
\newtheorem{Proposition}[Theorem]{Proposition}
\newtheorem{Corollary}[Theorem]{Corollary}

\theoremstyle{definition}
\newtheorem{Definition}[Theorem]{Definition}

\theoremstyle{remark}
\newtheorem{Remark}[Theorem]{Remark}

\newcommand{\ZZ}{\mathbb{Z}}

\newcommand{\be}{\boldsymbol{e}}

\newcommand{\g}{\mathfrak{g}}

\newcommand{\stab}{\mathfrak{stab}}
\newcommand{\CSpin}{\mathrm{CSpin}}

\newcommand{\Hom}{\operatorname{Hom}}

\renewcommand{\mod}{\operatorname{mod}}

\newcommand{\SO}{\operatorname{SO}}

\newcommand{\Id}{\operatorname{Id}}

\newcommand{\fso}{\mathfrak{so}}
\newcommand{\Cl}{C\ell}
\newcommand{\RR}{\mathbb{R}}
\newcommand{\vol}{\operatorname{vol}}
\newcommand{\fp}{\mathfrak{p}}

\newcommand{\fg}{\mathfrak{g}}
\newcommand{\fh}{\mathfrak{h}}

\newcommand{\fk}{\mathfrak{k}}

\newcommand{\fa}{\mathfrak{a}}
\newcommand{\fD}{\mathfrak{D}}
\newcommand{\Spin}{\mathrm{Spin}}

\DeclareMathOperator{\Ker}{Ker}
\newcommand{\symb}{\operatorname{symb}}

\title{Some rigidity results for supergravity\\ backgrounds in $11$ dimensions}
\author[E. Di Bella]{Emanuele Di Bella}
\address{Universit\`a di Trento, Dipartimento di Matematica, \\
 Via Sommarive 14, 38123 Povo TN (Italy)}
\email{emanuele.dibella@unitn.it}
\author[W. A. de Graaf]{Willem Adriaan De Graaf}
\address{Universit\`a di Trento, Dipartimento di Matematica, \\
 Via Sommarive 14, 38123 Povo TN (Italy)}
\email{willem.degraaf@unitn.it}
\author[A. Santi]{Andrea Santi}
\address{Universit\`a di Roma Tor Vergata, Dipartimento di Matematica,\\ Via della Ricerca Scientifica 1, 00133  Roma (Italy)}
\email{santi@mat.uniroma2.it}
\date{\today}
\subjclass[2020]{83E50, 17B05, 17B66, 32G07} 
\keywords{Supergravity backgrounds, Poincaré superalgebra, Filtered Subdeformations} 

\begin{document} 

\begin{abstract}
This paper is a contribution to the supersymmetry gap problem for supergravity backgrounds $(M,g,F)$ in $11$ dimensions. We study restrictions on the curvature of 
$(M,g,F)$ and, using the bijective correspondence between the space of certain filtered deformations of Lie superalgebras and the space of
highly supersymmetric supergravity backgrounds,
we establish  the following general rigidity result: if the $4$-form $F$ has rank $\operatorname{rk}(F)\leq 6$, Euclidean support, and the space  $\mathfrak{k}_{\bar 1}$ of Killing spinors has dimension $\dim\mathfrak{k}_{\bar 1}> 26$ then $(M,g,F)$ is locally isometric to the maximally supersymmetric Minkowski spacetime or Freund–Rubin background $\mathrm{AdS}_7\times\mathrm{S}^4$. The same rigidity result but with finer estimates on $\dim\mathfrak{k}_{\bar 1}$ is provided for certain types of $\mathfrak k_{\bar 1}$ and specific orbits of the $4$-form under the action of the Lorentz group.
\end{abstract}
\maketitle
\tableofcontents
\section{Introduction}
\label{sec:introduction}

Supergravity is the supersymmetric extension of Einstein's theory of General Relativity, having many distinct presentations according to, for instance, the dimension of the associated Lorentzian manifolds and the amount of supersymmetry generators. Among all supergravity theories, the most remarkable one
is arguably supergravity in $D=11$ dimensions, which was first constructed by Cremmer, Julia,
and Scherk in 1978 \cite{CJS}. One of the major contributors to the interest in $11$-dimensional supergravity today is the fact that it arises as ``low energy limit'' of M-theory -- a conjectured
unification of all the string theories \cite{W}.  Due to this, it has always been desirable to
understand properties of the supergravity theories, clarifying and generalizing their constructions from various points of view and using different techniques. Since the relevant literature is rather extensive (especially in $D=11$ dimensions), we have to refrain from a comprehensive list, and refer here to the recent \cite{Handbook}, which includes sections dedicated to
superspace \cite{KRT}, exceptional field theory \cite{Samtleben}, and group manifold \cite{Castellani} approaches. It is also relevant to mention the interpretation of supergravity as generalized geometry \cite{CSW}, via pure spinors \cite{Cederwall}, and the gauge enhancement of branes as considered in \cite{Bra}.  This paper 
is a contribution to $D=11$ supergravity from the point of view of Tanaka structures \cite{Tan} and filtered deformations of Lie superalgebras \cite{CK}, as introduced by J. Figueroa-O'Farrill and the third named author first in \cite{FOFS2017, FOFS2017II} (see also the earlier \cite{FOFMP, SSI}). For further recent results in $D=11$ supergravity that are connected to the theory of Tanaka structures, see \cite{H, H2}.

An important line of research in string and M-theory is the construction of backgrounds of their low-energy effective counterparts, i.e., supergravity, since this is where 
strings and higher-dimensional branes can propagate \cite{Blau,FOFP,FOFP2}.
The backgrounds preserving maximal or near to maximal supersymmetry are remarkable classes of such solutions, especially for carrying out the quantization of strings or branes. From a geometric perspective, a (bosonic) background of $D=11$ supergravity is 
an $11$-dimensional Lorentzian spin manifold $(M,g)$ with a closed $4$-form $F\in\Omega^4(M)$, satisfying the following
coupled system of PDEs:
\begin{equation}
    \label{eq:bosfieldeqs}
    \begin{split}
      d\star F &= \tfrac12 F\wedge F\;,\\
      \operatorname{Ric}(X,Y) &=\tfrac12 g(\imath_XF,\imath_Y F) - \tfrac16
      \|F\|^2 g(X,Y)\;,
    \end{split}
  \end{equation}
  for all $X,Y\in\mathfrak{X}(M)$. 
	If $S(M)\to M$ is the spinor bundle over $M$, with typical fiber $S\cong\mathbb R^{32}$, a spinor field $\varepsilon\in\Gamma(S(M))$ is called a {\it Killing spinor} if 
	\begin{equation}
	\label{eq:Killingspinor-equation}
	\nabla_X \varepsilon-\tfrac1{24} (X \cdot F - 3 F \cdot X)\cdot\varepsilon=0
	\end{equation}
for all $X\in\mathfrak X(M)$, where $\nabla$ is the Levi-Civita connection and $\cdot$ Clifford multiplication. The amount $N$ of linearly independent Killing spinors is an important invariant of backgrounds.

Backgrounds with $16<N\leq  32$ Killing spinors are called {\it highly supersymmetric} and their classification 
has been pursued on and off during the last 30 years. Although their full classification is still a distant goal, a number of different
solutions have been found:
\vskip0.1cm\par\noindent
\begin{itemize}
	\item[(i)] {\it Maximally supersymmetric backgrounds} (i.e., $N = 32$) were classified in \cite{FOFP} up to local isometry (see \cite{FOFS2017} for a purely Lie theoretic proof in the spirit of this paper):
	\begin{align*}
	\;\;\;\;\;\;\;
	{\footnotesize
 \begin{array}{|c|c|c|c|} \hline
\text{\it Name}  &\text{\it Background} & \varphi:=F^\sharp|_o & \text{\it Killing superalgebra}  \\ \hline\hline
\text{Freund--Rubin} & \begin{gathered} \\ Ad_4\times S^{7} \\ \\ \end{gathered} & \be_{0123} & \mathfrak{osp}(8|4) \\ \hline
\text{Freund--Rubin} & \begin{gathered} \\ Ad_7\times S^4 \\ \\ \end{gathered} & \be_{1234} & \mathfrak{osp}(2,6|4) \\ \hline
\text{Kowalski-Glikman} & \begin{gathered} \\ \text{Cahen-Wallach symmetric space} \\ \\ \end{gathered} & \be_{+123} & \text{solvable Lie superalgebra} \\ \hline
\text{Flat background} &	\begin{gathered} \\ \text{Minkowski spacetime}\; \\ \\ \end{gathered} V= \mathbb R^{1,10} & 0 & \text{Poincar\'e superalgebra}\;\fp \\ \hline
\end{array}
}
\end{align*}
	\item[(ii)] {\it pp-waves backgrounds}, i.e., Brinkmann spaces with flat transverse geometry,
	\item[(iii)] {\it G\"odel backgrounds}, i.e., solutions admitting closed timelike curves.
\end{itemize}

A supersymmetry gap result is also known:
\begin{Theorem}
\cite{GGPR, GGP}
\label{thm:No-Go31}
If a background of $D=11$ supergravity has at least $30$ Killing spinors, then it is locally isometric to a maximally supersymmetric background.
\end{Theorem}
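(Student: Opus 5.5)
We would argue through the Killing superalgebra and the homogeneity theorem. Since $N\geq 30>16$, the background is locally homogeneous: the Killing vectors obtained by squaring Killing spinors span $T_oM$ at every point. The Killing superalgebra $\mathfrak k=\mathfrak k_{\bar 0}\oplus\mathfrak k_{\bar 1}$, with $\mathfrak k_{\bar 1}\subset S$ of dimension $N$, is then a filtered subdeformation of the Poincar\'e superalgebra $\fp=\fso(V)\oplus S\oplus V$. By the correspondence of \cite{FOFS2017, FOFS2017II}, the background is determined locally by the pair $(\varphi,\mathfrak k_{\bar 1})$, where $\varphi=F^\sharp|_o$. We therefore reduce to an algebraic problem: find all $\varphi\in\Lambda^4V$ such that the Killing spinor connection $D_X=\nabla_X-\beta_\varphi(X)$ has a Clifford curvature that annihilates an $N$-dimensional subspace $\mathfrak k_{\bar1}\subset S$ with $N\ge 30$. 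Here $\beta_\varphi(X)=\tfrac1{24}(X\cdot\varphi-3\varphi\cdot X)$.

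The first key step is the algebraic integrability condition. The Killing spinor equation forces $R^D(X,Y)\varepsilon=0$ for all $\varepsilon\in\mathfrak k_{\bar1}$. Here $R^D(X,Y)=R(X,Y)-[\beta(X),\beta(Y)]-(\nabla_X\beta)(Y)+(\nabla_Y\beta)(X)$ takes values in $\Cl(V)$. We will show that $\mathfrak k_{\bar 1}$ has codimension at most $2$, that $R^D(X,Y)$ lies in a small $\fso(V)$-submodule of $\Cl(V)$, and that its annihilator has dimension $\geq 30$. Together these force $R^D=0$, because the relevant Clifford elements have rank too large to kill a codimension-$2$ space. The tool for this is the standard trick: elements of $\Cl(V)\cong \mathrm{End}(S)$ restricted to low-codimension subspaces. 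A nonzero endomorphism in the Clifford algebra of degree $\le 2$ (plus degrees $\leq 5$ coming from $\beta\beta$ and $\nabla\beta$) cannot have kernel of dimension $\geq 30$ unless it belongs to a very restricted list. We will check this list against the pairing with spinors.

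Once $R^D=0$ on the whole of $S$ has been established, the background is maximally supersymmetric, and the classification of \cite{FOFP} gives the four models. A more Lie-theoretic route to the same conclusion is to show that $\mathfrak k_{\bar 1}$ has codimension $\leq 2$, so $[\mathfrak k_{\bar1},\mathfrak k_{\bar1}]$ already covers $V$, and then to use the Spencer cohomology computation $H^{2,2}(\fp_-,\fp)\cong\Lambda^4V$. The deformation is governed by $\varphi$, and the super-Jacobi identity restricted to $\mathfrak k_{\bar1}$ gives quadratic equations in $\varphi$. These are the Clifford relations $\varphi\cdot\varphi$-type identities, holding on the subspace. One then shows that they extend to all of $S$ by $\mathrm{Spin}(V)$-equivariance and a dimension count, which forces $\varphi$ to be decomposable (or null-decomposable) and yields the Freund--Rubin and Kowalski-Glikman cases.

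The main obstacle is the extension step: showing that Clifford/Jacobi conditions holding on a codimension-$\le 2$ subspace hold on all of $S$. We would try to handle it by decomposing the obstruction, a quadratic expression in $\varphi$ valued in $\Hom(\Lambda^2\mathfrak k_{\bar1},\ldots)$, into $\mathrm{Spin}(V)$-modules. We would then note that a nonzero element in an irreducible piece, viewed as an operator on $S$, has rank exceeding $2$, and we would need to verify this for each piece. This is precisely the careful Clifford-rank analysis of \cite{GGPR, GGP}, and it is where the bulk of the computation would lie.
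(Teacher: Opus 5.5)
Your overall frame agrees with the strategy of \cite{GGPR, GGP}, which the paper cites without reproducing: show that the supercovariant curvature $R^D$, which vanishes on $\fk_{\bar 1}$, in fact vanishes on all of $S$, and then invoke the $N=32$ classification. The gap is that the step doing the work is missing, and the mechanism you propose for it cannot work.

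\emph{The rank heuristic fails.} Since $\vol_V$ acts as a scalar on $S$, Clifford multiplication identifies $\Lambda^0V\oplus\cdots\oplus\Lambda^5V$ with all of $\mathrm{End}(S)$; both have dimension $1024$. So requiring $R^D(X,Y)$ to have degree $\le 5$ imposes no constraint. Moreover, $\mathrm{End}(S)$ contains many operators of rank $\le 2$, for instance $s\otimes\langle t,-\rangle$. Hence no pointwise rank bound on $R^D(X,Y)$, for a fixed pair $X,Y$, can force $R^D=0$.

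\emph{The piece-by-piece version fails for the same reason.} Take $s\otimes\langle s,-\rangle$. Its $\Lambda^1V$-component is proportional to the Dirac current $\kappa(s,s)\neq 0$, and Clifford multiplication by a nonzero vector has rank $16$ or $32$. So rank one arises only through cancellation among components lying in different irreducible summands. Bounding the rank of each irreducible $\Spin(V)$-piece of the obstruction separately therefore proves nothing.

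\emph{Equivariance cannot extend identities to $S$.} Identities that hold on $\fk_{\bar 1}$ cannot be transported to $S$ by $\Spin(V)$-equivariance. The subspace $\fk_{\bar 1}$ is stable only under the stabilizer $\fh$, which may be small.

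\emph{The missing ingredients.} What you lack is the input the paper attributes to \cite{GGPR, GGP}: the integrability conditions combined with the field equations, the Bianchi identities and $dF=0$, analysed orbit by orbit on $\mathrm{Gr}(k,S)$ with $k=32-N$.
\begin{enumerate}
\item Let $\nu_a$ span the symplectic complement of $\fk_{\bar1}$, with $a=1$ for $N=31$ and $a=1,2$ for $N=30$. The integrability condition then says $R^D(X,Y)=\sum_a u^a(X,Y)\langle\nu_a,-\rangle$ for some $u^a\in\Lambda^2V^*\otimes S$.
\item The field equations (automatic for $N>16$), together with $dF=0$ and the Bianchi identity, give the gamma-trace identity $\sum_i e^i\cdot R^D(X,e_i)=0$ on all of $S$. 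This identity couples different pairs $(X,Y)$ and yields $\sum_i e^i\cdot u^a(X,e_i)=0$ for each $a$.
\item One puts $\nu$, respectively the plane $\langle\nu_1,\nu_2\rangle$, into a normal form under $\Spin(V)$ and shows $u=0$ case by case.
\end{enumerate}
Neither the trace identity nor the orbit analysis appears in your proposal. The orbit analysis is exactly what the paper identifies as the obstruction to going below $N=30$, because the orbit structure on $\mathrm{Gr}(3,S)$ is unmanageable. As written, your argument defers its only nontrivial step to the references and offers in its place a heuristic that cannot substitute for it.
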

The proof relies on a careful analysis of integrability conditions on the curvature of the connection \eqref{eq:Killingspinor-equation}, field equations     \eqref{eq:bosfieldeqs}, Bianchi identities of the Riemann curvature, and $dF=0$. A somewhat shorter and purely Lie theoretic proof for $N=31$ can also be found in \cite{S}. Unfortunately, extending the strategy of \cite{GGPR, GGP} to $N=29$ and below is hindered by the fact 
that the orbit structure of the action of $\Spin(V)$ on the Grassmannian $\mathrm{Gr}(k,S)$ of $k$-planes in the spinor module $S$ is already extremely complicated for $k=3$. 

The primary motivation for this article is to shed some light on the so-called {\it supersymmetry gap problem}, namely the determination of the submaximal supersymmetry dimension. Although different in several respects, this ``supersymmetry gap'' is reminiscent of the gap phenomenon for
classical geometric structures (we refer to \cite{KT} and the references therein).
At present, the highest number of Killing spinors
known for non-maximally supersymmetric backgrounds is $N=26$, reached by 
the pp-wave discovered by J. Michelson in \cite{M}.
Highly supersymmetric pp-waves exist with any even number $18\leq N\leq 24$ of Killing spinors \cite{GH}, whereas $N=20$  or $N=18$ 
for G\"odel backgrounds \cite{HT}. We argue that the rank of the $4$-form can be considered as a useful organizing principle to study the supersymmetry gap problem (instead of the Grassmannian $\mathrm{Gr}(k,S)$ for varying values of $k$): to date, there is no known highly supersymmetric background for which the rank is maximal and known backgrounds with $N\geq 24$ have rank $\operatorname{rk}(F)\leq 8$. 

\vskip0.2cm\par
The goal of this paper is to establish the following rigidity result.
\begin{Theorem}
\label{thm:main}
Let $(M,g,F)$ be a background of $D=11$ supergravity with 
$4$-form $F$ having rank $\operatorname{rk}(F)\leq 6$ and Euclidean support. If the space of Killing spinors has dimension $N> 26$, then 
$(M,g,F)$ is locally isometric to the maximally supersymmetric flat background or to the Freund–Rubin background $AdS_7\times S^4$.
\end{Theorem}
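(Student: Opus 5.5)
The plan is to work entirely at the level of Lie superalgebras, using the correspondence of \cite{FOFS2017, FOFS2017II} between highly supersymmetric backgrounds and filtered subdeformations of the Poincar\'e superalgebra $\fp=V\oplus S$. Since $N>26>16$, the Killing superalgebra $\fk=\fk_{\bar 0}\oplus\fk_{\bar 1}$ has $\fk_{\bar 1}\subset S$ with $\dim\fk_{\bar 1}=N$, and $[\fk_{\bar 1},\fk_{\bar 1}]$ fills out $V$; hence the background is locally homogeneous and determined up to local isometry by $\fk$. The associated graded algebra is a graded subalgebra $\fh\oplus S'\oplus V$ of $\fp$ with $S'=\fk_{\bar 1}$. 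The deformation is governed by $\varphi=F^\sharp|_o\in\Lambda^4V$, and the Lie bracket $[\,\cdot\,,\,\cdot\,]\colon S'\otimes S'\to\fh$ is expressed through the Dirac current and the endomorphism $\beta_\varphi(v)=v\cdot\varphi-3\varphi\cdot v$. I would first record the algebraic equations that $\varphi$ and $S'$ must satisfy: the Killing-spinor-type condition $\gamma^\varphi(s,s)\in\fh$, which must stabilise both $\varphi$ and $S'$, together with the Jacobi identity for three odd elements. By the homogeneity theorem these are equivalent to the supergravity field equations once $N>16$.

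Second, I would use the hypotheses on $\varphi$. Euclidean support and $\operatorname{rk}\le 6$ mean that, after a Lorentz transformation, $\varphi\in\Lambda^4 W$ for a spacelike subspace $W\subset V$ with $\dim W\le 6$. I would then use the $\SO(6)$-normal forms of $4$-forms on $\RR^6$. These are Hodge dual to $2$-forms, so $\varphi=\star_W(\lambda_1\be_{12}+\lambda_2\be_{34}+\lambda_3\be_{56})$, i.e.\ $\varphi=\lambda_1\be_{3456}+\lambda_2\be_{1256}+\lambda_3\be_{1234}$. This reduces the problem to a three-parameter family, up to scale and permutation. The cases are: $\varphi=0$, giving flat space; exactly one nonzero $\lambda_i$, giving the orbit of $\be_{1234}$; and two or three nonzero $\lambda_i$, which are the genuine rank-$6$ forms.

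Third, for each normal form I would compute the Clifford action of $\beta_\varphi(v)$ on $S$. Its eigen-decomposition is explicit, because the summands are built from commuting products $\Gamma_{ijkl}$. I would then bound $\dim S'$ via the requirement that $\fh\subset\stab(\varphi)$ contains $\gamma^\varphi(S',S')$ and preserves $S'$. The key quantity is the subspace of $S$ on which the odd–odd–odd Jacobi identity, namely the curvature integrability condition $R(v,w)s=[\beta_\varphi(v),\beta_\varphi(w)]s+\dots$ restricted to $S'$, can hold. If the $\lambda_i$ are not all but one zero, this should force $S'$ inside a proper $\stab(\varphi)$-invariant subspace of dimension at most $26$. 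For a single nonzero $\lambda$, the deformation is rigid and must coincide with $\mathfrak{osp}(2,6|4)$, i.e.\ $AdS_7\times S^4$, by the maximal classification in \cite{FOFS2017}. Indeed, any $S'$ of dimension $>26$ is then shown to extend to all of $S$, or one can invoke Theorem~\ref{thm:No-Go31} once $N\ge 30$ is forced.

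The main obstacle is the rank-$6$ case with generic $(\lambda_1,\lambda_2,\lambda_3)$. There $\stab(\varphi)$ is small, roughly $\fso(1,4)\oplus\fso(2)^3$ or its enhancements when the $\lambda_i$ coincide. One must show that no $S'$ of dimension $27$, $28$ or $29$ satisfies the Jacobi identity. My first attempt would be to decompose $S$ under $\stab(\varphi)$, using the mutually commuting involutions $\Gamma_{1234},\Gamma_{1256},\Gamma_{3456}$ which split $S$ into four $8$-dimensional pieces. Any $S'$ with $\dim S'>26$ meets every piece in dimension at least $3$, since its codimension is at most $5<8$. I would then evaluate the Jacobi identity on spinors drawn from different pieces and derive polynomial constraints on the $\lambda_i$ forcing all but one of them to vanish. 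The hardest part is handling the degenerate parameter values, such as $\lambda_1=\pm\lambda_2$, where $\stab(\varphi)$ jumps. These may need a computer-assisted check.
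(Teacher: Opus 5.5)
\textbf{Verdict: the proposal has a genuine gap in the rank-$6$ case, which is the whole content of the theorem.} The earlier parts agree with the paper: the reduction to the symbol $(\varphi,S')$ and the normal form $\rho\be_{1234}+\lambda\be_{1256}+\mu\be_{3456}$. The decomposable case is also fine, though it needs no dimension count. There $(\varphi,S')\subset(\varphi,S)$, and $(\varphi,S)$ is already the symbol of flat space or $AdS_7\times S^4$, so maximality gives $S'=S$; nothing ``forces $N\geq 30$''. In the rank-$6$ case you give no actual argument, and the constraints you plan to use are not the right ones.

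\emph{The inclusion $\gamma^\varphi(S',S')\subset\fh$ is false in general.} The $\fh$-component of $[s,s]$ is $\gamma^\varphi(s,s)-X_{\kappa(s,s)}$, so only the Dirac kernel $\fD=\{\omega\in\odot^2S'\mid\omega^{(1)}=0\}$ is sent into $\fh$. The usable necessary condition is the Lie pair condition $\gamma^\varphi(\fD)\subset\stab_{\fso(V)}(\varphi)\cap\stab_{\fso(V)}(S')$.

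\emph{The odd--odd--odd Jacobi identity gives no information.} The identity $\gamma^\varphi(s,s)s+\beta^\varphi(\kappa(s,s),s)=0$ holds on all of $S$: it is one of the Spencer cocycle conditions satisfied by $\beta^\varphi,\gamma^\varphi$, and the $X$-terms cancel. So it constrains neither $S'$ nor the $\lambda_i$. The curvature identity $R(v,w)s=\dots$ that you identify with it is a different component of the Jacobi identity, and it is hard to exploit knowing only $\dim S'$. As stated, ``evaluate the Jacobi identity on spinors from different pieces'' has no working mechanism behind it.

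\emph{The missing idea is a way to produce elements of $\fD$ from $\dim S'$ alone, whose $\gamma^\varphi$-image visibly leaves $\stab(\varphi)$.} The paper complexifies to $\mathbb S=\Sigma\otimes_{\mathbb C}\Delta$. Here $\Sigma=\Sigma_+\oplus\Sigma_-$ are the spinors of $E=\langle\be_1,\dots,\be_6\rangle$ and $\Delta=\Delta_+\oplus\Delta_-$ those of $\langle\be_7,\be_8,\be_9,\be_\natural\rangle$. It then works with the chirality pieces $\Sigma_\pm\otimes\Delta_\pm$. Your real joint eigenspaces of $\Gamma_{1234},\Gamma_{1256}$ are instead the spaces $\langle\sigma,J\sigma\rangle\otimes\Delta$ for the weight vectors $\sigma\in\Sigma_+$, and $\kappa$ does not vanish identically between different ones.

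The key fact is that $\kappa\equiv0$ on $(\Sigma_+\otimes\Delta_+)\odot(\Sigma_+\otimes\Delta_-)$. Hence $s^+\odot s^-\in\fD\otimes\mathbb C$ automatically, while $\gamma^\varphi(s^+,s^-)$ lies in $\mathbb E\wedge\mathbb F$, which meets $\stab(\varphi)$ trivially.
\begin{itemize}
\item For $\mu\neq0$, $\gamma^\varphi$ is injective on the relevant component $\mathbb C\be_0\wedge\mathbb E\wedge\Lambda^3\mathbb F$. This forces $(\Sigma_+\otimes\Delta_\pm)\cap\mathbb S'\subset\mathbb C\sigma\otimes\Delta_\pm$, hence $\dim S'\leq 26$.
\item For $\mu=0$ that component acquires a kernel, and further steps are needed. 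One first proves the $\Upsilon/\Omega$ dichotomy. One then shows $\be_{34}+\be_{56}\in\fh$ when $\rho=\lambda$ in the $\Upsilon$ case, and $\fso(F)\subset\fh$ otherwise. Finally, $\fh$-invariance of $S'$ yields a Dirac-kernel element whose $\gamma^\varphi$-image does not stabilise $\varphi$.
\end{itemize}
So the delicate degenerations are $\mu=0$ and $\rho=\lambda$, which change $\Ker\gamma^\varphi$ and where $\varphi$ acts trivially, not jumps of $\stab(\varphi)$. Your sketch neither supplies nor replaces any of this, and a computer check cannot substitute for it without a correct constraint to check.
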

Some other rigidity results have recently appeared in the $D=11$ supergravity literature, but under working assumptions that the spacetime $(M,g)$ and the $4$-form $F$ are factorized, and for supergravity backgrounds in low supersymmetric regime (see for example \cite{Farotti, Fei}). 

The proof of Theorem \ref{thm:main} builds upon
general structural results for highly-supersymmetric backgrounds. We summarize their main aspects here and refer to \S\ref{sec:2} and \S\ref{sec:3.1} for more details.
\begin{Theorem} 
\label{thm:local-homogeneity}
\label{thm:JA}
\cite{FOFH, FOFMP, FOFS2017, FOFS2017II, Abel} Let $(M,g)$ be an $11$-dimensional Lorentzian spin manifold 
endowed  with a closed $4$-form $F\in\Omega^4(M)$. Then:
\begin{enumerate}
\item There exists an associated Lie superalgebra $\fk=\fk_{\bar 0}\oplus \fk_{\bar 1}$, called Killing superalgebra,
with $\fk_{\bar 0}$ the space of Killing vectors preserving $F$ and $\fk_{\bar 1}$ the space of Killing spinors;
\end{enumerate}
Moreover, if $(M,g,F)$ is highly supersymmetric, then:
	\begin{enumerate}		
		\item[(2)] $(M,g,F)$ is locally
homogeneous;
	\item[(3)] The bosonic field
  equations    \eqref{eq:bosfieldeqs} of $D=11$ supergravity are automatically satisfied;	
	\item[(4)] The Killing superalgebra $\fk$ is isomorphic to a filtered subdeformation $\mathfrak{g}$ of the Poincar\'e superalgebra $\mathfrak p$ and the same result holds for the so-called transvection superalgebra (the ideal of $\mathfrak k $ generated by its odd part $\fk_{\bar 1}$);
\item[(5)] The background is fully determined, up to local isometry, by its associated geometric symbol, namely by the pair $\symb(M,g,F)=(\varphi,S')$, where 
\begin{equation}
\label{eq:pair-introduction}
\varphi=F^\sharp|_o\in\Lambda^4 V\quad\text{and}\quad S'=\fk_{\bar 1}|_o\subset S\;.
\end{equation}
In particular if $\varphi=0$ then $(M,g)$ is locally isometric to Minkowski spacetime.
		\end{enumerate}
\end{Theorem}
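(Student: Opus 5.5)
This is a summary theorem collecting results of \cite{FOFH, FOFMP, FOFS2017, FOFS2017II, Abel}, so the plan is to assemble its five parts from the underlying constructions in the order stated.

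\textbf{Part (1).} I would build $\fk=\fk_{\bar 0}\oplus\fk_{\bar 1}$ directly from the geometry.
\begin{itemize}
\item Take $\fk_{\bar 0}$ to be the Killing vectors $X$ with $\mathcal L_X F=0$, and $\fk_{\bar 1}$ the solutions of \eqref{eq:Killingspinor-equation}.
\item The bracket $[\fk_{\bar 1},\fk_{\bar 1}]\to\fk_{\bar 0}$ is the Dirac current $\varepsilon\mapsto \bar\varepsilon\gamma^\mu\varepsilon$. One checks it is Killing using \eqref{eq:Killingspinor-equation} and the symmetry properties of the $11$-dimensional Clifford bilinears.
\item It preserves $F$ because its contraction with $F$ equals $d$ of the $2$-form bilinear, and $dF=0$.
\item The bracket $[\fk_{\bar 0},\fk_{\bar 1}]$ is the spinorial Lie derivative of Kosmann.
\item The Jacobi identity is automatic except for the odd–odd–odd component. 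There the identity reduces to a Fierz identity. Following \cite{FOFH}, I would verify it using only $dF=0$, not the field equations.
\end{itemize}

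\textbf{Parts (2) and (3).} Here I assume $N>16$.
\begin{itemize}
\item Local homogeneity: the evaluation map $\fk_{\bar 1}|_o\otimes\fk_{\bar 1}|_o\to T_oM$ is surjective whenever $\dim S'>16$. This is a linear-algebra fact: the annihilator of the span of the Dirac currents would give a null or nonzero vector $v$ with $\bar s\, v\cdot s=0$ for all $s\in S'$, and for $\dim S'>16$ this contradicts $v$ having nonzero kernel-complement in Clifford multiplication. Hence the Killing vectors span $T_oM$ near every point \cite{FOFMP, FOFH}.
\item Field equations: the integrability condition of \eqref{eq:Killingspinor-equation} contracted with Clifford elements gives the Einstein and Maxwell equations as $\mathrm{End}(S)$-valued expressions annihilating $S'$. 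For $\dim S'>16$ the relevant Clifford algebra elements must then vanish, as in \cite{FOFS2017II}.
\end{itemize}

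\textbf{Parts (4) and (5).} This is the algebraic core.
\begin{itemize}
\item Filter $\fk$ by the order of vanishing at $o$: degree $-2$ is translations and degree $-1$ is odd.
\item The associated graded algebra embeds into $\fp$, via $V\oplus S'\oplus\fh$ with $\fh\subset\fso(V)$ the stabilizer data. So $\fk$ is a filtered subdeformation of $\fp$, and the same holds for the transvection ideal.
\item Spencer cohomology $H^{2,2}(\fp_-,\fp)\cong\Lambda^4V$, computed in \cite{FOFS2017}, identifies the deformation's infinitesimal datum with $\varphi$.
\item Conversely, the pair $(\varphi,S')$ determines the whole bracket, including the curvature terms fixed by the Jacobi identity. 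Local homogeneity then gives reconstruction as a locally homogeneous space $K/H$, which yields uniqueness up to local isometry \cite{Abel}.
\item If $\varphi=0$ the deformation is trivial in the relevant cohomological sense, so $\fk\subset\fp$ and the geometry is flat.
\end{itemize}

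\textbf{Main obstacle.} I expect the hardest step to be the odd Jacobi identity together with the cohomology computation $H^{2,2}\cong\Lambda^4V$. For that I would rely on the representation theory of $\fso(1,10)$ on $\odot^2S$.
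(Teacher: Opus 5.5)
The paper gives no proof of this statement. It imports it from \cite{FOFH, FOFMP, FOFS2017, FOFS2017II, Abel}, so your plan of assembling (1)--(5) from those sources is the same in spirit. The one step you actually argue, local homogeneity in (2), does not go through as written.

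Suppose $v\neq 0$ is orthogonal to $\kappa(\odot^2S')$. Polarization gives $\langle s,v\cdot t\rangle=0$ for all $s,t\in S'$, so $v\cdot S'$ lies in the symplectic orthogonal $S'^{\perp}$. For non-null $v$, Clifford multiplication by $v$ is invertible, and $\dim S'\le 16$ follows.

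For null $v$ this fails. Here $K=\Ker(v\cdot)=v\cdot S$ is a $16$-dimensional Lagrangian subspace. The same counting, now with $v\cdot S'\subset S'^{\perp}\cap K$, only yields $\dim S'-\dim(S'\cap K)\le 8$, hence $\dim S'\le 24$. That is the weaker $N>24$ threshold of \cite{FOFMP}; closing the range $16<N\le 24$ is exactly what \cite{FOFH} added.

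The missing ingredient is positivity, i.e. the causal character of Dirac currents. For null $v$, the quadratic form $s\mapsto\langle s,v\cdot s\rangle=\eta(v,\kappa(s,s))$ is semi-definite with radical exactly $K$. For example, with $v=\be_0+\be_1$ and $P=\tfrac12(1-\be_0\be_1)$, one has $\langle s,v\cdot s\rangle=2\langle Ps,\be_0\cdot Ps\rangle$, and $\langle -,\be_0\cdot -\rangle$ is definite. Hence $S'\subset K$, so $\dim S'\le 16$, which is the contradiction you need.

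Two further steps are asserted rather than proved.

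In (3), vanishing of the integrability operators built from the Einstein and Maxwell tensors does not follow from $\dim S'>16$ alone. For instance, the nonzero $4$-vector $\be_{1234}+\be_{5678}+\be_{1256}+\be_{3478}\in\Cl(V)$ acts on $S$ with a $24$-dimensional kernel. So you must use the specific shape of these operators, for all $X$ simultaneously.

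In (5), you cannot get "the pair determines the whole bracket" directly for $\fk$. A priori one only knows $\gamma^\varphi(\mathfrak D)\subset\fh_{\fk}\subset\stab_{\fso(V)}(\varphi)\cap\stab_{\fso(V)}(S')$, and the first inclusion can be strict: for Minkowski space $\fh_{\fk}=\fso(V)$, while $\gamma^0=0$. What is pinned down is the transvection superalgebra. There, odd-generation forces $\fh=\gamma^\varphi(\mathfrak D)$ and $X=\gamma^\varphi\circ\Sigma$ (Proposition~\ref{prop:LiePair}). The curvature is then fixed by \eqref{eq:symbolIV}, using surjectivity of $\kappa$ on $\odot^2S'$. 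Passing to this odd-generated ideal is the key idea missing from your sketch of (4)--(5).
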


The proof of Theorem \ref{thm:main} takes 
$\SO(V)$-orbits of four-vectors $\varphi\in\Lambda^4 V$ as starting point. 
We will resort to one of the fundamental properties satisfied by the stabilizer algebra of the transvection superalgebra of the background and a knowledge of the space  of bispinors $\odot^2 S'$ that is mostly independent of $S'$ (more precisely, it is sufficient to know only its dimension). The crucial conditions for the proof will be provided by the Lie brackets of the transvection superalgebra corresponding to the curvature in superspace on purely fermionic directions. 

To conclude, we would like to emphasize that most of our arguments will be representation-theoretic (in particular, we avoid the use of Fierz Identities), thus more amenable to generalizations. In fact, we completed the classification of semisimple and nilpotent {\it real}
fourvectors in dimension $8$ under the action of the special linear group  in the recent \cite{BGS} -- this result comprises the classification of the fourvectors of rank $\leq 7$ (in any dimension), which will be particularly useful for future applications along the lines of this article.
\par
\bigskip
The paper is organized as follows. In \S\ref{sec:2} we review  the basics of filtered subdeformations of the Poincar\'e superalgebra, including the strong version Theorem \ref{thm:strong-version-reconstruction} of the Reconstruction Theorem 
and the concepts of Dirac kernel and Lie pair.
The latter is a pair $(\varphi,S')$ satisfying a system of coupled algebraic equations, quadratic on $\varphi$ and cubic on $S'$, which allows to recover the stabilizer algebra of the transvection Lie superalgebra.
In \S\ref{sec:3}, we first overview the general strategy of the proof of Theorem \ref{thm:main}, turn to study $\mathrm{SO}(V)$-orbits in $\Lambda^4 V$ with small rank and Euclidean support and, finally, reduce the problem to rank $6$ orbits via Theorem \ref{prop:sugra-decomposable}. The interplay of spinors and fourvectors via the superspace curvature is carefully analyzed in \S\ref{sec:4} and Theorem \ref{thm:main} is then established in \S\ref{sec:5}-\S\ref{sec:6}: it is the combination of Theorem \ref{mainthm:1} in \S\ref{sec:5} (for orbits of lenght $3$) and  Theorem \ref{mainthm:2} in \S\ref{sec:6} (for orbits of lenght $2$).

\par\bigskip
\noindent{\it Notations.} Throughout  the paper,  we consider  Clifford algebras as defined  in \cite{LM}.  According to this,  the Clifford product  of vectors   of an orthonormal basis of a pseudo-Euclidean vector space $(V,\eta)$ is given by $\be_i \cdot \be_j +\be_j\cdot\be_i= -  2 \eta_{ij}$ and  not  `` $+  2 \eta_{ij}$'' as it is  sometimes assumed. We also follow the conventions of \cite{LM} to tacitly identify $\mathfrak{so}(V)$ with $\Lambda^2V$, namely, we have $v\wedge w(u):=\eta(v,u)w-\eta(w,u)v$ for all $u,v,w\in V$, and the action of $\mathfrak{so}(V)$ on any spinor module is half the Clifford action of $\Lambda^2V$.

\section{Preliminary notions}
\label{sec:2}
\subsection{The Poincar\'e superalgebra, its filtered subdeformations, and realizability}
\label{sec:2.1}
Let $(V,\eta)$ be the Lorentzian vector space with ``mostly minus'' signature $(1,10)$
and $S$ the spinor representation of $\fso(V)$ 
(irreducible module of the Clifford
algebra $\Cl(V)\cong 2\RR(32)$ for which, in our conventions, the volume $\vol_V\in\Cl(V)$ acts as
$\vol_V\cdot s=-s$ for all $s\in S$. For more details on Clifford algebras and their representations, we refer the reader to \cite{LM, AC1, AC2}). For later use in \S\ref{sec:3}-\S\ref{sec:4}, we fix an orthonormal basis $\left\{\be_0,\be_1,\ldots,\be_9,\be_\natural\right\}$ of $V$ once and for all, where all the vectors of the basis have squared norm $-1$, except $\be_0$ for which $\eta(\be_0,\be_0)=+1$.
We recall that $S$ has an
$\fso(V)$-invariant symplectic structure $\left<-,-\right>$ with the property that
$
\left<v\cdot s_1, s_2\right> = - \left<s_1, v \cdot s_2\right>
$
w.r.t. the Clifford action, for all $s_1,s_2 \in S$ and $v \in V$. 
\begin{Definition}
The {\it Poincar\'e superalgebra} $\fp$ has underlying vector space
$\fso(V) \oplus S \oplus V$ and nonzero Lie brackets given by
\begin{equation}
\label{eq:PSA}
    [A,B] = AB - BA\;, \qquad [A,s] = As\;, \qquad [A,v] = Av\;, \qquad
    [s,s] = \kappa(s,s)\;,
\end{equation}
where $A,B \in \fso(V)$, $v \in V$ and $s \in S$. 
\end{Definition} Here
$\kappa: \odot^2 S \to V$ is the so-called \emph{Dirac current},
defined by
$\eta(v,\kappa(s,s)) = \left<s, v\cdot s\right>$,
for all $s\in S$ and $v\in V$.  One important property is that the restriction of $\kappa$ to
$\odot^2 S'$ is surjective on $V$, for any subspace $S'\subset S$ with $\dim S' > 16$ \cite{FOFH}. 
This algebraic fact is usually referred to as the ``local Homogeneity Theorem'', due to the
role it plays proving (2) of Theorem \ref{thm:local-homogeneity}.

  If we grade $\fp$ so that $\fso(V)$, $S$ and $V$ have degrees
$0$, $-1$ and $-2$, respectively, then the Lie brackets \eqref{eq:PSA} turn
$\fp$ into a $\ZZ$-graded Lie superalgebra
$\fp = \fp_{0} \oplus \fp_{-1} \oplus \fp_{-2}=\fso(V)\oplus
S\oplus V$.
Let $\fa=\fh\oplus S'\oplus V$ be a graded subalgebra of $\fp$ with $\dim S'>16$ and $\fg$ a {\it filtered deformation} of $\fa$, i.e., the Lie brackets of $\fg$ take the following general form \cite{CK}:
\begin{equation}
\label{eq:generalbrackets}
  \begin{aligned}
    [A,B]&=AB-BA\\
    [A,s]&=As\\
    [A,v]&=Av+\delta(A,v)
  \end{aligned}
  \qquad\qquad
  \begin{aligned}
    [s,s]&=\kappa(s,s)+\gamma(s,s)\\
    [v,s]&=\beta(v,s)\\
    [v,w]&=\alpha(v,w)+\rho(v,w)
  \end{aligned}
\end{equation}
for $A,B\in \fh$, $v,w\in V$, $s\in S'$. Here the maps $\alpha\in\Hom(\Lambda^2 V,V)$,
$\beta\in\Hom(V\otimes S',S')$, $\gamma\in\Hom(\odot^2 S',\fh)$ and
$\delta\in\Hom(\fh\otimes V,\fh)$ have degree $2$,
and $\rho\in\Hom(\wedge^2V,\fh)$ has degree $4$. 

If we do not  mention the subalgebra $\fa$ explicitly, we say that
$\fg$ is a \emph{highly supersymmetric filtered subdeformation} of $\fp$. Finally, we say that 
$\fg=\fg_{\bar 0}\oplus\fg_{\bar 1}$ is {\it odd-generated} if $\fg_{\bar 0}=[\fg_{\bar 1},\fg_{\bar 1}]$. 
The notion of isomorphism $\Phi:\fg\to\widetilde\fg$ of highly supersymmetric filtered subdeformations of $\fp$ can be found in \cite[Def. 5]{FOFS2017II}: 
it suffices to know that
\begin{equation}
  \label{eq:generaliso}
  \Phi(A)=g\cdot A\;,\qquad
  \Phi(s)=g\cdot s\;,\qquad\text{and}\qquad
  \Phi(v)=g\cdot v+ X_v\;,
\end{equation}
for some $g\in\Spin(V)$ and $X:V\to\widetilde\fh$. The notion of embedding is in \cite[Def. 11]{FOFS2017II}.

We are interested in filtered subdeformations that are {\it realizable}, that is, which correspnd to some highly supersymmetric $D=11$ supergravity background. To introduce this concept, we first need two important maps.
\begin{Definition}
Associated to any $\varphi\in\Lambda^4 V$, there are two natural maps $\beta^\varphi:V\otimes S\to S$ and
$\gamma^\varphi:\odot^2 S\to\fso(V)$ defined by
\begin{align}
\label{eq:thetatoo}
\beta^{\varphi}(v,s)&=\tfrac1{24}(v\cdot\varphi-3\varphi\cdot v)\cdot s\;,\\ 
\label{eq:thetatoo-II}
\gamma^{\varphi}(s,s)v&=-2\kappa(\beta^\varphi(v,s),s)\;,
\end{align}
for all $s\in S$, $v\in V$. We will sometimes also use the notation $\beta^\varphi_v(s) = \beta^\varphi(v,s)$. 
\end{Definition}
As explained in \cite{FOFS2017}, these maps 
are the components of normalized cocycles for the Spencer cohomology group $H^{2,2}(\fp_-,\fp)\cong\Lambda^4 V$. From a more geometric perspective, $\beta^\varphi$ encodes  the Killing spinor equations for supergravity in 
$D=11$ (this fact has been exploited also for other supersymmetric field theories in different dimensions, see \cite{MFSI, MFSII, FoFF, B1, B0} for details) while $\gamma^\varphi$ is related to the curvature in superspace acting on purely fermionic directions.
\begin{Definition}\cite{FOFS2017II}
\label{def:realizable}
A highly supersymmetric filtered subdeformation $\fg$ of $\fp$ is called {\it realizable} if there exists $\varphi\in\Lambda^4 V$ such that:
\begin{enumerate}
	\item[(i)] $\varphi$ is $\fh$-invariant;
	\item[(ii)] $\varphi$ is closed, i.e., 
	\begin{equation}
	\label{eq:closed}
	d\varphi(v_0,\ldots,v_4) = \sum_{i<j}(-1)^{i+j}
      \varphi(\alpha(v_i,v_j), v_0, \ldots, \hat v_i,\ldots,\hat
      v_j,\ldots,v_4) = 0
    \end{equation}
    for all $v_0,\ldots, v_4\in V$;
	\item[(iii)] The components of the Lie brackets of $\fg$ of degree $2$ are of the form
	\begin{equation}
  \label{eq:KSAina}
  \begin{aligned}
	 \alpha(v,w) &= X_v w - X_w v\\
   \gamma(s,s) &= \gamma^\varphi(s,s) - X_{\kappa(s,s)}
  \end{aligned}
  \qquad\qquad
  \begin{aligned}  
	\beta(v,s) &= \beta^\varphi(v,s) + X_v s\\
  \delta(A,v) &= [A,X_v] - X_{Av}  
\end{aligned}
\end{equation}
for some linear map $X:V\to\fso(V)$, where $A,B \in \fh$, $v,w \in V$ and $s \in
S'$.
\end{enumerate}
\end{Definition}
It is a non-trivial result that a realizable $\fg$ has a {\it unique} $\varphi\in\Lambda^4 V$ 
that satisfies (i)-(iii) of Definition \ref{def:realizable} \cite[Corollary 26]{FOFS2017II} and that the component $\rho$ of degree $4$ is fully determined by those of degree $2$ \cite[Proposition 6]{FOFS2017II}.
(We will explain it in \S\ref{sec:2.2}. The idea is that $\varphi$ is an element of $H^{2,2}(\fa_-,\fa)$ and, since $H^{4,2}(\fa_-,\fa)=0$, this group determines the deformation.) 
Changing the map $X:V\to\fso(V)$ by some values in $\fh$ gives filtered subdeformations that are isomorphic.
\vskip0.3cm\par\noindent

It is well-known that 
$\odot^2 S\cong\Lambda^1 V\oplus \Lambda^2 V\oplus\Lambda^5 V$ 
as an $\fso(V)$-module. This decomposition is unique, since all 
three summands are $\fso(V)$-irreducible and inequivalent, so we will
consider $\Lambda^q V$ directly as a
subspace of $\odot^2 S$, for $q=1,2,5$.
We decompose any $\omega\in\odot^2 S$ accordingly
\vskip0.3cm\par\noindent
$$\omega=-\tfrac{1}{32}\big(\omega^{(1)}+\omega^{(2)}+\omega^{(5)}\big)\;,$$ 
\vskip0.2cm\par\noindent
where the overall factor 
is introduced so that $\omega^{(1)}$
coincides exactly
with the Dirac current of $\omega$ and the $\omega^{(q)}\in\Lambda^q V$ for $q=2,5$ are defined in a completely similar fashion via polyvectors. 

We may then re-write $\gamma^\varphi:\odot^2 S\to\mathfrak{so}(V)$ in a couple of useful auxiliary ways (here musical isomorphisms have been tacitly used; this will also
be the case throughout the whole paper):
\begin{Lemma}
\label{lem:mitoccaaggiungere}
For all $s,t\in S$, $\omega\in\odot^2S$, $v,w\in V$, we have
\begin{equation}
\label{definition-gamma-useful}
\begin{aligned}
\eta(\gamma^\varphi(\omega)v,w)
&=\tfrac{1}{3}\eta(\imath_v\imath_w\varphi,\omega^{(2)})
+\tfrac{1}{6}\eta(\imath_v\imath_w\star\varphi,\omega^{(5)})\;,
\end{aligned}
\end{equation}
\begin{equation}
\label{lem:reformulation}
\eta(\gamma^\varphi(s,t)v,w)=-\tfrac{1}{12}\langle\varphi\cdot s,v\wedge w \cdot t\rangle
-\tfrac{1}{12}\langle\varphi\cdot t,v\wedge w \cdot s\rangle
+\tfrac12\langle s,\imath_v\imath_w\varphi\cdot t\rangle\;,
\end{equation}
where $\star$ is the Hodge star operator. In particular $\Ker\gamma^\varphi\supset\Lambda^1 V$.
\end{Lemma}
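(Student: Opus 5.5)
The plan is to prove \eqref{lem:reformulation} first, directly from the definitions \eqref{eq:thetatoo}--\eqref{eq:thetatoo-II}, and then to deduce \eqref{definition-gamma-useful} and the kernel statement from it.

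For \eqref{lem:reformulation}, the definition of the Dirac current gives
$\eta(\gamma^\varphi(s,s)v,w)=-2\eta(w,\kappa(\beta^\varphi_v s,s))$. Polarizing $\kappa$ and using $\langle w\cdot a,b\rangle=-\langle a,w\cdot b\rangle$ together with the antisymmetry of $\langle-,-\rangle$, this equals $-2\langle \beta^\varphi_v s, w\cdot s\rangle$ up to sign. I then polarize in $s$ to pass to the symmetric bilinear expression in $s,t$. Substituting $\beta^\varphi_v=\tfrac1{24}(v\cdot\varphi-3\varphi\cdot v)$ leaves terms of the form $\langle v\varphi s,wt\rangle$ and $\langle \varphi v s,wt\rangle$.

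These terms are handled with Clifford identities. Since $\varphi$ is a $4$-vector, its transpose with respect to $\langle-,-\rangle$ is $+\varphi$: a product of four vectors reverses, and the signs give $(-1)^4$ times the reversal sign $(+1)$ for degree $4$. For vectors, $vw=v\wedge w-\eta(v,w)$ up to the convention sign fixed by $\be_i\be_j+\be_j\be_i=-2\eta_{ij}$, and
$v\varphi=v\wedge\varphi-\imath_v\varphi$, $\varphi v=v\wedge\varphi+\imath_v\varphi$.
I move every Clifford element onto one side and symmetrize in $(v,w)$ versus $(s,t)$. The term $\eta(v,w)\langle\varphi s,t\rangle$ is symmetric in $s,t$ only via $\varphi$, and it has to be shown to cancel or to assemble into the $\tfrac12\langle s,\imath_v\imath_w\varphi\, t\rangle$ term. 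The double-contraction identity $w\varphi v-v\varphi w=\dots$ isolates $\imath_v\imath_w\varphi$ and $v\wedge w\cdot\varphi+\varphi\cdot v\wedge w$. The latter I rewrite using the commutator $[v\wedge w,\varphi]$, which gives the $\fso$-action, plus twice $\varphi\cdot(v\wedge w)$. The $\fso$-action piece, once paired symmetrically in $s,t$, is exactly what assembles into $-\tfrac1{12}(\langle\varphi s,v\wedge w\, t\rangle+\langle\varphi t,v\wedge w\, s\rangle)$. This coefficient bookkeeping is the main obstacle, and I would verify the final constants by testing on $\varphi=\be_{1234}$ with explicit $v,w$.

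For \eqref{definition-gamma-useful}, the right-hand side of \eqref{lem:reformulation} is linear in $s\otimes t$ through the bispinor $\omega=s\odot t$. Each term has the form $\langle s,\Theta\, t\rangle$ with $\Theta=\tfrac12\imath_v\imath_w\varphi-\tfrac1{12}(\varphi\,v\wedge w)$, symmetrized. By the normalization of $\omega^{(q)}$, $\langle s,\Theta t\rangle$ picks out the components of $\Theta$ in degrees $1$, $2$ and $5$ only, since $\odot^2S$ pairs with exactly those. Degree $5$ appears through $\vol_V$ acting as $-1$: the degree-$6$ part of $\varphi\cdot v\wedge w$ is identified with a $5$-vector via Hodge duality, namely $\star(\varphi\wedge v\wedge w)\sim \imath_v\imath_w\star\varphi$. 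The degree-$2$ parts combine $\imath_v\imath_w\varphi$ from the explicit term with the full contraction inside $\varphi\cdot v\wedge w$, which produces the net coefficient $\tfrac13$. The degree-$1$ component vanishes, because a $4$-vector times a $2$-vector has only even degrees $2,4,6$, and $\imath_v\imath_w\varphi$ has degree $2$. Comparing traces against the normalization $-\tfrac1{32}$ then yields the coefficients $\tfrac13$ and $\tfrac16$.

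Finally, $\Ker\gamma^\varphi\supset\Lambda^1V$ follows immediately from \eqref{definition-gamma-useful}: for $\omega\in\Lambda^1V$ we have $\omega^{(2)}=\omega^{(5)}=0$, so the right-hand side vanishes.
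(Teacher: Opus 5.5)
Your plan is sound, but it runs in the opposite order to the paper's proof. The paper first rewrites $\beta^\varphi_v=-\tfrac1{12}(v\wedge\varphi+2\imath_v\varphi)$. Then $-2\langle s,w\cdot\beta^\varphi_v s\rangle$ collapses to $\tfrac16\langle s,(2\imath_v\imath_w\varphi-v\wedge w\wedge\varphi)\cdot s\rangle$, after discarding the degrees that vanish in $\langle s,\cdot\,s\rangle$ (only degrees $2$ and $6$ survive). This is the $\omega^{(2)},\omega^{(5)}$ formula once the $6$-vector is traded for a $5$-vector using $\vol_V=-1$. The spinorial identity is then checked at $s=t$ via $v\wedge w\cdot\varphi=v\wedge w\wedge\varphi+(\text{degree }4)+\imath_v\imath_w\varphi$.

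You instead derive the spinorial identity first and read off the polyvector one from the same degree expansion of $\varphi\cdot v\wedge w$. The ingredients are identical: transposition signs, degree selection, and Clifford expansions. The paper's order is shorter because it never has to manipulate $w\cdot\varphi\cdot v-v\cdot\varphi\cdot w$. Your order has the mild advantage of producing first the form of $\gamma^\varphi$ that is actually used in the later sections.

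One sub-step of your sketch is misattributed. The commutator $[v\wedge w,\varphi]$ is a $4$-vector, so $\langle s,[v\wedge w,\varphi]\cdot t\rangle$ is skew in $(s,t)$ and drops out entirely; it is not what produces the $-\tfrac1{12}$ terms. Those come from the whole anticommutator, because $\langle\varphi\cdot s,v\wedge w\cdot t\rangle+\langle\varphi\cdot t,v\wedge w\cdot s\rangle=\langle s,\{\varphi,v\wedge w\}\cdot t\rangle$. Equivalently, they come from your $2\varphi\cdot v\wedge w$ piece.

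Concretely, use $w\varphi v-v\varphi w=-\{v\wedge w,\varphi\}+4\imath_v\imath_w\varphi$ together with $\{v\wedge w,\varphi\}=2v\wedge w\wedge\varphi+2\imath_v\imath_w\varphi$. The expression at $s=t$ then becomes $(\tfrac1{24}-\tfrac18)\langle s,\{v\wedge w,\varphi\}\cdot s\rangle+\tfrac12\langle s,\imath_v\imath_w\varphi\cdot s\rangle$, which is exactly the second identity.

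Two smaller points. The $\eta(v,w)$ term vanishes simply because $\langle s,\varphi\cdot s\rangle=0$. No trace comparison with $-\tfrac1{32}$ is needed either: $\omega^{(q)}$ is normalized like $\kappa$, i.e.\ $\eta(\alpha,\omega^{(q)})=\langle s,\alpha\cdot s\rangle$ for $\omega=s\odot s$. So the coefficients $\tfrac13=\tfrac12-\tfrac16$ and $\tfrac16$ are read off directly.
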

\begin{proof}
The first identity follows from taking the scalar product of \eqref{eq:thetatoo-II} with $w$ and rewriting 
$\beta^{\varphi}(v,s)=-\tfrac{1}{12}(v\wedge\varphi+2\imath_v\varphi)\cdot s$ 
so to get
\begin{equation*}
\begin{aligned}
\eta(\gamma^\varphi(\omega)v,w)&=-2\langle s,w\cdot \beta^{\varphi}(v,s)\rangle\\
&=\tfrac{1}{6}\langle s, \big(2\imath_v\imath_w\varphi-v\wedge w\wedge \varphi\big)\cdot s\rangle   \\
&=\tfrac{1}{3}\eta(\imath_v\imath_w\varphi,\omega^{(2)})
+\tfrac{1}{6}\eta(\imath_v\imath_w\star\varphi,\omega^{(5)})\;.
\end{aligned}
\end{equation*}
For the second identity, since both expressions are symmetric in $s,t$, it is sufficient to assume $s=t$ and remember that 
$-\tfrac{1}{6}\langle\varphi\cdot s,v\wedge w \cdot s\rangle=\tfrac{1}{6}\langle v\wedge w \cdot\varphi\cdot s, s\rangle=\tfrac{1}{6}\langle v\wedge w \wedge\varphi\cdot s, s\rangle+\tfrac{1}{6}\langle \imath_v\imath_w \varphi\cdot s, s\rangle$. Summing up with the last term on the r.h.s. of \eqref{lem:reformulation} with $s=t$, we arrive at equation \eqref{definition-gamma-useful}.
\end{proof}

\subsection{The Reconstruction Theorem and the Dirac kernel}
\label{sec:2.2}
\begin{Theorem}[Reconstruction Theorem]\cite{FOFS2017II, Abel}
\label{thm:reconstruction}
\begin{enumerate}
	\item The assignment that sends a highly supersymmetric supergravity background to its transvection superalgebra is a $1:1$ correspondence between the moduli spaces
	\vskip0.1cm\par\noindent
\begin{align*}
	\;\;\;\;\;\;\;\frac{\mathcal{SB}}{\cong}&=\frac{\big\{\text{highly supersymmetric supergravity bkgds}\; (M,g,F)\big\}}{\text{local isometry}}\\
	\frac{\mathcal{FD}}{\cong}&=\frac{\big\{\text{highly supersymmetric, odd-generated, realizable, filtered subdeformations $\mathfrak g$ of $\mathfrak p$}\big\}}{\text{maximality and isomorphism}}
	\end{align*} 
	\vskip0.1cm\par\noindent
	of the highly supersymmetric supergravity backgrounds and the
highly supersymmetric, odd-generated, realizable, filtered subdeformations (which are maximal in this class).
\item The curvature $R:\Lambda^2 V\to\fso(V)$ of the supergravity background 	$(M,g,F)\in\mathcal{SB}$ associated to $\fg\in\mathcal{FD}$ is given by 
$R(v,w)=\rho(v,w)-[X_v,X_w]+X_{\alpha(v,w)}$, for $v,w\in V$.
\end{enumerate}
\end{Theorem}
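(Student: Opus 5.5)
Theorem \ref{thm:reconstruction} is a cited result (from \cite{FOFS2017II, Abel}), so the natural plan is to assemble it from the structural facts already summarized in Theorem \ref{thm:local-homogeneity}, rather than to reprove it from scratch. For part (1), I would define the map $\mathcal{SB}\to\mathcal{FD}$ by sending $(M,g,F)$ to its transvection superalgebra $\fk'=[\fk_{\bar 1},\fk_{\bar 1}]\oplus\fk_{\bar 1}$. Evaluating at a point $o$ filters it: the stabilizer sits in degree $0$, $S'=\fk_{\bar 1}|_o$ in degree $-1$, and $V=T_oM$ in degree $-2$. Surjectivity of $\kappa|_{\odot^2 S'}$ for $\dim S'>16$ gives $\fk'_{\bar 0}\to V$ onto, so the associated graded algebra is a graded subalgebra $\fa=\fh\oplus S'\oplus V$ of $\fp$. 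Hence $\fk'$ is a filtered subdeformation, by (4) of Theorem \ref{thm:local-homogeneity}. I would check realizability with $\varphi=F^\sharp|_o$ as follows:
\begin{itemize}
\item $X_v$ is the Nomizu map $\nabla_v$ of a Killing vector $\xi$ with $\xi|_o=v$, which gives the formulas \eqref{eq:KSAina};
\item $\beta$ is read off from the Killing spinor equation \eqref{eq:Killingspinor-equation}, namely $\nabla_v\varepsilon=\beta^\varphi(v,\varepsilon)$ up to the Nomizu term;
\item $\gamma$ is obtained by computing $\nabla$ of the Dirac current $\kappa(\varepsilon,\varepsilon)$, which yields $\gamma^\varphi$;
\item $dF=0$ translates into \eqref{eq:closed}, and $\fh$-invariance of $\varphi$ holds because $\fk_{\bar 0}$ preserves $F$.
\end{itemize}
Locally isometric backgrounds give isomorphic superalgebras of the form \eqref{eq:generaliso}, so the map is well defined on classes.

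For the inverse map, given $\fg\in\mathcal{FD}$, I would build a local homogeneous space $M=G/H$, with $G$ the connected Lie group of $\fg_{\bar 0}$ and $H$ the subgroup of $\fh$. Then:
\begin{itemize}
\item $\eta$ and $\varphi$ are $\fh$-invariant, so they induce an invariant metric $g$ and a $4$-form $F$.
\item The map $X$ defines an invariant connection through its Nomizu map. The condition $\alpha(v,w)=X_vw-X_wv$ says it is torsion-free, and it is metric since $X$ takes values in $\fso(V)$, so it is Levi-Civita.
\item Condition \eqref{eq:closed} gives $dF=0$.
\item The odd part $S'$ produces spinor fields. The relation $\beta=\beta^\varphi+X$ shows they are Killing spinors, so $N\geq\dim S'>16$, and Theorem \ref{thm:local-homogeneity}(3) gives the field equations.
\end{itemize}
The two constructions are mutually inverse up to isomorphism provided one imposes maximality. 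The Killing spinors of the reconstructed background may form a larger space, and its transvection superalgebra then embeds $\fg$; maximal elements are exactly those for which this embedding is onto. Odd-generation is needed so that the reconstructed transvection algebra equals $\fg$ rather than $\fk$.

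Part (2) is the standard formula for the curvature of an invariant connection on a reductive-type locally homogeneous space. One computes $R(v,w)=[\Lambda_v,\Lambda_w]-\Lambda_{[v,w]}$ with $\Lambda_v=X_v$, splitting $[v,w]=\alpha(v,w)+\rho(v,w)$ and tracking signs under the \cite{LM} convention. Because $\rho$ lies in $\fh$, it acts through the isotropy action. This gives $R(v,w)=\rho(v,w)-[X_v,X_w]+X_{\alpha(v,w)}$.

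The main obstacle is the well-definedness and injectivity of the correspondence. One must show that $\varphi$ is uniquely determined by $\fg$, which is the \cite[Corollary 26]{FOFS2017II} statement. One must also show that $\rho$ is determined by the degree-$2$ data, through the vanishing $H^{4,2}(\fa_-,\fa)=0$. I would take both as given, and then show that different choices of $X$ differing by $\fh$-valued maps give isomorphic $\fg$ and isometric backgrounds.
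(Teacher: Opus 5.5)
The paper gives no proof of this statement; it is quoted from \cite{FOFS2017II, Abel}, and your outline is essentially the argument of those references, so the approach is the same. That outline is: evaluation at $o$ with $X_v$ the Nomizu map of a Killing lift of $v$ in the forward direction; a locally homogeneous model whose Levi-Civita connection has Nomizu map $X$ and whose Killing spinors come from $S'$ in the reverse direction; and maximality and odd-generation to make the two assignments mutually inverse.

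One correction is needed in part (2). The expression $[\Lambda_v,\Lambda_w]-\Lambda_{[v,w]}$ you write down gives $[X_v,X_w]-X_{\alpha(v,w)}-\rho(v,w)$, which is minus the claimed formula. The compensating sign comes from the curvature and Killing-bracket conventions of \cite{FOFS2017II}, i.e.\ from the relation $\rho(v,w)=[X_v,X_w]-X_{\alpha(v,w)}+R(v,w)$ encoded in the last bracket of \eqref{eq:preKSA}. It does not come from the Clifford conventions of \cite{LM}.
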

Let $\fg\in\mathcal{FD}$ with the corresponding graded Lie algebra $\fa=\fh\oplus S'\oplus V$ and $\varphi\in\Lambda^4 V$. Due to Theorem \ref{thm:reconstruction}, we may call $(\varphi,S')$ the symbol of $\fg$.
Since $S'$ has dimension $\dim S'>16$, then
$\odot^2 S'\subset \odot^2 S\cong\Lambda^1V\oplus\Lambda^2
  V\oplus\Lambda^5 V$ projects surjectively to $\Lambda^1 V$
via the Dirac current.  The above embedding
is in general diagonal and one cannot expect  $\odot^2 S'$ to contain
$\Lambda^q V$, not even for $q=1$.
Restricting the
Dirac current to $\odot^2 S'$ gives rise to
a short exact sequence
\begin{equation*}
  \begin{CD} 0 @>>> \fD @>>> \odot^2 S' @>\kappa>> V @>>> 0\;.
  \end{CD}
\end{equation*}
\begin{Definition}
The space
\begin{equation}
\label{eq:Dirac-kernel-definition}
  \begin{split} \mathfrak{D} &= \odot^2 S'\cap (\Lambda^2
    V\oplus\Lambda^5 V)=\left\{\omega\in\odot^2
      S'~\middle|~\omega^{(1)}=0\right\}
  \end{split}
\end{equation}
is called the \emph{Dirac kernel} of $S'$. 
\end{Definition}
The Dirac kernel will play a major role in the proof of our main Theorem \ref{thm:main}. To explain why this is the case, we need a few last concepts and preliminary results. A splitting of the above short exact sequence --- that is, a linear map $\Sigma:V\to \odot^2 S'$ such that $\Sigma(v)^{(1)}=v$ for all
$v\in V$ --- is called a \emph{section} associated to $S'$. A section
associated to $S'$ always exists and it is unique up to
elements in the Dirac kernel. Finally, we set
  \begin{equation}
	\label{isotropy-phi}
    \begin{split}
      \fh_{(\varphi,S')} &=\gamma^{\varphi}(\mathfrak{D})= \left\{\gamma^{\varphi}(\omega)~\middle|~\omega\in\odot^2
        S'~\text{with}~\omega^{(1)}=0\right\}\;,
    \end{split}
  \end{equation}
	which is a subspace of $\fso(V)$.
\begin{Definition}
  \label{def:Lie pair}	
	\label{eq:Lie-pair}
The symbol $(\varphi,S')$ is called a \emph{Lie pair} if $\fh_{(\varphi,S')}\subset\mathfrak{stab}_{\fso(V)}(\varphi)\cap \mathfrak{stab}_{\fso(V)}(S')$.
\end{Definition}
The name ``Lie pair'' is motivated by the fact that the corresponding \eqref{isotropy-phi} is, in that case, a
  Lie subalgebra of $\fso(V)$ \cite[Lemma 18]{FOFS2017II}. Note that the defining equations of a Lie pair is a rather complicated system of coupled algebraic equations, quadratic on $\varphi$ and cubic on $S'$.
	The following results give crucial necessary conditions satisfied by
any $\fg\in\mathcal{FD}$. 
\begin{Proposition}\cite{FOFS2017II}
  \label{prop:LiePair}
Let $\fg\in\mathcal{FD}$
  with the underlying graded Lie algebra $\fa=\fh\oplus S'\oplus V$. Then the symbol $(\varphi,S')$ is a Lie pair and:
  \begin{enumerate}
  \item the stabilizer Lie algebra $\fh=\fh_{(\varphi,S')}$;
  \item the map $X:V\to\fso(V)$ is determined, up to elements in
    $\fh$, by the identity
    \begin{equation}
		\label{eq:map-X}
      X=\gamma^\varphi\circ\Sigma\;,
    \end{equation}
    where $\Sigma$ is any section associated to $S'$.
  \end{enumerate}  
  In particular $\fg$ is fully determined, up to isomorphism, by the associated symbol. 
\end{Proposition}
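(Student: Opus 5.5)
The plan is to extract everything from two structural properties of $\fg\in\mathcal{FD}$: it is odd-generated, and its degree-$2$ brackets have the realizable form \eqref{eq:KSAina}. Throughout, I would work with the filtration of $\fg$ and its associated graded $\fa=\fh\oplus S'\oplus V$. I identify the degree-$0$ filtered piece $\fg_{\ge 0}$ with $\fh$, in the sense that an element of $\fg_{\bar 0}=\fh\oplus V$ lies in $\fh$ exactly when its $V$-component vanishes.

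\emph{Step 1: $\fh=\fh_{(\varphi,S')}$.} Since $\fg$ is odd-generated, $\fg_{\bar 0}=[\fg_{\bar 1},\fg_{\bar 1}]$. Hence every element of $\fg_{\bar 0}$ has the form $[\omega]:=\kappa(\omega)+\gamma(\omega)$ for some $\omega\in\odot^2S'$, with brackets extended linearly by polarization. Such an element lies in $\fh$ if and only if $\kappa(\omega)=\omega^{(1)}=0$, that is, if and only if $\omega\in\fD$. For $\omega\in\fD$, \eqref{eq:KSAina} gives $\gamma(\omega)=\gamma^\varphi(\omega)-X_{\kappa(\omega)}=\gamma^\varphi(\omega)$. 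Every element of $\fh$ therefore equals $\gamma^\varphi(\omega)$ for some $\omega\in\fD$, and conversely each such $\gamma^\varphi(\omega)=[\omega]$ lies in $\fh$. So $\fh=\gamma^\varphi(\fD)=\fh_{(\varphi,S')}$.

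\emph{Step 2: $(\varphi,S')$ is a Lie pair.} By (i) of Definition~\ref{def:realizable}, $\fh$ annihilates $\varphi$, so $\fh\subset\stab_{\fso(V)}(\varphi)$. Since $\fa$ is a graded subalgebra of $\fp$, we have $[\fh,S']\subset S'$, so $\fh\subset\stab_{\fso(V)}(S')$. Combining these with Step 1 gives exactly Definition~\ref{def:Lie pair}.

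\emph{Step 3: the map $X$.} Fix a section $\Sigma$ and $v\in V$. The bracket $\gamma(\Sigma(v))$ takes values in $\fh$ by the very form of \eqref{eq:generalbrackets}, since $\gamma\in\Hom(\odot^2S',\fh)$. By \eqref{eq:KSAina} and $\kappa(\Sigma(v))=v$,
\[
\gamma(\Sigma(v))=\gamma^\varphi(\Sigma(v))-X_v ,
\]
so $X_v\equiv\gamma^\varphi(\Sigma(v))$ modulo $\fh$, which is \eqref{eq:map-X}. Changing $\Sigma$ by elements of $\fD$ changes $\gamma^\varphi\circ\Sigma$ only by elements of $\fh$, again by Step 1, so the statement is independent of the chosen section.

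\emph{Step 4: $\fg$ is determined by the symbol.} The symbol determines $\fh$ by Step 1, and $X$ modulo $\fh$ by Step 3. By \eqref{eq:KSAina}, $\alpha,\beta,\gamma,\delta$ are explicit in $\varphi$ and $X$. Modifying $X$ by an $\fh$-valued map $V\to\fh$ yields an isomorphic filtered subdeformation, via \eqref{eq:generaliso} with $g=1$. The degree-$4$ component $\rho$ is determined by the degree-$2$ components, by \cite[Proposition 6]{FOFS2017II}; this reflects the vanishing of $H^{4,2}(\fa_-,\fa)$. Finally, conjugating the symbol by $\Spin(V)$ gives isomorphic algebras. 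Hence $\fg$ is determined up to isomorphism.

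The only non-formal ingredient is this last uniqueness of $\rho$. I would import it rather than reprove it. If a direct argument were needed, I would first try to solve for $\rho$ from the $(v,s,s)$-Jacobi identity. Its degree-$4$ part should express $\rho(v,\kappa(s,s))$ through $\beta$, $\gamma$ and $\delta$, and surjectivity of $\kappa|_{\odot^2S'}$ (available since $\dim S'>16$) then determines $\rho$ on all of $\Lambda^2V$.
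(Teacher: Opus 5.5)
Your argument is correct and is essentially the standard proof from \cite{FOFS2017II}, which the paper quotes without proof. Odd-generation together with $\gamma=\gamma^\varphi-X\circ\kappa$ gives $\fh=\gamma(\fD)=\gamma^\varphi(\fD)$; condition (i) and $[\fh,S']\subset S'$ give the Lie pair property; and $\gamma(\Sigma(v))=\gamma^\varphi(\Sigma(v))-X_v\in\fh$ fixes $X$ modulo $\fh$. In Step 4 you could even drop the cited uniqueness of $\rho$: if $X'=X+Y$ with $Y\colon V\to\fh$, the map that fixes $\fh\oplus S'$ and sends $v\mapsto v-Y_v$ respects every bracket with an odd entry, and in an odd-generated Lie superalgebra the Jacobi identity then forces it to respect the even--even brackets too.
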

\begin{Corollary}
\label{corollary:geometric-symbol}
Any highly supersymmetric supergravity background $(M,g,F)$ is completely determined by its geometric symbol $\symb(M,g,F)$, up to local isometry.
\end{Corollary}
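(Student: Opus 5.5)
The plan is to chain the Reconstruction Theorem \ref{thm:reconstruction} with Proposition \ref{prop:LiePair}. Start with a highly supersymmetric background $(M,g,F)$ and let $\fg\in\mathcal{FD}$ be its transvection superalgebra, which Theorem \ref{thm:reconstruction}(1) provides. Its underlying graded algebra is $\fa=\fh\oplus S'\oplus V$, and its symbol $(\varphi,S')$ is, by construction, $\varphi=F^\sharp|_o$ and $S'=\fk_{\bar 1}|_o$. This is exactly the geometric symbol $\symb(M,g,F)$ in \eqref{eq:pair-introduction}.

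The key step is Proposition \ref{prop:LiePair}: $\fg$ is determined up to isomorphism by $(\varphi,S')$. The stabilizer is forced to be $\fh=\fh_{(\varphi,S')}=\gamma^\varphi(\fD)$. The map $X$ is forced to be $\gamma^\varphi\circ\Sigma$ modulo $\fh$, and changing $X$ by $\fh$-valued maps only produces isomorphic filtered subdeformations. Once $\varphi$ is fixed, the degree-2 components $\alpha,\beta,\gamma,\delta$ are then fixed by \eqref{eq:KSAina}. The degree-4 component $\rho$ is fixed by the degree-2 ones, as recalled after Definition \ref{def:realizable} (this is where $H^{4,2}(\fa_-,\fa)=0$ enters).

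Now take two backgrounds with the same geometric symbol. Their transvection superalgebras are isomorphic as filtered subdeformations, so by the injectivity in Theorem \ref{thm:reconstruction}(1) the backgrounds are locally isometric.

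Two points need care. The first is that "same symbol" should be read up to the $\Spin(V)$-action. This is harmless, since an element $g\in\Spin(V)$ moving one symbol to the other gives an isomorphism of the form \eqref{eq:generaliso}. The second point is the main obstacle. The class $\mathcal{FD}$ is taken modulo maximality as well as isomorphism, so I must check that the transvection superalgebra attached to a background is the canonical (maximal, odd-generated) representative. Only then does equality of the algebras determined by the symbol give equality in $\mathcal{FD}/\cong$. I would handle this by using that the transvection superalgebra is generated by $\fk_{\bar 1}$, hence by $S'$ together with the brackets above, which are all symbol-determined. The final claim, that $\varphi=0$ gives Minkowski space, follows because the Poincar\'e superalgebra itself has this symbol.
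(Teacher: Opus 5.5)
Your proposal is correct and follows the paper's route: the corollary comes from combining the last clause of Proposition \ref{prop:LiePair} (the transvection superalgebra is determined up to isomorphism by $(\varphi,S')$) with the injectivity of the correspondence in Theorem \ref{thm:reconstruction}(1). The maximality issue you call the main obstacle is not a real one, since both transvection superalgebras are already images of that assignment; your closing remark on $\varphi=0$ lies outside this statement, and as phrased it would also need an argument that $S'=S$ (e.g.\ $\fh=\gamma^0(\fD)=0$ forces $R=0$).
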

It is not true that every Lie pair $(\varphi,S')$ has a corresponding $\g\in\mathcal{FD}$ or, in other words, it is the geometric symbol of a supergravity background. In fact,
the Lie brackets of a highly supersymmetric, odd-generated, realizable, filtered subdeformation $\mathfrak g$ of $\mathfrak p$ are given by
\begin{equation}  
\label{eq:preKSA}
  \begin{aligned}
    [A,B]&=AB-BA\\
    [A,s]&=As\\
    [A,v]&=Av+[A, X_v] - X_{Av}
  \end{aligned}
  \;\;
  \begin{aligned}
    [s,s]&=\kappa(s,s) + \gamma^{\varphi}(s,s) - X_{\kappa(s,s)}\\
    [v,s]&=\beta^{\varphi}(v,s) + X_v s\\
    [v,w]&=X_v w - X_w v + [X_v,X_w] - X_{X_v w - X_w v} + R(v,w)
  \end{aligned}
\end{equation}
for all $A,B\in\fh$, $v,w\in V$, $s\in S'$. 
Here $\fh=\fh_{(\varphi,S')}$ and the map $X:V\to\fso(V)$ is as in \eqref{eq:map-X}.
\mbox{The rest of the data given by the curvature $R : \Lambda^2 V \to \fso(V)$ also depends on the Lie pair,}\\
as we now  detail explicitly.

The algebraic structure \eqref{eq:preKSA} entails further obvious constraints. First, the right-hand sides of the Lie brackets take values in the graded subalgebra $\fa=\fh \oplus S' \oplus V$ of $\fp$ underlying $\fg$
(the individual terms may not, see equations (23)-(26) in \cite{Abel}). Moreover the Lie brackets are subject to Jacobi identities -- there are ten components --  and we have to remember $d\varphi=0$. 

The overall system of equations is involved and it has been carefully analyzed in \S 5 of \cite{Abel}. 
Theorem 5 op. cit. drastically simplifies the situation and leads to following Theorem
\ref{thm:strong-version-reconstruction} -- a stronger version  of the Reconstruction Theorem. Therein 
\vskip0.4cm\par\noindent
\begin{equation*}
\label{eq:moduli-abstract-symbols}
\begin{aligned}
\frac{\mathcal{AS}}{\cong}&=\frac{\left\{\text{abstract symbols}\;(\varphi,S')\right\}}{\text{maximality and isomorphism}}
\end{aligned}
\end{equation*}
\vskip0.2cm\par\noindent
is the moduli space of abstract symbols\footnote{The isomorphism condition refers to the natural action of $\mathrm{Spin}(V)$ on pairs $(\varphi,S')$. Maximality was used explicitly in \cite{Abel} but inadvertently omitted in Theorem 5: we say $(\varphi',S')\subset (\varphi'',S'')$ when $\varphi'=\varphi''$, $S'\subset S''$.}, where a Lie pair $(\varphi,S')$ 
is called an {\it abstract symbol} if 
\begin{align}
\label{eq:symbolII}
\beta^{\varphi}(v,s) + X_v s&\in S'\\
\label{eq:symbolIII}
d\varphi(v_0,\ldots,v_4) &= 0\\
\label{eq:symbolIV}
    \tfrac12 R\big(v,\kappa(s,s)\big)w &=\kappa((X_v\beta^{\varphi})(w,s),s)
	-\kappa(\beta^\varphi_v(s) , \beta^\varphi_w(s)) -
    \kappa(\beta^\varphi_w \beta^\varphi_v(s),s)
		\\
		\label{eq:symbolV}
		R(v,w)s &= (X_v\beta^{\varphi})(w,s) - (X_w\beta^{\varphi})(v,s) +
  [\beta^\varphi_v,\beta^\varphi_w](s)
\end{align}
\vskip0.2cm\par\noindent
for some $R:\Lambda^2 V\to \fso(V)$ and all $v,w, v_0,\ldots, v_4\in V$, $s\in S'$.

\begin{Theorem}[Reconstruction Theorem - Strong Version]\cite{Abel}
\label{thm:strong-version-reconstruction}
The assignment 
$\frac{\mathcal{SB}}{\cong}\longrightarrow\frac{\mathcal{AS}}{\cong}$ that sends a highly supersymmetric supergravity background to its geometric symbol
is a $1:1$ correspondence, with image the moduli space of abstract symbols.
\end{Theorem}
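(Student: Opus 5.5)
The statement to prove is Theorem~\ref{thm:strong-version-reconstruction}. I would derive it by composing the bijection of Theorem~\ref{thm:reconstruction} with Proposition~\ref{prop:LiePair}. The map $\frac{\mathcal{SB}}{\cong}\to\frac{\mathcal{FD}}{\cong}$, sending a background to its transvection superalgebra, is already a bijection. So it suffices to show that $\fg\mapsto(\varphi,S')$ induces a bijection $\frac{\mathcal{FD}}{\cong}\to\frac{\mathcal{AS}}{\cong}$ that is compatible with the maps to the geometric symbol. Compatibility is immediate from (5) of Theorem~\ref{thm:local-homogeneity}. The proof then splits into three steps: well-definedness, injectivity and surjectivity.

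\textbf{Well-definedness.} Let $\fg\in\mathcal{FD}$. By Proposition~\ref{prop:LiePair} its symbol is a Lie pair with $\fh=\fh_{(\varphi,S')}$ and $X=\gamma^\varphi\circ\Sigma$. Condition \eqref{eq:symbolII} holds because $[v,s]=\beta^\varphi(v,s)+X_vs$ must lie in $\fa$. Condition \eqref{eq:symbolIII} is (ii) of Definition~\ref{def:realizable}. Conditions \eqref{eq:symbolIV} and \eqref{eq:symbolV} are two components of the super Jacobi identity for the brackets \eqref{eq:preKSA}, namely the $[v,[s,s]]$ and $[v,[w,s]]$ components. One expands them and projects onto $V$ and onto $S'$ respectively. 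Since $\fg$ is determined up to isomorphism by its symbol, the action \eqref{eq:generaliso} of $\Spin(V)$ shows that isomorphic $\fg$ give isomorphic symbols. Maximality is preserved because an embedding of filtered subdeformations with the same $\varphi$ induces an inclusion of symbols. Injectivity is then the last sentence of Proposition~\ref{prop:LiePair}.

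\textbf{Surjectivity.} This is the real content and the main obstacle. Given an abstract symbol $(\varphi,S')$ with some $R$, I would define $\fh=\fh_{(\varphi,S')}$ (a Lie algebra by the Lie pair condition) and $X=\gamma^\varphi\circ\Sigma$, and take the brackets \eqref{eq:preKSA}. Several facts must then be checked:
\begin{enumerate}
\item the brackets close on $\fh\oplus S'\oplus V$;
\item all ten Jacobi components hold;
\item the result is odd-generated and realizable.
\end{enumerate}

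For closure, $[s,s]$ lands in $\fh\oplus V$ because $\gamma^\varphi(s,s)-X_{\kappa(s,s)}=\gamma^\varphi\bigl(s\odot s-\Sigma(\kappa(s,s))\bigr)$, and the argument of $\gamma^\varphi$ lies in $\fD$. For $[v,s]$ one uses \eqref{eq:symbolII}. For $[A,v]$ one uses the $\fh$-equivariance of $\gamma^\varphi$ and of $\Sigma$ modulo $\fD$, so that $[A,X_v]-X_{Av}\in\fh$. For $[v,w]$, I would show that $R(v,w)$ lies in $\fh$ modulo the $X$-terms. The first step here is to deduce from \eqref{eq:symbolIV}, via the surjectivity of $\kappa|_{\odot^2S'}$, that $R$ is uniquely determined and is expressed through $\gamma^\varphi$ of Dirac-kernel elements. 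This parallels the fact that $\rho$ is determined by the degree-two data, since $H^{4,2}(\fa_-,\fa)=0$.

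For the Jacobi identities, the purely even and mixed $\fh$-components follow from equivariance. The $[s,s,s]$ component follows from the Lie pair condition $\fh_{(\varphi,S')}\subset\stab(S')$, together with the cocycle property of $(\beta^\varphi,\gamma^\varphi)$ in $H^{2,2}(\fp_-,\fp)$. The $[v,s,s]$ and $[v,w,s]$ components are precisely \eqref{eq:symbolIV}--\eqref{eq:symbolV}. The hardest component is $[u,v,w]$, that is, the algebraic Bianchi identities for $R$ and the differential Bianchi identity. I would obtain it by a bootstrap argument. Because $V=\kappa(\odot^2S')$, any $u$ can be written as a combination of brackets $[s,s]$, so the identity $[u,[v,w]]+\text{cyclic}=0$ can be rewritten using Jacobi identities already established, together with $d\varphi=0$. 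This is the route of \S5 of \cite{Abel}, and I expect it to be the main computation.

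Finally, odd-generation follows from $\kappa$ being onto $V$ and from $\fh=\gamma^\varphi(\fD)$. Realizability holds by construction. Passing to maximal objects gives the bijection of moduli spaces.
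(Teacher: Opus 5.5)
Your proposal is correct and follows the paper's route. The paper takes the statement from \cite{Abel}, whose Theorem~5 reduces closure in $\fa$, the ten Jacobi components and $d\varphi=0$ for the brackets \eqref{eq:preKSA} to \eqref{eq:symbolII}--\eqref{eq:symbolV}, and then combines this with Theorem~\ref{thm:reconstruction} and Proposition~\ref{prop:LiePair}; this is your closure-via-$\fD$, Jacobi and odd-generation bootstrap scheme.

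Two small corrections. First, \eqref{eq:symbolIV} is the $\fh$-component of the $[v,[s,s]]$ identity evaluated on $w$, not its $V$-projection, which is automatic from the definition of $\gamma^\varphi$. Second, for maximality you also need the converse implication: an inclusion $(\varphi,S')\subset(\varphi,S'')$ of abstract symbols gives an embedding of the constructed superalgebras. This holds because a section for $S'$ is also one for $S''$, and $R$ is then fixed by \eqref{eq:symbolIV}.
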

It is remarkable that the curvature tensor of a $D=11$ {\it highly-supersymmetric} supergravity background $(M,g,F)$ can be expressed in terms of  the $4$-form $F$. (By the local Homogeneity Theorem, 
each of the equations \eqref{eq:symbolIV} and \eqref{eq:symbolV}
determines uniquely $R$.) {\it From this perspective, imposing restrictions on the $\mathrm{SO}(V)$-orbits of $\varphi\in\Lambda^4V$ is equivalent to studying  backgrounds that are highly-supersymmetric with special prescribed restrictions on their curvature tensors.}
\section{General strategy and first results}
\label{sec:3}
\subsection{Brief overview on general strategy}
\label{sec:3.1}
Describing $\Spin(V)$-orbits of Lie pairs $(\varphi,S')$, checking \eqref{eq:symbolII}-\eqref{eq:symbolIII}, and understanding whether \eqref{eq:symbolIV} and \eqref{eq:symbolV} anambigously define a tensor $R:
  \Lambda^2 V\to\fso(V)$ is an unfeasible task. One of the main reasons, as advertised earlier in \S\ref{sec:introduction}, is that already the orbit structure of the action of $\Spin(V)$ on the Grassmannian $\mathrm{Gr}(k,S)$ of $k$-planes in the spinor module $S$ is extremely complicated (it is known for $k=1$ \cite{B} and for $k=2$ \cite{GGP}). See \cite[\S7]{Abel} for an example explicitly carried out.
	
	However, the strategy we will pursue in \S\ref{sec:3}-\S\ref{sec:6} is different in several respects:
	\begin{enumerate}
		\item[(i)] We  take $\mathrm{SO}(V)$-orbits of fourvectors $\varphi\in\Lambda^4V$ as the starting point and their rank as a useful organizing principle;
		\item[(ii)] We set up a presentation of spinor module $S$ adapted to all relevant $\varphi$ simultaneously;
		\item[(iii)] Since the supersymmetry gap problem is our main interest, we aim to establish general rigidity results that
depend on the subspaces $S'$ of $S$ only through their dimensions. This requires a careful use of Theorem \ref{thm:strong-version-reconstruction} in \S\ref{sec:5}, since some of the conditions defining the abstract symbols are not easy to manoeuvre without an explicit knowledge of $S'$;
\item[(iv)] It is also worth pointing out that in our very recent \cite{BGS} we used Vinberg's $\theta$-groups and Galois cohomology to classify the nilpotent and semisimple {\it real} fourvectors in dimension $8$ under the action of the special linear group. This result comprises the classification of fourvectors of rank $\leq 7$, since they are automatically nilpotent, and it will be particularly useful for the future applications.
	\end{enumerate}
	\subsection{The $\mathrm{SO}(V)$-orbits in $\Lambda^4 V$ of small rank and with Euclidean support}
For the purposes of this subsection, we consider $G=\mathrm{SL}(V)$ and let $G^\theta=\mathrm{SO}(V)$ be 
the special orthogonal subgroup (the fixed point set of an appropriate involution $\theta:G\to G$ of $G$). 
The {\it support} of a fourvector $\varphi\in\Lambda^4 V$ is the unique minimal subspace $ E\subset V$ such that $\varphi\in\Lambda^4 E$. 
Its dimension is the {\it rank} of $\varphi$ and it is a $G$-invariant, in particular it is one of the simplest $G^\theta$-invariants, together with the causal type of $E$ and the fourvector's norm. The fourvectors of minimal (non-zero) rank are decomposable and constitute a $G$-orbit, which is stratified by the level sets of the norm  into a $1$-parameter family of $G^\theta$-orbits.

It is well-known that an indecomposable $\varphi\in\Lambda^4 V$ has rank at least $6$
\cite[page 103]{Martinet1970}. Since orbits of rank $\leq 7$ are automatically nilpotent from the perspective of \cite{BGS}, we directly see from  Table 1 of \cite{BGS} that the $G$-orbits of rank $6$ are those with representative $\varphi=\be_{1234}+\be_{1256}$ (length $2$ orbit) and $\varphi=\pm(\be_{1234}+\be_{1256}+\be_{3456})$ (two length $3$ orbits). These are the orbits numbered N.2, N.4, in  Table 1 of \cite{BGS}, see \cite{BGS} for an explanation of why N.4 is ``hidden'' there.
(The former  
is also a subminimal orbit, in the sense that its Zariski-closure consists of the orbit itself, the minimal orbit of non-zero decomposable fourvectors and the zero fourvector. See, for instance, \cite[page 104]{Martinet1970}.) To classify the $G^\theta$-orbits of fourvectors $\varphi\in\Lambda^4V$ of rank $6$, we determine the stratification of the above three $G$-orbits under the action of the subgroup. In this paper, we only focus on those with Euclidean support $E$.
\begin{Proposition}[$G^\theta$-orbits of fourvectors]
\label{lem:rank6fourvectors-orthogonal}
\label{lemma:invariant-4form}
\hfill
\begin{enumerate}
\item
Every $G^{\theta}$-orbit of fourvectors in $\Lambda^4V$ having rank $6$ and Euclidean support admits a unique representative
\vskip0.2cm\par\noindent
\begin{equation}
\label{eq:firstrepresentative}
\varphi_{(\rho,\lambda,\mu,\pm)}:=\pm(\rho\be_{1234}+\lambda\be_{1256}+\mu\be_{3456})\;,
\end{equation}
\vskip0.2cm\par\noindent
 where $\rho\geq \lambda\geq \mu\geq 0$ and only $\mu$ can be zero;
\item
Let $E=\langle \be_1,\ldots,\be_6\rangle$ be the support of $\varphi=\varphi_{(\rho,\lambda,\mu,\pm)}$ and 
let $E^\perp=\langle \be_0,\be_7,\be_8,\be_9,\be_\sharp\rangle$ be the orthogonal of $E$ in $V$. Then the stabiliser Lie algebra
\begin{equation}
\label{eq:stabfull}
\mathfrak{stab}_{\mathfrak{so}(V)}(\varphi)=\mathfrak{so}(E^\perp)\oplus\mathfrak{stab}_{\mathfrak{so}(E)}(\varphi)\;,
\end{equation}
where
\vskip0.2cm\par\noindent
{\footnotesize
\begin{align*}
 \;\;\;\;\;\;\;\;\;\;\begin{array}{|c|c|c|} \hline
\text{Case}  &\text{Generators of $\mathfrak{stab}_{\mathfrak{so}(E)}(\varphi)$} & \mathfrak{stab}_{\mathfrak{so}(E)}(\varphi) \\ \hline\hline
\rho=\lambda=\mu &\be_{35}+\be_{46}, \be_{36}-\be_{45},\be_{13}+\be_{24},\be_{14}-\be_{23},\be_{15}+\be_{26},\be_{16}-\be_{25}, \be_{12},\be_{34},\be_{56}& \mathfrak{u}(3) \\ \hline
  \rho>\lambda=\mu &\be_{13}+\be_{24},\be_{14}-\be_{23}, \be_{12},\be_{34},\be_{56} & \mathfrak{u}(2)\oplus\mathfrak{u}(1) \\ \hline
	 \rho=\lambda>\mu & \be_{35}+\be_{46}, \be_{36}-\be_{45}, \be_{12},\be_{34},\be_{56} & \mathfrak{u}(2)\oplus\mathfrak{u}(1) \\ \hline
	\rho>\lambda>\mu &	 \be_{12},\be_{34},\be_{56} & 3\mathfrak{u}(1) \\ \hline
 \end{array}
 \end{align*}
}
\end{enumerate}
\end{Proposition}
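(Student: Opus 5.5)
The plan is to pass first to the Euclidean support and then use the $\mathrm{SL}$-classification recalled above. Let $\varphi$ have rank $6$ and Euclidean support $E$. Since $E$ is a $6$-dimensional negative definite subspace, $\mathrm{SO}(V)$ acts transitively on such subspaces, so we may assume $E=\langle\be_1,\ldots,\be_6\rangle$. The stabilizer of $E$ in $G^\theta$ is $S(\mathrm{O}(E)\times\mathrm{O}(E^\perp))$. Because $E^\perp$ admits orientation-reversing isometries, the effective group acting on $\Lambda^4E$ is all of $\mathrm{O}(E)\cong\mathrm{O}(6)$. Hodge duality on $E$ identifies $\Lambda^4E\cong\Lambda^2E\cong\fso(6)$, equivariantly up to the determinant character for $\mathrm{O}(6)$.

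Under this identification the problem becomes the normal form of skew-symmetric $6\times 6$ matrices under orthogonal conjugation. Every such matrix is conjugate to a block diagonal form $a\be_{12}+b\be_{34}+c\be_{56}$. Dualizing gives $\pm(a\be_{3456}+b\be_{1256}+c\be_{1234})$. Using permutations of the three planes, reflections flipping the signs of pairs of coefficients, and the determinant twist, we can reorder the coefficients so that $\rho\geq\lambda\geq\mu\geq0$ with an overall sign. The condition that $\varphi$ have rank exactly $6$ forces at most one coefficient to vanish: if two vanished, $\varphi$ would be decomposable of rank $4$. Uniqueness of the representative then follows from invariants. The eigenvalue moduli of the skew matrix give $\{\rho,\lambda,\mu\}$. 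The sign is detected by the $\mathrm{SO}(E)$-invariant Pfaffian-type quantity $\varphi\wedge\varphi\wedge\ldots$, or more simply by comparing with the $\mathrm{SL}$-orbits N.4$^\pm$. One caveat needs care: when $\mu=0$ the sign can be absorbed by a reflection, and I must check whether the statement intends the sign to be distinguished in that case. The $\mathrm{SO}(V)$-orientation subtlety, that is, whether the twist really yields both signs or identifies them, is the step I expect to be most delicate. I would settle it by an explicit element of $\mathrm{SO}(V)$ acting by $-1$ on $\be_1$ and on $\be_0$.

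For part (2), first show that $\mathfrak{stab}(\varphi)$ preserves the support $E$: if $A\cdot\varphi=0$, then the flow $\exp(tA)$ fixes $\varphi$ and hence its support. The stabilizer therefore lies in $\fso(E)\oplus\fso(E^\perp)$. The factor $\fso(E^\perp)$ acts trivially on $\Lambda^4E$, which gives \eqref{eq:stabfull}. For the $\fso(E)$-part, use the Hodge duality again: the stabilizer of $\varphi$ equals the centralizer in $\fso(6)$ of the skew matrix $\star_E\varphi=\pm(\rho\be_{56}+\lambda\be_{34}+\mu\be_{12})$.

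This centralizer is determined by the eigenvalue multiplicities. With all three moduli equal, the matrix is a complex structure and the centralizer is $\mathfrak{u}(3)$. With exactly two equal, it is $\mathfrak{u}(2)\oplus\mathfrak{u}(1)$. With all three distinct, it is $3\mathfrak{u}(1)$. The case $\mu=0$ falls into the distinct case, with the zero eigenplane giving $\fso(2)=\mathfrak{u}(1)$. Finally, I would verify the listed generators directly, by checking that for example $\be_{13}+\be_{24}$ commutes with $\be_{12}+\be_{34}$, and confirm that the dimension counts $9,5,5,3$ match.
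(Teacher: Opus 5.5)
Your proposal is correct and follows essentially the paper's route: reduce to the $\mathrm{O}(E)$-action on $\Lambda^4E$, identify $\Lambda^4E$ with $\fso(E)$ via the volume form, and bring the skew matrix to Cartan normal form modulo permutations and even sign changes; for part (2), where the paper omits the computation, you read off the stabilizer as the centralizer of $\star_E\varphi$. The only fix is in the sign discussion: the separating invariant is $\mathrm{Pf}(\star_E\varphi)\propto\rho\lambda\mu$ (your $\varphi\wedge\varphi\wedge\cdots$ vanishes since $\Lambda^8E=0$), and it is invariant under all of $\mathrm{O}(E)$ because the determinant twist enters as $\det(g)^4=1$, so there is no orientation subtlety: the $\pm$ representatives are distinct exactly when $\mu>0$ and coincide when $\mu=0$ (where the $+$ sign is a convention), while the element reversing $\be_1,\be_0$ that you propose could not decide this anyway, since it sends $\varphi$ to $-\rho\be_{1234}-\lambda\be_{1256}+\mu\be_{3456}$, which is already in its $\mathrm{SO}(E)$-orbit.
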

\begin{proof}
Let $\varphi\in\Lambda^4 V$ be a rank $6$ fourvector with Euclidean support. 
 Then $g\cdot\varphi$ has support $E=\langle \be_1,\ldots,\be_6\rangle$
for some $g\in G^\theta$, and two fourvectors with support $E$ are in the same orbit for $G^\theta$ if and only if they are in the same orbit for the orthogonal group $\mathrm{O}(E)$ (because the support is preserved, as well as its orthogonal subspace). Therefore it is enough to describe the orbits
of fourvectors $\varphi\in\Lambda^4 E$ of rank $6$ under the action of $\mathrm{O}(E)$.

Fix a volume element in $E^*$ and consider the isomorphism of $\mathrm{SO}(E)$-modules $\Lambda^4 E\cong\Lambda^2 E^*$ 
 given by the contraction.  
Now $\Lambda^2 E^*\cong\mathfrak{so}(E)$,
so the $\mathrm{SO}(E)$-orbits on $\Lambda^4 E$ are in bijective correspondence with adjoint orbits. 
The group $\mathrm{SO}(E)$ is compact and connected, hence any 
adjoint orbit has a representative in the Cartan subalgebra 
$$
\mathfrak{t}=\left\{\left(\begin{array}{cc|cc|cc}
0 & \mu & & & & \\
-\mu & 0 & & &  &\\
\hline
& & 0 & \lambda & &  \\
& &-\lambda & 0 & &  \\
\hline
& & & & 0 & \rho   \\
& & & & -\rho & 0 
\end{array}
\right)\mid\rho,\lambda,\mu\in\mathbb R
\right\}$$ 
of $\mathfrak{so}(E)$. Using the Weyl group of $\mathrm{SO}(E)$, which consists of all permutations and even number of sign changes, we may uniquely arrange for $\rho\geq \lambda\geq \mu$, with $\rho,\lambda\geq 0$. 
If $\mu\geq 0$ the corresponding fourvector is $\varphi_{(\rho,\lambda,\mu,+)}$, if $\mu<0$ we may perform an additional action of the Weyl group to arrive at $\varphi_{(\rho,\lambda,\mu,-)}$. Using an even number of sign changes from Weyl group, we see that in $\Lambda^4V$ orbits for $\mathrm{O}(E)$ coincide with orbits for $\mathrm{SO}(E)$, so claim (1) is settled. Claim (2) follows from straightforward computations, which we omit. 
\end{proof}
\begin{Remark}
Note that $\mathfrak{stab}_{\mathfrak{so}(E)}(\varphi)\subset\mathfrak{u}(3)$ in all cases of Proposition \ref{lem:rank6fourvectors-orthogonal}, with the compatible complex structure on $E$ given by  $J_E:=\be_{12}+\be_{34}+\be_{56}$.
\end{Remark}
\subsection{Reduction to $\mathrm{SO}(V)$-orbits in $\Lambda^4 V$ of rank $6$}
Filtered subdeformations associated to decomposable fourvectors have already been studied in \cite{FOFS2017}. In case of Euclidean support, we thus have the following:

\begin{Theorem}
\label{prop:sugra-decomposable}
Let $(M,g,F)$ be a supergravity background with $4$-form $F$ of rank $\leq 4$ and Euclidean support.
If the space $\mathfrak{k}_{\bar 1}$ of Killing spinors has dimension $\dim\mathfrak{k}_{\bar 1}> 16$, then 
$(M,g,F)$ is locally isometric to the maximally supersymmetric Minkowski spacetime or Freund–Rubin background $AdS_7\times S^4$.
\end{Theorem}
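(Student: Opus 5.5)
Since $\dim\mathfrak{k}_{\bar 1}>16$, the background is highly supersymmetric. By Corollary \ref{corollary:geometric-symbol} and Theorem \ref{thm:strong-version-reconstruction}, it is determined up to local isometry by its geometric symbol $(\varphi,S')$, with $\varphi=F^\sharp|_o$. Since $F$ has rank $\leq 4$, there are two cases. If $\varphi=0$, item (5) of Theorem \ref{thm:local-homogeneity} gives local isometry with Minkowski spacetime. Otherwise an indecomposable fourvector has rank at least $6$, so $\varphi$ is decomposable with Euclidean support. Acting with $\mathrm{SO}(V)$ and rescaling as in Proposition \ref{lem:rank6fourvectors-orthogonal}, we may take $\varphi=\rho\,\be_{1234}$ with $\rho\neq 0$. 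The sign is absorbed by an orientation-reversing element of $\mathrm{O}(E)$ combined with one on $E^\perp$. Its stabiliser is $\mathfrak{so}(E)\oplus\mathfrak{so}(E^\perp)\cong\fso(4)\oplus\fso(1,6)$ with $E=\langle\be_1,\dots,\be_4\rangle$.

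The core step is to invoke the analysis of \cite{FOFS2017} of filtered subdeformations of $\fp$ with decomposable $\varphi$, adapted to the present highly supersymmetric setting. I would redo its key point as follows. By Proposition \ref{prop:LiePair}, $(\varphi,S')$ must be a Lie pair, so $\fh=\gamma^\varphi(\fD)\subset\stab(\varphi)\cap\stab(S')$. The Clifford element $\be_{1234}$ squares to $1$ and commutes with $\stab(\varphi)$, so it splits $S=S_+\oplus S_-$ into eigenspaces of dimension $16$ each. The operators $\beta^\varphi_v$ act on $S_\pm$ in a block form adapted to $E\oplus E^\perp$. The abstract symbol conditions \eqref{eq:symbolII} and \eqref{eq:symbolV} should then force the curvature, computed from \eqref{eq:symbolV} on $S'$, to be the $\Spin$-equivariant tensor of the symmetric space $AdS_7\times S^4$ with radii fixed by $\rho$. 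This uses that $S'$ has dimension $>16$, so it meets both $S_+$ and $S_-$ nontrivially and the integrability condition for Killing spinors pins down $R$ on all of $\Lambda^2V$ by the local Homogeneity Theorem.

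Concretely, I would show that $R(v,w)$ determined by \eqref{eq:symbolV} coincides with $R^{\mathrm{FR}}(v,w)$, the curvature of $AdS_7\times S^4$ for the same $\varphi$. To do this, I would observe that for the maximal symbol $(\varphi,S)$ the Freund–Rubin curvature satisfies \eqref{eq:symbolV} for all $s\in S$. Hence the difference $R-R^{\mathrm{FR}}$ annihilates $S'$ under Clifford action. Since $\dim S'>16$, an element of $\fso(V)$ acting (as half Clifford multiplication by a bivector) with kernel of dimension $>16$ must vanish. This requires checking that nonzero bivectors have Clifford kernel of dimension at most $16$; I expect this follows because a nonzero $\omega\in\Lambda^2V$ acts as a Clifford element whose square is a nonzero scalar plus a $4$-vector term.

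Once $R=R^{\mathrm{FR}}$, the background is locally symmetric with the same curvature and $F$ as $AdS_7\times S^4$, so it is locally isometric to it. Moreover, all $32$ spinors are Killing, and maximality forces $S'=S$.

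The main obstacle is the kernel-dimension claim for bivector Clifford actions, including null and degenerate bivectors, where the square may vanish. I would handle it by a case split over the Lorentzian normal forms of bivectors. For a null bivector $\omega=\be_+\wedge \be_1$, Clifford multiplication is nilpotent with kernel of exactly half dimension, namely $16$, so the strict inequality $\dim S'>16$ is exactly what is needed.
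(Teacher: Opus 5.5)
Your route differs from the paper's. The paper does no curvature computation. For $\varphi=\rho\be_{1234}$ with $\rho\ge 0$, the pair $(\varphi,S)$ is already the abstract symbol of flat space or of $AdS_7\times S^4$ \cite{FOFS2017}. Hence $(\varphi,S')\subset(\varphi,S)$, maximality forces $S'=S$, and Theorem \ref{thm:strong-version-reconstruction} concludes. You instead unpack that maximality clause by hand, showing via \eqref{eq:symbolV} that the curvature at $o$ equals $R^{\mathrm{FR}}$. This is more self-contained and shows concretely why the clause applies. However, three steps are asserted rather than proved. The remarks on the eigenspaces of $\be_{1234}$, and on $S'$ meeting both of them, play no role and can be dropped.

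(i) To get $(R-R^{\mathrm{FR}})(v,w)S'=0$ you need the $(X_v\beta^\varphi)$ terms in \eqref{eq:symbolV} to agree for $S'$ and for $S$. They in fact vanish. By \eqref{definition-gamma-useful}, $\imath_v\imath_w\varphi=0=\imath_v\imath_w\star\varphi$ for $v\in E$, $w\in E^\perp$, since $\star\varphi\in\Lambda^7E^\perp$. So $\operatorname{Im}\gamma^\varphi\subset\Lambda^2E\oplus\Lambda^2E^\perp=\stab_{\fso(V)}(\varphi)$, and any $X=\gamma^\varphi\circ\Sigma$ satisfies $X_v\beta^\varphi=\beta^{X_v\varphi}=0$.

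(ii) Your kernel bound is true in signature $(1,10)$, but the heuristic about $\omega^2$ does not prove it: null rotations square to $0$, and $\be_{12}+\be_{34}$ already has a $16$-dimensional kernel. No normal forms are needed anyway. If $A\in\fso(V)$ kills $S'$, equivariance gives $A\kappa(s,t)=\kappa(As,t)+\kappa(s,At)=0$. So $A$ kills $\kappa(\odot^2S')=V$, hence $A=0$.

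(iii) This is the most serious gap. The equalities $R|_o=R^{\mathrm{FR}}|_o$ and $F|_o=\varphi$ do not by themselves make the background locally symmetric, and you give no reason for it. It does follow from (i). Modulo $\fh$, $X_v$ is the Nomizu map of the Levi-Civita connection. So $\nabla_vF$ and $\nabla_vR$ at $o$ are given by the action of $X_v\in\fso(E)\oplus\fso(E^\perp)$ on $\varphi$ and on $R^{\mathrm{FR}}$, and both are invariant under this algebra. Hence $\nabla F=0$ and $\nabla R=0$ at $o$, and everywhere by local homogeneity. Cartan's theorem then gives a local isometry with $AdS_7\times S^4$ carrying $F$ to the Freund--Rubin form. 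Alternatively, once $R=R^{\mathrm{FR}}$ and $X_v\beta^\varphi=0$ are known, you can invoke Theorem \ref{thm:strong-version-reconstruction} with maximality, which is exactly the paper's proof.
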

\begin{proof}
Let $\varphi=\rho \be_{1234}$ for some $\rho\geq 0$ and $(M,g,F)$ be a highly supersymmetric supergravity  background with $symb(M,g,F)=(\varphi,S')$ for $S'\subset S$. Since the transvection superalgebra of maximally supersymmetric Minkowski spacetime or Freund–Rubin background is the filtered subdeformation with symbol $(\varphi,S)$ \cite{FOFS2017}, then $S'=S$ by maximality. The claim then follows from Theorem \ref{thm:strong-version-reconstruction}.
\end{proof}
The next step in the analysis of supergravity backgrounds is thus given by the fourvectors of rank $6$.
It is well-known that the field equations    \eqref{eq:bosfieldeqs} are invariant under
a homothety that rescales both the metric and the $4$-form and that the associated transvection superalgebras
are not isomorphic as filtered subdeformations. However, they are isomorphic if we simply allow for $g\in\CSpin(V)$ in \eqref{eq:generaliso}; w.l.o.g. we may thus restrict our analysis from now on to the
fourvectors 
$\varphi_{(\rho,\lambda,\mu)}:=\varphi_{(\rho,\lambda,\mu,+)}=\rho\be_{1234}+\lambda\be_{1256}+\mu\be_{3456}$  (the $\rho,\lambda,\mu$ as in Proposition \ref{lem:rank6fourvectors-orthogonal}).

\section{The interplay of spinors and fourvectors}
\label{sec:4}
\subsection{Clifford algebras and admissible bilinear forms}
We set up a spinorial formalism that is adapted to $\varphi=\varphi_{(\rho,\lambda,\mu)}$, for all values of $\rho,\lambda,\mu$ as in Proposition \ref{lem:rank6fourvectors-orthogonal} at the same time. We consider the support $E=\langle \be_1,\ldots,\be_6\rangle$ of $\varphi$, set $F=\langle \be_7,\be_8,\be_9,\be_\natural\rangle$ and $E^\perp=F\oplus\langle\be_0\rangle$. Then 
$V=E\oplus E^\perp=E\oplus F\oplus\langle\be_0\rangle$ as orthogonal direct sums. For our purposes, it will often be convenient to work at the complexified level, we thus write $\mathbb V:=V\otimes\mathbb C$, $\mathbb E:=E\otimes\mathbb C$, etc.

 We recall that the Clifford algebras $\Cl(E)\cong \mathbb H(4)$ and 
$\Cl(F)\cong\mathbb H(2)$, acting irreducibly on the respective spinor modules $\Sigma\cong \mathbb H^4$ and $\Delta\cong \mathbb H^2$. Moreover
$$
\begin{aligned}
\Sigma&\cong \Sigma_+\oplus\Sigma_-\;,\\
\Delta&\cong\Delta_+\oplus\Delta_-\;,
\end{aligned}
$$
as representations for $\mathfrak{so}(E)$ and $\mathfrak{so}(F)$, respectively. It is known that the semispinor modules 
$\Sigma_{\pm}\cong \mathbb C^4$ are isomorphic, whereas $\Delta_\pm\cong \mathbb H$ are not. In particular, the Schur algebras $\mathcal C(\Sigma)$ of $\Sigma$ and $\mathcal C(\Delta)$ of $\Delta$ (the algebra of the endomorphisms that are invariant for the respective orthogonal Lie algebras) are isomorphic to $\mathbb C(2)$ and $\mathbb H\oplus\mathbb H$.

We will freely use the results and notations of \cite{AC1}. In particular, it is shown there that each Schur algebra admits a basis of ``admissible'' endomorphisms (an endomorphism is admissible if it has three invariants attached, called $(\tau,\sigma,\imath)\in\mathbb Z_2^3$: the first says whether the endomorphism commutes or anticommutes with Clifford multiplication by vectors, the third whether it preserves or it exchanges the semispinor modules, the second denotes its symmetry w.r.t. a canonically-defined bilinear form). The following tables recollect the basis elements of the even part of the two Schur algebras (namely the elements with $\tau=+1$). 
The notation of the last two basis elements of $C(\Delta)_{\bar 0}$ does not match the one in \cite{AC1}, in order to avoid some unpleasant overlap with the notations used in this paper.

{\footnotesize
\begin{align*}
 \begin{array}{|c|c|c|c|} \hline
\text{Basis elements of $\mathcal C(\Sigma)_{\bar 0}$}  &\text{Invariants $(\tau,\sigma,\imath)$} & \text{Basis elements of $\mathcal C(\Delta)_{\bar 0}$} & \text{Invariants $(\tau,\sigma,\imath)$}\\ \hline\hline
\Id & (+++) & \Id & (+++)\\ \hline
  I &(+-+) & I & (+-+) \\ \hline
	J & (+--) & 
	J & (+-+) \\ \hline
	K &	 (+--) & 
	K & (+-+)\\ \hline
 \end{array}
 \end{align*}
}
\vskip0.2cm\par\noindent
The volume elements $\vol_E$ and $\vol_F$ in $E$ and $F$ have invariants $(\tau,\sigma,\imath)=(--+)$ and $(\tau,\sigma,\imath)=(-++)$, respectively, and satisfy $\vol_E^2=-1$, $\vol_F^2=+1$. In fact, $\vol_F$ acts semisimply on $\Delta$ with eigenvalues $\pm 1$ on $\Delta_\pm$, and similarly $-\vol_E I$ on $\Sigma_\pm$. Finally, we will often regard $\Sigma$ and $\Delta$ as
$8$-dimensional and $4$-dimensional complex vector spaces thanks to the complex structure $I$. They coincide with the complex (Dirac, in the physics terminology) spinor modules in the respective dimensions.
\begin{Lemma}
\label{lemma:action}
\hfill
\begin{enumerate}
\item $\Cl(E\oplus F)\cong \Cl(E)\otimes_{\mathbb R}\Cl(F)$;
\item $\Cl(E\oplus F)$ acts on $\Sigma\otimes_{\mathbb C}\Delta$ preserving the conjugation $c:=J\otimes J$;
\item $c$ exchanges $\Sigma_+\otimes_{\mathbb{C}}\Delta_\pm$ and $\Sigma_-\otimes_{\mathbb{C}}\Delta_\pm$; 
\item The above action of $\Cl(E\oplus F)$ on $\Sigma\otimes_{\mathbb C}\Delta$ extends uniquely to an action of $\Cl(V)$ in such a way that $\vol_V\cdot s=-s$ for all $s\in S$ and $c$ is still an invariant conjugation.
\end{enumerate}
\end{Lemma}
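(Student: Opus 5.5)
The plan is to build everything from the graded tensor product structure of Clifford algebras and then fix the extra generator $\be_0$ by hand. For (1), recall that for orthogonal sums $E\oplus F$ one has $\Cl(E\oplus F)\cong\Cl(E)\hat\otimes\Cl(F)$ (graded tensor product). Since $\dim E=6$ is even, the volume element $\vol_E$ anticommutes with $E$ and commutes with $\Cl(E)_{\bar 0}$; because $\vol_E^2=-1$, the twisted map $f\mapsto \vol_E\otimes f$ on $F$ squares correctly. Concretely, we send $e\in E$ to $e\otimes 1$ and $f\in F$ to $\vol_E\otimes f$, and check that these elements pairwise anticommute and square to $-\eta$. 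The universal property then gives an algebra map $\Cl(E\oplus F)\to\Cl(E)\otimes_{\mathbb R}\Cl(F)$. Both sides have dimension $2^{10}$ and $\Cl(E\oplus F)\cong\mathbb R(32)$ is simple, so this map is an isomorphism. (Alternatively, $\mathbb H(4)\otimes_{\mathbb R}\mathbb H(2)\cong\mathbb R(32)$ gives the dimension count directly.)

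For (2), $\Cl(E)$ and $\Cl(F)$ act $\mathbb H$-linearly on the right, hence $\mathbb C$-linearly with respect to $I$. Therefore the tensor product $\Cl(E)\otimes_{\mathbb R}\Cl(F)$ acts on $\Sigma\otimes_{\mathbb C}\Delta$, which has complex dimension $32$. The maps $J$ on $\Sigma$ and on $\Delta$ are complex antilinear ($JI=-IJ$) with $J^2=-1$ and, having $\tau=+$, commute with Clifford multiplication. Hence $c=J\otimes J$ is well defined on $\otimes_{\mathbb C}$ (antilinear in both factors), satisfies $c^2=(+1)$, and commutes with the action. It is thus a real structure, and the real form is a $32$-dimensional real module for $\Cl(E\oplus F)$.

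For (3), use the $\imath$ invariants. $J$ on $\Sigma$ has $\imath=-$, so it exchanges $\Sigma_\pm$, while $J$ on $\Delta$ has $\imath=+$, so it preserves $\Delta_\pm$. The claim follows immediately.

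For (4), we must define the action of $\be_0$, which must anticommute with $E\oplus F$, square to $-1$ (since $\eta(\be_0,\be_0)=+1$ in our convention), and commute with $c$. The candidate is a scalar multiple of $\vol_E\vol_F=\vol_{E\oplus F}$ (up to the twisting in (1)), which anticommutes with all ten vectors. One computes $(\vol_E\otimes\vol_F)^2=-1$, and $\vol_{E\oplus F}$ is real, so it commutes with $c$. We therefore set $\be_0\cdot:=\pm\vol_{E\oplus F}$. Then $\vol_V=\be_0\vol_{E\oplus F}$ (up to ordering signs) acts as $\pm(\vol_{E\oplus F})^2=\mp1$, and we choose the sign so that $\vol_V=-1$.

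Uniqueness follows because any admissible $\be_0$ action anticommutes with $\Cl(E\oplus F)$-vectors. So $\be_0\vol_{E\oplus F}$ commutes with the irreducible real $\Cl(E\oplus F)$-module, and it lies in a Schur algebra which, for $\mathbb R(32)$, is $\mathbb R$. Squaring forces it to be $\pm1$, and the condition on $\vol_V$ fixes the sign. The main obstacle is sign bookkeeping: checking that $\vol_{E\oplus F}^2=-1$ with the twisted embedding and the paper's $-2\eta$ convention (if it came out $+1$ one would need $I$, but then $c$-invariance would fail, so the sign must be verified carefully).
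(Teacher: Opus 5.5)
You follow the paper's route: an explicit twisted embedding $E\oplus F\to\Cl(E)\otimes_{\mathbb R}\Cl(F)$ plus a dimension/simplicity count for (1), $\tau=+1$ of $J$ for (2), the invariant $\imath$ for (3), and $\be_0$ acting by the image of $\vol_{E\oplus F}$ for (4). However, the embedding you chose does not satisfy the Clifford relations. With $f\mapsto\vol_E\otimes f$ you get
\[
(\vol_E\otimes f)^2=\vol_E^2\otimes f^2=-\,1\otimes f^2=\eta(f,f)\,,
\]
and not $-\eta(f,f)$. The fact $\vol_E^2=-1$ is exactly what makes the twist on $F$ fail, not what makes it work: a twisting volume element that squares to $-1$ reverses the sign of the metric on the twisted factor. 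So the universal property only gives you a map out of the Clifford algebra of $E\oplus F$ with $\eta|_F$ replaced by $-\eta|_F$. Adding $\be_0$, your construction would then produce a module for the wrong signature, $(5,6)$ instead of $(1,10)$. The fix is to twist the other factor, as the paper does: send $v\in E$ to $v\otimes\vol_F$ and $w\in F$ to $1\otimes w$. This works because $\vol_F^2=+1$ and $\vol_F$ anticommutes with $F$, since $\dim F$ is even.

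With this corrected embedding the rest of your argument goes through and coincides with the paper's. The image of $\vol_{E\oplus F}$ is $\vol_E\otimes\vol_F$. Its square is $-1$, so your sign worry is unfounded: ten generators each squaring to $+1$ give $(-1)^{45}=-1$. It anticommutes with all ten image vectors and commutes with $c$. Hence setting $\be_0:=\vol_E\otimes\vol_F$ gives $\vol_V=(\vol_E\otimes\vol_F)^2=-1$, which is exactly the paper's choice.

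Your Schur-lemma uniqueness argument is correct and adds detail the paper only gestures at via $\Cl(E\oplus F)\cong\mathbb R(32)$ and $\Cl(V)\cong 2\mathbb R(32)$. Note also that the identification $\mathbb H(4)\otimes_{\mathbb R}\mathbb H(2)\cong\mathbb R(32)$ proves (1) only as an abstract isomorphism. It does not supply the concrete action, in particular $v\in E$ acting as $v\otimes\vol_F$, that the paper uses right afterwards, e.g.\ in \eqref{eq:Diraccurrentbroken} and in checking $\tau=-1$ for $\langle-,-\rangle$.
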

\begin{proof}
Consider the linear map $E\oplus F\to \Cl(E)\otimes_{\mathbb R}\Cl(F)$ sending $v\in E$ to $v\otimes \vol_F$ and $v\in F$ to $1\otimes v$. Since $\vol_F^2=+1$, this extends to an algebra epimomorphism from $\Cl(E\oplus F)$ to $\Cl(E)\otimes_{\mathbb R}\Cl(F)$, which is injective by dimensional reasons. This proves (1), and $(2)$ is then clear, since each $J$ is a quaternionic structure on a complex vector space with invariant $\tau=+1$. Claim (3) is immediate from the invariant $\imath$. 

The last claim follows straightforwardly since $\Cl(E\oplus F)\cong \mathbb R(32)$ and $\Cl(V)\cong 2\mathbb R(32)$. (More explicitly, since $\vol_V=\be_0\vol_E\vol_F$ and $\be_0^2=-1$, it is enough to declare the action of $\be_0
$
on $\Sigma\otimes_{\mathbb C}\Delta$ to be equal to $\vol_E\otimes \vol_F$.)
\end{proof}
The $32$-dimensional complex vector space $\mathbb S:=\Sigma\otimes_{\mathbb C}\Delta$ with the above action of $\Cl(V)$ is our model for the complexification of $S$, which is the fixed point set of $c$.  In practice we will work with $\mathbb S=S\otimes_{\mathbb R} \mathbb C$, although we will not always mention this. By \cite{AC1}, there exists a non-degenerate complex bilinear form $f:\Sigma\otimes_{\mathbb C}\Sigma\to\mathbb C$ with invariants $(\tau,\sigma,\imath)=(-+-)$, i.e., Clifford multiplication by a vector is skew-symmetric, $f$ is symmetric, and $\Sigma_\pm$ is isotropic. It is unique up to scalars. A similar result is true for 
$\Delta$ with the invariants $(\tau,\sigma,\imath)=(--+)$, and we also denote this form by $f$, with a little abuse of notation.
\begin{Lemma}
The bilinear form $\langle-,-\rangle:=f\otimes f$ on the $\Cl(V)$-module $\mathbb S$ has invariants $(\tau,\sigma)=(-1,-1)$ and it can be arranged so that its pull-back via $c$ is equal to its conjugate. More precisely, each $f$ can be arranged so that its pull-back via $J$ is equal to its conjugate.
\end{Lemma}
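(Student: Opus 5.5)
The plan is to verify the two claims separately: first the invariants of $\langle-,-\rangle=f\otimes f$, using the explicit Clifford action from the proof of Lemma \ref{lemma:action}, and then the reality property, using uniqueness of admissible forms \cite{AC1} plus a rescaling argument.

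\textbf{Invariants.} The symmetry invariant is immediate. The form $f$ on $\Sigma$ is symmetric ($\sigma=+$) and the form on $\Delta$ is skew ($\sigma=-$), so $f\otimes f$ is skew, i.e. $\sigma=-1$.

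For $\tau$ I check skew-adjointness of each generator of $V$, recalling that in the model of Lemma \ref{lemma:action}:
\begin{itemize}
\item $v\in E$ acts as $v\otimes\vol_F$;
\item $v\in F$ acts as $1\otimes v$;
\item $\be_0$ acts as $\vol_E\otimes\vol_F$.
\end{itemize}
The basic tool is how a product of $k$ vectors behaves under a form with $\tau=-1$. Moving the $k$ vectors across the form produces a sign $(-1)^k$ and reverses their order. Reordering back costs a further sign $(-1)^{k(k-1)/2}$. Hence a product of $k$ orthonormal vectors has adjoint sign $(-1)^{k}(-1)^{k(k-1)/2}$.

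Applying this count:
\begin{itemize}
\item For $\vol_F$ ($k=4$) the sign is $(+1)(+1)=+1$, so $\vol_F$ is $f$-symmetric on $\Delta$.
\item For $\vol_E$ ($k=6$) the sign is $(+1)(-1)=-1$, so $\vol_E$ is $f$-skew on $\Sigma$.
\end{itemize}
Now combine the factors. For $v\in E$, $v$ is skew on $\Sigma$ and $\vol_F$ is symmetric on $\Delta$, so $v\otimes\vol_F$ is skew. For $v\in F$, $1\otimes v$ is skew. For $\be_0$, $\vol_E\otimes\vol_F$ is again skew times symmetric, hence skew. Therefore $\tau=-1$.

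\textbf{Reality.} Let $J$ be the antilinear structure on $\Sigma$ (resp. $\Delta$). It commutes with Clifford multiplication ($\tau=+1$) and, being quaternionic, satisfies $J^2=-1$. Define
\[
g(x,y):=\overline{f(Jx,Jy)}\;.
\]
Two conjugations cancel, so $g$ is complex bilinear. It also has the same invariants $(\tau,\sigma,\imath)$ as $f$: $J$ commutes with vectors, and it preserves the semispinor splitting on $\Delta$ (with $\imath=+$). On $\Sigma$, $J$ has $\imath=-$ and so exchanges $\Sigma_\pm$, but isotropy of $\Sigma_+$ and $\Sigma_-$ is still preserved. By uniqueness up to scalars \cite{AC1}, $g=\lambda f$ for some $\lambda\in\mathbb C$.

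Next I pin down $|\lambda|$. Apply the relation $f(Jx,Jy)=\bar\lambda\,\overline{f(x,y)}$ twice and use $J^2=-1$ (the two signs cancel):
\[
f(x,y)=f(J^2x,J^2y)=|\lambda|^2 f(x,y)\;,
\]
so $|\lambda|=1$.

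Then I normalize. Replacing $f$ by $\nu f$ changes $\lambda$ into $\lambda\,\bar\nu/\nu$. Choosing $\nu$ with $\nu/\bar\nu=\lambda$ gives $\lambda=1$, i.e. $J^*f=\bar f$. Doing this on both factors and taking the tensor product yields
\[
c^*\langle-,-\rangle=\overline{\langle-,-\rangle}\qquad\text{for } c=J\otimes J\;.
\]

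\textbf{Main obstacle.} The only delicate point is the sign bookkeeping for $\be_0$ and $\vol_F$ under the chosen conventions $\vol_E^2=-1$, $\vol_F^2=+1$. I would double-check it directly on basis vectors if the general count becomes ambiguous.
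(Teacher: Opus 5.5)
Your proof is correct and follows essentially the paper's argument. The invariants are read off from the action of $v\in E$, $v\in F$ and $\be_0$ as $v\otimes\vol_F$, $1\otimes v$ and $\vol_E\otimes\vol_F$ from Lemma~\ref{lemma:action}; the paper only spells out $v\in E$, via the $f$-symmetry of $\vol_F$, whereas you also check $F$ and $\be_0$. The reality statement comes, in both, from uniqueness of $f$ up to scalars, applying $J$ twice with $J^2=-1$, and rescaling $f$.

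The one point where you differ is in your favour. Pulling back along the antilinear $J$ does not conjugate a scalar factor, so the double application gives $|\lambda|^2=1$, not the $\overline{c}^{\,2}=1$ of the paper's displayed chain. Hence a general phase $\nu$ with $\nu/\overline{\nu}=\lambda$, as you take, is needed rather than only $\pm i$, though the conclusion is unaffected.
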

\begin{proof}
The claims on invariants are all immediate, except perhaps $\tau=-1$ for any $v\in E$ (this follows since $v$ acts as $v\otimes \vol_F$ and $f(t_1,\vol_F t_2)=f(\vol_F t_1,t_2)$ for all $t_1,t_2\in \Delta$). 
A multiple of $\langle-,-\rangle$ coincides necessarily with the complexification of the unique skew-symmetric invariant bilinear form on $S$, so we may rescale it appropriately to ensure that 
$c^*\langle-,-\rangle=\overline{\langle-,-\rangle}$.
It follows that $J^*f=c\overline{f}$ for some non-zero $c\in\mathbb C$, so that
\begin{equation*}
\begin{aligned}
f&=J^*(J^*f)=J^*(c\overline{f})=\overline{c}J^*\overline{f}=\overline{c}\overline{J^*f}=\overline{c}^2f	\\ &\Longrightarrow
c=\pm 1\;.
\end{aligned}
\end{equation*}
If $c=+1$ we are done, otherwise it is sufficient to multiply $f$ with either $+i$ or $-i$.
\end{proof}
Let $f$ with a subscript be the composition of $f$ with an endomorphism in the second entry.
\begin{Corollary}
\label{cor:hermitianforms}
The restrictions of the forms $if_J$ to $\Sigma_\pm$ and $f_J$ to $\Delta_\pm$ are definite Hermitian forms.
\end{Corollary}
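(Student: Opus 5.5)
The Corollary follows from the preceding Lemma together with the invariants $\sigma$ and $\imath$ of $J$ and of $f$ recorded in the tables. First I check that the relevant forms are sesquilinear. Since $J$ is a quaternionic structure, it is $\mathbb C$-antilinear and commutes with $I$. Hence $f_J(t_1,t_2)=f(t_1,Jt_2)$ is complex linear in $t_1$ and antilinear in $t_2$, and the same holds for $if_J$.

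Next I show the restrictions to the semispinor modules are nonzero and Hermitian.

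On $\Delta$, the invariant $\imath=+$ of $f$ means $f$ pairs $\Delta_+$ with itself and $\Delta_-$ with itself. The invariant $\imath=+$ of $J$ means $J$ preserves $\Delta_\pm$. So $f_J$ restricts nondegenerately to each $\Delta_\pm$.

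On $\Sigma$, the invariant $\imath=-$ of $f$ means $f$ pairs $\Sigma_+$ with $\Sigma_-$, and $J$ exchanges $\Sigma_\pm$. So $f_J$ again restricts nondegenerately to each $\Sigma_\pm$.

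For the Hermitian property I use the normalization from the Lemma, namely $J^*f=\overline f$, i.e. $f(Jt_1,Jt_2)=\overline{f(t_1,t_2)}$. Then
\[
\overline{f_J(t_2,t_1)}=\overline{f(t_2,Jt_1)}=f(Jt_2,J^2t_1)=-f(Jt_2,t_1).
\]
The symmetry of $f$ is $\sigma_f=+$ on $\Sigma$ and $\sigma_f=-$ on $\Delta$. The symmetry of $J$ is $\sigma_J=-$ in both cases, so $f(Jt_2,t_1)=-f(t_2,Jt_1)$ up to the factor $\sigma_f$. Combining these signs:
\begin{itemize}
\item on $\Delta$ the two minus signs cancel, giving $\overline{f_J(t_2,t_1)}=f_J(t_1,t_2)$, so $f_J$ is Hermitian;
\item on $\Sigma$ one sign survives, so $f_J$ is skew-Hermitian and $if_J$ is Hermitian.
\end{itemize}

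Finally I prove definiteness. Consider the compact group $\mathrm{Spin}(E)\cong\mathrm{SU}(4)$, respectively $\mathrm{Spin}(F)\cong\mathrm{SU}(2)\times\mathrm{SU}(2)$. It acts irreducibly on each $\Sigma_\pm\cong\mathbb C^4$, respectively $\Delta_\pm$, preserving $f$ and commuting with $J$ (since $\tau=+$). So each restricted Hermitian form is invariant under this compact group acting irreducibly. An averaged positive definite Hermitian form $h$ exists, and by Schur the invariant form equals $h(\cdot,A\,\cdot)$ with $A$ an $h$-self-adjoint intertwiner. Hence $A$ is a real scalar, and it is nonzero by nondegeneracy. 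The form is therefore definite.

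The only delicate point is keeping the sign conventions straight, specifically $J^2=-1$ and the meaning of $\sigma$. An error there would only swap "Hermitian" with "skew-Hermitian", which the factor $i$ absorbs. Definiteness holds either way by the irreducibility argument.
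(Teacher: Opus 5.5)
Your strategy is the paper's. You get Hermitian symmetry from the normalization $J^*f=\overline{f}$, the relation $J^2=-1$ and the symmetry type of $f$. You then get definiteness from invariance under the compact groups $\mathrm{Spin}(E)\cong\mathrm{SU}(4)$ and $\mathrm{Spin}(F)$, which act irreducibly on the semispinor modules. Your nondegeneracy check via the invariants $\imath$ of $f$ and $J$, and your Schur-lemma argument, correctly supply details the paper leaves implicit.

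However, the step where you invoke $\sigma_J=-$ is not valid. Here $f$ is $\mathbb C$-bilinear and $J$ is $\mathbb C$-antilinear. So $f(Jx,y)$ is antilinear in $x$ while $f(x,Jy)$ is linear in $x$, and no identity $f(Jx,y)=\pm f(x,Jy)$ (with or without a factor $\sigma_f$) can hold. The $\sigma$ recorded in the tables is a symmetry with respect to the canonical real admissible form, not with respect to $f$. For instance, $I$ is listed with $\sigma=-$ although $f(Ix,y)=f(x,Iy)$. The only meaningful twisted relation is $f(Jx,y)=-\overline{f(x,Jy)}$, and that is exactly what you already extracted from $J^*f=\overline{f}$ and $J^2=-1$, so using it again double counts. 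Your bookkeeping is also inconsistent: multiplying the three signs from $J^2$, $\sigma_J$ and $\sigma_f$ yields $\overline{f_J(t_2,t_1)}=\sigma_f\,f_J(t_1,t_2)$. That is the opposite of what you conclude, both on $\Sigma$ and on $\Delta$.

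The fix is the paper's one-line step, which uses only the symmetry of $f$: $f(Jt_2,t_1)=\sigma_f\,f(t_1,Jt_2)=\sigma_f\,f_J(t_1,t_2)$, hence $\overline{f_J(t_2,t_1)}=-\sigma_f\,f_J(t_1,t_2)$. With $\sigma_f=+1$ on $\Sigma$ and $\sigma_f=-1$ on $\Delta$, this gives exactly that $if_J$ is Hermitian on $\Sigma_\pm$ and $f_J$ is Hermitian on $\Delta_\pm$.

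Your closing remark should also go. The statement specifies which of $f_J$ and $if_J$ is Hermitian in each case, so a sign error would not be ``absorbed by the factor $i$''; it would contradict the claim. The signs genuinely have to come out as above, and with the corrected step they do.
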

\begin{proof}
For all $\upsilon_1,\upsilon_2\in\Sigma_\pm$, we have
\begin{equation*}
\begin{aligned}
\overline{if_J(\upsilon_1,\upsilon_2)}&=-i\overline{f(\upsilon_1,J\upsilon_2)}=if(J\upsilon_1,\upsilon_2)
=if(\upsilon_2,J\upsilon_1)\\
&=if_J(\upsilon_2,\upsilon_1)\;,
\end{aligned}
\end{equation*}
so $if_J$ is pseudo-Hermitian on $\Sigma_\pm$. It is invariant for the action of $\mathrm{Spin}(E)\cong \mathrm{SU}(4)$, therefore it is definite separately on each $\Sigma_\pm$. The proof for $f_J$ on $\Delta_\pm$ is similar, we omit it.
\end{proof}
\begin{Remark}
Using an explicit realization of the Clifford algebras, one can check that each form of Corollary \ref{cor:hermitianforms} has overall split signature, but we won't need this fact.
For concreteness, we will assume they are positive definite restricted to $\Sigma_+$ and to $\Delta_+$, respectively.
\end{Remark}
\subsection{The Dirac current}

We here provide a qualitative analysis of the (complexified) Dirac current $\kappa:\odot^2\mathbb S\to \mathbb V$, under the decomposition 
$\odot^2 \mathbb S=\odot^2\big(\Sigma\otimes_{\mathbb C}\Delta\big)=\odot^2\Sigma\,\odot^2\Delta\;\oplus\;\Lambda^2\Sigma\;\Lambda^2\Delta$. Using the defining equation 
of $\kappa$ and the action of $\Cl(V)$ on $\mathbb S$ constructed in Lemma \ref{lemma:action}, we see that 
\begin{equation}
\label{eq:Diraccurrentbroken}
\begin{aligned}
\kappa= \underbrace{\kappa_{\mathbb E}\otimes f_{\vol_F}}_{\text{values in}\;\mathbb E}+\underbrace{f\otimes\kappa_{\mathbb F}}_{\text{values in}\;\mathbb F}+\underbrace{\big(f_{\vol_E}\otimes f_{\vol_F}\big)\be_0}_{\text{values in}\;\mathbb C\be_0}\;,
\end{aligned}
\end{equation}
where $\kappa_{\mathbb E}$ (resp. $\kappa_{\mathbb F}$) denotes the Dirac current operator on $\Sigma$  (resp. $\Delta$) constructed using $f$.
From a purely representation-theoretic point of view, $\kappa_{\mathbb E}:\Lambda^2\Sigma\to\mathbb E$ induces an isomorphism $\Lambda^2\Sigma_{\pm}\cong\mathbb E$ while $\kappa_{\mathbb F}:\odot^2\Delta\to\mathbb F$ induces an isomorphism $\Delta_+\odot\Delta_-\cong\mathbb F$. 
It is straightforward to see that the invariants of $f_{\vol_E}$ and $f_{\vol_F}$ are $(\tau,\sigma,\imath)=(+--)$ and $(\tau,\sigma,\imath)=(+-+)$, respectively, in particular they  are both skew-symmetric.

\begin{Corollary}
$(\Sigma_+\otimes\Delta_+)\odot(\Sigma_+\otimes\Delta_-)$ and its conjugate $(\Sigma_-\otimes\Delta_+)\odot(\Sigma_-\otimes\Delta_-)$ are included in the kernel of the Direc current $\kappa$.
\end{Corollary}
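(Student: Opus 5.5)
The plan is to apply the decomposition \eqref{eq:Diraccurrentbroken} of $\kappa$ to each summand and check that all three contributions vanish on the given symmetric product, using only the invariant $\imath$ of the bilinear forms involved. By polarization, it suffices to evaluate $\kappa$ on $s_1\odot s_2$ with $s_1=\sigma_1\otimes\delta_1$, $\sigma_1\in\Sigma_+$, $\delta_1\in\Delta_+$, and $s_2=\sigma_2\otimes\delta_2$, $\sigma_2\in\Sigma_+$, $\delta_2\in\Delta_-$. Each of the three terms is a tensor product of a form on $\Sigma$ with a form on $\Delta$, evaluated on $(\sigma_1,\sigma_2)$ and $(\delta_1,\delta_2)$ respectively. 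For the symmetrization we also need the swapped pairs $(\sigma_2,\sigma_1)$ and $(\delta_2,\delta_1)$; these have the same type as the unswapped ones, so the vanishing arguments below cover them too.

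We treat the three terms in turn.

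\emph{The $\mathbb E$-term $\kappa_{\mathbb E}\otimes f_{\vol_F}$.} The form $f_{\vol_F}$ has invariants $(+-+)$, so $\imath=+$. This means it pairs $\Delta_+$ with $\Delta_+$ and $\Delta_-$ with $\Delta_-$; equivalently, $\vol_F$ preserves $\Delta_\pm$ while $f$ on $\Delta$ (with $\imath=+$) pairs each $\Delta_\pm$ with itself. Hence $f_{\vol_F}(\delta_1,\delta_2)=0$ for $\delta_1\in\Delta_+$, $\delta_2\in\Delta_-$.

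\emph{The $\mathbb F$-term $f\otimes\kappa_{\mathbb F}$.} The form $f$ on $\Sigma$ has $\imath=-$, so each $\Sigma_\pm$ is isotropic, and therefore $f(\sigma_1,\sigma_2)=0$ since both lie in $\Sigma_+$.

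\emph{The $\be_0$-term $f_{\vol_E}\otimes f_{\vol_F}$.} This vanishes already because of the factor $f_{\vol_F}(\delta_1,\delta_2)=0$ shown above. (Alternatively, $f_{\vol_E}$ has $\imath=-$ and so also kills $\Sigma_+\times\Sigma_+$.)

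The conjugate statement follows by applying the conjugation $c$ from Lemma \ref{lemma:action}. By part (3) of that lemma, $c$ exchanges $\Sigma_+\otimes\Delta_\pm$ with $\Sigma_-\otimes\Delta_\pm$. Since $c$ commutes with the Clifford action and $\langle-,-\rangle$ satisfies $c^*\langle-,-\rangle=\overline{\langle-,-\rangle}$, we get $\kappa(cs,cs)=\overline{\kappa(s,s)}$, and so $c$ maps $\ker\kappa$ to itself. One can also repeat the same invariant argument with $\Sigma_-$ in place of $\Sigma_+$.

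The only point needing care is reading the $\imath$-invariant correctly: for a bilinear form, $\imath=+$ means it pairs each semispinor module with itself and $\imath=-$ means it pairs $\Sigma_+$ with $\Sigma_-$. This is the convention consistent with ``$\Sigma_\pm$ isotropic'' for $f$ on $\Sigma$ with $\imath=-$, and we fix it before running the three checks.
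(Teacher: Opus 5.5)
Your proof is correct and follows the paper's argument, written out in more detail. The paper disposes of the statement in one line by citing \eqref{eq:Diraccurrentbroken} together with $\imath=+1$ for $f_{\vol_F}$, which kills the $\mathbb E$- and $\be_0$-terms, and $\imath=-1$ for $f$ on $\Sigma$, which kills the $\mathbb F$-term; your handling of the conjugate piece, either via $\kappa(cs,ct)=\overline{\kappa(s,t)}$ or by rerunning the same check with $\Sigma_-$, is also fine.
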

\begin{proof}
It follows from \eqref{eq:Diraccurrentbroken} and the invariants $\imath=+1$ of $f_{\vol_F}$ and $\imath=-1$ of $f$ on $\Sigma$.
\end{proof}
\subsection{Curvature in superspace}
We study the behaviour of map $\gamma^\varphi:\odot^2 S\to\fso(V)\cong\Lambda^2 V$ associated to the fourvectors $\varphi=\varphi_{(\rho,\lambda,\mu)}\in\Lambda^4 V$, in particular its dependence on varying the parameters $\rho\geq \lambda\geq \mu\geq 0$ (only $\mu$ can be zero). This is a crucial step in view of the proofs of the main Theorem \ref{mainthm:1} and Theorem \ref{mainthm:2} in the forthcoming \S\ref{sec:5} and \S\ref{sec:6}.

Since 
$\odot^2 S\cong\Lambda^1 V\oplus \Lambda^2 V\oplus\Lambda^5 V$ and $\Ker\gamma^\varphi\supset\Lambda^1 V$, we get $\operatorname{Im}(\gamma^\varphi)=\gamma^\varphi(\Lambda^2V)+\gamma^\varphi(\Lambda^5V)$.
We substitute the identity $\star\varphi=\be_0\wedge \vol_F\wedge\bigstar\varphi$ in \eqref{definition-gamma-useful}, where $\bigstar$ is the Hodge star operator on $E$ and $\bigstar\varphi=\mu\be_{12}+\lambda\be_{34}+\rho\be_{56}$, and by straightforward computations we arrive at:
\begin{Proposition}
\label{prop:imgamma}
The following diagram subsumes the action of $\gamma^\varphi$ on the components of the spaces $\Lambda^2V$ and $\Lambda^5V$ w.r.t. the decomposition $V=E\oplus F\oplus\langle\be_0\rangle$.
\vskip0.15cm\par\noindent
$$
{\small
\xymatrix@C=1pc@R=0pc{
\Lambda^0E\wedge\Lambda^2F& & & & & & & & & & &\\
\Lambda^1E\wedge\Lambda^1F& & & & & & & & & & &\\
\Lambda^2E\wedge\Lambda^0F\ar@/_/[rrrrrrrrrrr]& & & & & & & & & & &\Lambda^2E\\
\mathbb R\be_0\wedge \Lambda^1E& & & & & & & & & & &\\
\mathbb R\be_0\wedge \Lambda^1F& & & & & & & & & & &\\
\Lambda^1 E\wedge\Lambda^4F \ar@/_/[rrrrrrrrrrr]& & & & & & & & & & & \mathbb R\be_0\wedge E\\
\Lambda^2 E\wedge\Lambda^3F\ar@/_/[rrrrrrrrrrr]& & & & & & & & & & & \mathbb R\be_0\wedge F\\
\Lambda^3 E\wedge\Lambda^2F& & & & & & & & & & & \\
\Lambda^4 E\wedge\Lambda^1F& & & & & & & & & & & \\
\Lambda^5 E\wedge\Lambda^0F& & & & & & & & & & &\\
\mathbb R\be_0\wedge\Lambda^0E\wedge\Lambda^4F\ar@(r,l)[uuuuuuuurrrrrrrrrrr] & & & & & & & & & & &\\
\mathbb R\be_0\wedge\Lambda^1E\wedge\Lambda^3F\ar@/_/[rrrrrrrrrrr]& & & & & & & & & & &E\wedge F\\
\mathbb R\be_0\wedge\Lambda^2E\wedge\Lambda^2F\ar@/_/[rrrrrrrrrrr] & & & & & & & & & & &\Lambda^2F \\
\mathbb R\be_0\wedge\Lambda^3E\wedge\Lambda^1F & & & & & & & & & & & \\
\mathbb R\be_0\wedge\Lambda^4E\wedge\Lambda^0F & & & & & & & & & & & \\
}
}
$$
\vskip0.15cm\par\noindent
More precisely:
\begin{enumerate}
\item The first arrow is a bijection, unless $\mu=0$, in which case
\begin{equation*}
\begin{aligned}
\!\!\!\!\!\operatorname{Im}(\gamma^\varphi|_{\Lambda^2E\wedge\Lambda^0F})&=\langle \be_{13},\be_{14},\be_{15},\be_{16},\be_{23},\be_{24},\be_{25},\be_{26},\be_{12},
\rho\be_{34}+\lambda\be_{56}
\rangle\;,\;\;\;\;\;\;\;\;\;\;\;\;\;\;\;\;\;\;\;\;\;\;\;\;\;\;\;\;\;\;\\
\operatorname{Ker}(\gamma^\varphi|_{\Lambda^2E\wedge\Lambda^0F})&=\langle\be_{35}, \be_{36},\be_{45},\be_{46},\lambda\be_{34}-\rho\be_{56}\rangle
\;;
\end{aligned}
\end{equation*}
\item The second arrow is a bijection, unless $\mu=0$, in which case 
\begin{equation*}
\begin{aligned}
\operatorname{Im}(\gamma^\varphi|_{\Lambda^1E\wedge\Lambda^4F})&=\mathbb R\be_0\wedge\langle\be_3,\ldots,\be_6\rangle\;,\\
\operatorname{Ker}(\gamma^\varphi|_{\Lambda^1E\wedge\Lambda^4F})&=\langle\be_1,\be_2\rangle\wedge\vol_F\;;
\end{aligned}
\end{equation*}
\item The third arrow is surjective with kernel $(\bigstar\varphi)^\perp\wedge\Lambda^3F$;
\item The fourth arrow is injective with image generated by $\bigstar\varphi$; 
\item The fifth arrow is a bijection, unless $\mu=0$, in which case 
\begin{equation*}
\begin{aligned}
\operatorname{Im}(\gamma^\varphi|_{\mathbb R\be_0\wedge\Lambda^1E\wedge\Lambda^3F})&=\langle \be_3,\ldots,\be_6\rangle\wedge F\;,\\
\operatorname{Ker}(\gamma^\varphi|_{\mathbb R\be_0\wedge\Lambda^1E\wedge\Lambda^3F})&=\mathbb R\be_0\wedge\langle\be_1,\be_2\rangle\wedge\Lambda^3F\;;
\end{aligned}
\end{equation*}
\item The sixth arrow is surjective with kernel $\mathbb R \be_0\wedge(\bigstar\varphi)^\perp\wedge\Lambda^2F$; 
\end{enumerate} 
where $(\bigstar\varphi)^\perp$ is the orthogonal to $\bigstar\varphi$ in $\Lambda^2E$.
\end{Proposition}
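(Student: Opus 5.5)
The plan is to compute $\gamma^\varphi$ directly from formula \eqref{definition-gamma-useful} of Lemma \ref{lem:mitoccaaggiungere}:
\[
\eta(\gamma^\varphi(\omega)v,w)=\tfrac13\eta(\imath_v\imath_w\varphi,\omega^{(2)})+\tfrac16\eta(\imath_v\imath_w\star\varphi,\omega^{(5)}).
\]
Equivalently, $\gamma^\varphi(\omega)$ is the bivector obtained by contracting $\omega^{(2)}$ fully into $\varphi$, and $\omega^{(5)}$ fully into $\star\varphi$, up to the constants $\tfrac13,\tfrac16$ and musical isomorphisms. Since $\varphi\in\Lambda^4E$ and $\star\varphi=\be_0\wedge\vol_F\wedge\bigstar\varphi$ with $\bigstar\varphi=\mu\be_{12}+\lambda\be_{34}+\rho\be_{56}$, a component of $\omega^{(2)}$ or $\omega^{(5)}$ gives a nonzero contribution only if all its indices lie in the support of $\varphi$, respectively of $\star\varphi$. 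This already yields the diagram.

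For $\Lambda^2V$, only $\Lambda^2E\wedge\Lambda^0F$ survives, and it maps to $\Lambda^2E$. For $\Lambda^5V$, the $5$-vector must be contained, after contracting with a $2$-vector $v\wedge w$ on the other side, in $\be_0\wedge\vol_F\wedge\Lambda^2E$, which is $7$-dimensional. I will enumerate which index sets of $\omega^{(5)}$ can be completed by two indices to a subset of $\{0,7,8,9,\natural\}\cup E$ and pair with $\bigstar\varphi$. The term $\Lambda^1E\wedge\Lambda^4F$ pairs with the missing $\be_0$ and one $E$-index, giving $\be_0\wedge E$. The term $\Lambda^2E\wedge\Lambda^3F$ gives $\be_0\wedge F$ via the $\bigstar\varphi$-component. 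The term $\be_0\wedge\vol_F$ gives $\Lambda^2E$, namely the multiple $\bigstar\varphi$. The term $\be_0\wedge\Lambda^1E\wedge\Lambda^3F$ gives $E\wedge F$. The term $\be_0\wedge\Lambda^2E\wedge\Lambda^2F$ gives $\Lambda^2F$. All other components contain either an index pair that cannot appear or too many $E$-indices, so they vanish.

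Next I make each arrow explicit. On $\Lambda^2E$, the map $\be_{ij}\mapsto\imath_{\be_{ij}}\varphi$ is, up to sign, the contraction against $\varphi$, which is the map $\Lambda^2E\to\Lambda^2E$, $\beta\mapsto\bigstar(\varphi\wedge$-dual$)$. Concretely, $\be_{13}\mapsto\pm\rho\be_{24}$ (through $\be_{1234}$) and $\pm\lambda$ (through $\be_{1256}$, which does not contain $3$, so this term is absent). I will write the $15\times15$ action basis by basis. Each $\be_{ij}$ with $i,j$ in different pairs of $\{12\},\{34\},\{56\}$ is sent to a multiple of a single $\be_{kl}$, with coefficient the one of the unique monomial containing $i,j$. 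The diagonal block $\langle\be_{12},\be_{34},\be_{56}\rangle$ is sent by the matrix with zero diagonal and entries $\rho,\lambda,\mu$ off the diagonal, whose determinant is $2\rho\lambda\mu$. Hence the map is bijective when $\mu\neq0$. When $\mu=0$, the monomials $\be_{35},\be_{36},\be_{45},\be_{46}$ (which need $\be_{3456}$) are killed, and the diagonal block has kernel $\lambda\be_{34}-\rho\be_{56}$. This gives (1).

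The remaining arrows are analogous but simpler. For (2), $\be_i\wedge\vol_F$ contracted against $\star\varphi$ gives $\be_0\wedge\imath_{\be_i}\bigstar\varphi$, which vanishes exactly for $i\in\{1,2\}$ when $\mu=0$. Item (5) is the same computation tensored with $\Lambda^3F\cong F$. Items (3), (4) and (6) are pairings of the $\Lambda^2E$-part with $\bigstar\varphi$. This gives surjectivity onto $\be_0\wedge F$ (respectively $\Lambda^2F$, via $\Lambda^3F\cong F$ and $\Lambda^2F\cong\Lambda^2F$ under the Hodge star on $F$), with kernel $(\bigstar\varphi)^\perp$ tensored accordingly. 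It also gives injectivity of $\be_0\wedge\vol_F\mapsto\bigstar\varphi$.

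I expect the main obstacle to be purely bookkeeping: tracking signs and the identification $\imath_v\imath_w\star\varphi$ correctly, so as to be sure that no unlisted component, such as $\Lambda^3E\wedge\Lambda^2F$, contributes. I will handle this by checking, for each component, whether its index set together with some pair $\{v,w\}$ can equal a monomial of $\star\varphi$ or $\varphi$. Such a pair exists only for the listed components.
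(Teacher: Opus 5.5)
Your proposal is correct and follows the paper's own route: the paper likewise substitutes $\star\varphi=\be_0\wedge\vol_F\wedge(\mu\be_{12}+\lambda\be_{34}+\rho\be_{56})$ into \eqref{definition-gamma-useful} and does the index bookkeeping. Your monomial-by-monomial test is the right criterion for which components survive; the support of $\star\varphi$ alone would not do, since it is essentially all of $V$. The six claims then follow as you describe, from three facts: the permutation structure of the cross-pair block, the symmetric zero-diagonal $3\times 3$ block with determinant $2\rho\lambda\mu$, and the contraction of the $\Lambda^2E$-parts against $\mu\be_{12}+\lambda\be_{34}+\rho\be_{56}$.
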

From Proposition \ref{lem:rank6fourvectors-orthogonal} and Proposition \ref{prop:imgamma}, we immediately get the following.
\begin{Corollary}
The Lie algebra $\operatorname{Im}(\gamma^\varphi)\cap \stab_{\mathfrak{so}(V)}(\varphi)$ coincides always with $\stab_{\mathfrak{so}(V)}(\varphi)$, unless $\mu=0$, $\lambda=\rho$, in which case it coincides with $\mathfrak{so}(E^\perp)\oplus\langle \be_{12},\be_{34}+\be_{56}\rangle$.
\end{Corollary}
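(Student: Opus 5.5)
The plan is to compute $\operatorname{Im}(\gamma^\varphi)\cap\stab_{\fso(V)}(\varphi)$ summand by summand, using the decomposition \eqref{eq:stabfull}:
\[
\stab_{\fso(V)}(\varphi)=\fso(E^\perp)\oplus\stab_{\fso(E)}(\varphi),\qquad \fso(E^\perp)=\Lambda^2F\oplus(\mathbb R\be_0\wedge F).
\]
The target of $\gamma^\varphi$ splits, as in the diagram of Proposition \ref{prop:imgamma}, into the pieces
\[
\Lambda^2E,\quad \mathbb R\be_0\wedge E,\quad \mathbb R\be_0\wedge F,\quad E\wedge F,\quad \Lambda^2F .
\]
Each arrow of the diagram lands in exactly one of these pieces, and the pieces are linearly independent. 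Hence $\operatorname{Im}(\gamma^\varphi)$ is the direct sum of the images of the individual arrows. Since $\stab_{\fso(V)}(\varphi)$ is itself a sum of sub-pieces, namely $\stab_{\fso(E)}(\varphi)\subset\Lambda^2E$, together with $\Lambda^2F$ and $\mathbb R\be_0\wedge F$, the intersection can be computed piece by piece.

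First I treat the $\fso(E^\perp)$ part. By item (3) of Proposition \ref{prop:imgamma}, the third arrow surjects onto $\mathbb R\be_0\wedge F$. By item (6), the sixth arrow surjects onto $\Lambda^2F$. Both statements hold for all parameter values, including $\mu=0$. Therefore $\fso(E^\perp)\subset\operatorname{Im}(\gamma^\varphi)$ always.

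Next I treat the $\Lambda^2E$ part. The only arrow into $\Lambda^2E$ is the first one, so
\[
\operatorname{Im}(\gamma^\varphi)\cap\Lambda^2E=\operatorname{Im}(\gamma^\varphi|_{\Lambda^2E\wedge\Lambda^0F}).
\]
If $\mu>0$, this map is a bijection by item (1), so the whole of $\stab_{\fso(E)}(\varphi)$ lies in the image, and the intersection equals $\stab_{\fso(V)}(\varphi)$.

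If $\mu=0$, the image is the span $W=\langle\be_{1j}\ (j\ge 3),\ \be_{2j}\ (j\ge 3),\ \be_{12},\ \rho\be_{34}+\lambda\be_{56}\rangle$. I compare $W$ with the generators in the table of Proposition \ref{lem:rank6fourvectors-orthogonal}. Since $\mu=0$ and $\lambda>0$, only two rows can occur.
\begin{itemize}
\item Case $\rho>\lambda>\mu$. The stabilizer is $\langle\be_{12},\be_{34},\be_{56}\rangle$. Its intersection with $W$ is $\langle\be_{12},\rho\be_{34}+\lambda\be_{56}\rangle$, which is a proper subspace because $\be_{34}\notin W$.
\item Case $\rho=\lambda>\mu=0$. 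The stabilizer is generated by $\be_{35}+\be_{46}$, $\be_{36}-\be_{45}$, $\be_{12}$, $\be_{34}$, $\be_{56}$. The elements $\be_{35}+\be_{46}$ and $\be_{36}-\be_{45}$ lie in the span of $\be_{35},\be_{36},\be_{45},\be_{46}$, which $W$ meets trivially, so they drop out. Among the remaining $\be_{12},\be_{34},\be_{56}$, the intersection is $\langle\be_{12},\be_{34}+\be_{56}\rangle$.
\end{itemize}

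I expect the main obstacle to be reconciling this with the statement. The corollary claims that equality holds whenever $\mu>0$, and that only $\mu=0$, $\lambda=\rho$ is exceptional. The first case above, $\mu=0$ with $\rho>\lambda$, also produces a proper subspace, so it needs to be checked against the statement. I would first verify this case carefully and likewise reconfirm that the first arrow's image is exactly $W$ when $\mu=0$. To make these checks, I would recompute with \eqref{definition-gamma-useful} using $\omega^{(2)}=\be_{ij}$, which gives $\gamma^\varphi(\be_{ij})\propto\imath_{\be_i}\imath_{\be_j}\varphi$ on $E$. If the generic-$\mu=0$ case is indeed a proper subspace, I would state the corollary's exceptional locus as all of $\mu=0$ and list the answer for each of the two subcases separately.
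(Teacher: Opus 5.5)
\textbf{There is a genuine gap, and it is what produces your suspected counterexample.} Your claim that ``the only arrow into $\Lambda^2E$ is the first one'' is false. The diagram in Proposition \ref{prop:imgamma} has six arrows but only five target pieces. By item (4) of that proposition, the fourth arrow $\mathbb R\be_0\wedge\Lambda^0E\wedge\Lambda^4F\to\Lambda^2E$ is injective with image $\mathbb R\,\bigstar\varphi$, where $\bigstar\varphi=\mu\be_{12}+\lambda\be_{34}+\rho\be_{56}$.

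This contribution comes from the $\omega^{(5)}$-term of \eqref{definition-gamma-useful}, with $\omega^{(5)}=\be_0\wedge\vol_F$ and $\star\varphi=\be_0\wedge\vol_F\wedge\bigstar\varphi$. That is why your planned recheck using only $\omega^{(2)}=\be_{ij}$ would not reveal it. Consequently, for $\mu=0$ you get
$\operatorname{Im}(\gamma^\varphi)\cap\Lambda^2E=W+\mathbb R(\lambda\be_{34}+\rho\be_{56})$, not $W$.

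Now $W$ already contains $\rho\be_{34}+\lambda\be_{56}$. In the basis $\be_{34},\be_{56}$, this vector and $\lambda\be_{34}+\rho\be_{56}$ have determinant $\rho^2-\lambda^2$.
\begin{itemize}
\item If $\mu=0$ and $\rho>\lambda$, they span $\langle\be_{34},\be_{56}\rangle$. The image therefore contains all of $\stab_{\fso(E)}(\varphi)=\langle\be_{12},\be_{34},\be_{56}\rangle$, and the intersection is the full stabilizer, exactly as the corollary asserts.
\item If $\rho=\lambda$, the extra vector is a multiple of $\be_{34}+\be_{56}\in W$, so nothing new is gained. Your answer $\fso(E^\perp)\oplus\langle\be_{12},\be_{34}+\be_{56}\rangle$ is then correct.
\end{itemize}
This determinant is precisely why $\rho=\lambda$ (with $\mu=0$) is the only exceptional case. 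The statement is correct as given, and your proposed amendment, declaring all of $\mu=0$ exceptional, would be wrong.

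Everything else in your proposal is sound and follows the route the paper intends when it says the corollary follows immediately from Propositions \ref{lem:rank6fourvectors-orthogonal} and \ref{prop:imgamma}. This includes the piece-by-piece intersection, the inclusion $\fso(E^\perp)\subset\operatorname{Im}(\gamma^\varphi)$ via items (3) and (6), and the case $\mu>0$. The only fix needed is to include the fourth arrow in the $\Lambda^2E$ component.
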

We focus on the nodes of the diagram of Proposition \ref{prop:imgamma} from which an arrow emanates and work at the complexified level. Using the defining equations of $\omega^{(2)}$, $\omega^{(5)}$ and the action of $\Cl(V)$ on $\mathbb S$ constructed in Lemma \ref{lemma:action}, we have for all $\omega\in\odot^2\mathbb S$ that
\begin{equation}
\label{eq:higherDiraccurrentbroken}
\begin{aligned}
\omega^{(2)}&\equiv \underbrace{\omega^{(2)}_{\mathbb E}\otimes f}_{\text{values in}\;\Lambda^2 \mathbb E\wedge\Lambda^0 \mathbb F}\mod \operatorname{Ker}(\gamma^\varphi)\cap\Lambda^2 \mathbb V\;,\\
\omega^{(5)}&\equiv \underbrace{\big(\kappa_{\mathbb E}\otimes f\big)\vol_F}_{\text{values in}\;\Lambda^1 \mathbb E\wedge\Lambda^4 \mathbb F}
+\underbrace{\omega^{(2)}_{\mathbb E}\otimes\omega^{(3)}_{\mathbb F}}_{\text{values in}\;\Lambda^2\mathbb E\wedge\Lambda^3\mathbb F}\\
&\;\;\;+\underbrace{\big(f_{\vol_E}\otimes f\big)\be_0\wedge\vol_F}_{\text{values in}\;\mathbb C\be_0\wedge\Lambda^0\mathbb E\wedge\Lambda^4\mathbb F}+\underbrace{\big(\widetilde\kappa_{\mathbb E}\otimes\omega^{(3)}_{\mathbb F}\big)\be_0}_{\text{values in}\;\mathbb C\be_0\wedge\Lambda^1 \mathbb E\wedge\Lambda^3\mathbb F}\\
&\;\;\;+\underbrace{\big(\widetilde\omega^{(2)}_{\mathbb E}\otimes \widetilde\omega^{(2)}_{\mathbb F}\big)\be_0}_{\text{values in}\;\mathbb C\be_0\wedge\Lambda^2\mathbb E\wedge\Lambda^2\mathbb F}\mod \operatorname{Ker}(\gamma^\varphi)\cap\Lambda^5\mathbb V\;,
\end{aligned}
\end{equation}
with 
\begin{enumerate}
	\item $\omega^{(2)}_{\mathbb E}:\Sigma_+\wedge\Sigma_-\to\Lambda^2\mathbb E$ the polyvector operator on $\Sigma$ constructed via $f$;
	\item $\omega^{(3)}_{\mathbb F}:\Delta_+\wedge\Delta_-\to\Lambda^3\mathbb F$ the polyvector operator on $\Delta$ constructed via $f$; 
	\item $\widetilde\kappa_{\mathbb E}:\Lambda^2\Sigma_\pm\to\mathbb E$ the Dirac current on $\Sigma$ constructed via $f_{\vol_{E}}$; 
	\item $\widetilde\omega^{(2)}_{\mathbb E}:\Sigma_+\odot\Sigma_-\to\Lambda^2\mathbb E$ the polyvector operator on $\Sigma$ constructed via $f_{\vol_{E}}$;
	\item $\widetilde\omega^{(2)}_{\mathbb F}:\odot^2\Delta_\pm\to\Lambda^2_\pm\mathbb F$ the polyvector operator on $\Delta$ constructed via $f_{\vol_{F}}$. 
\end{enumerate}
In particular the contributions in \eqref{eq:higherDiraccurrentbroken} all act on $\Lambda^2\Sigma\;\Lambda^2\Delta$, except the last acting on $\odot^2\Sigma\,\odot^2\Delta$. The maps
$\omega^{(3)}_{\mathbb F}$, $\widetilde\kappa_{\mathbb E}$, $\widetilde\omega^{(2)}_{\mathbb F}$ are isomorphisms, $\omega^{(2)}_{\mathbb E}$ and $\widetilde\omega^{(2)}_{\mathbb E}$ surjective with $1$-dimensional kernel. In particular, it is checked directly from the definitions that, in both cases, the kernel does not include any decomposable tensor of $\Sigma_+\otimes\Sigma_-$, by irreducibility of $\Sigma_\pm$ as  $\mathfrak{so}(\mathbb E)$-modules.
\vskip0.3cm\par
We may now combine \eqref{eq:higherDiraccurrentbroken} with Proposition \ref{prop:imgamma}, arriving at the complete diagram for the action of $\gamma^\varphi$. Since this is a crucial ingredient for our following arguments, together with the action of the Dirac current, we will also report arrows for the latter. 
\begin{Proposition}
\label{prop:imgammauseful}
The following diagram subsumes the action of the maps $\kappa$ and $\gamma^\varphi$ (resp. on the left and on the right of the diagram) on the complexification $\odot^2 \mathbb S$ w.r.t. the decomposition
$\odot^2 \mathbb S=\odot^2\Sigma\,\odot^2\Delta\;\oplus\;\Lambda^2\Sigma\;\Lambda^2\Delta$ and the decompositions of $\Sigma$, $\Delta$ into semispinor modules.
$$
{\small
\xymatrix@C=1pc@R=0pc{
&&&&&&(\Sigma_+\odot\Sigma_+)\otimes(\Delta_+\odot\Delta_+)& & & & & &  \\
&&&&&&(\Sigma_+\odot\Sigma_+)\otimes(\Delta_-\odot\Delta_-)& & & & & &  \\
&&&&&&(\Sigma_+\odot\Sigma_+)\otimes(\Delta_+\odot\Delta_-)& & & & & &  \\
&&&&&&(\Sigma_-\odot\Sigma_-)\otimes(\Delta_+\odot\Delta_+)& & & & & &  \\
&&&&&&(\Sigma_-\odot\Sigma_-)\otimes(\Delta_-\odot\Delta_-)& & & & & &  \\
&&&&&&(\Sigma_-\odot\Sigma_-)\otimes(\Delta_+\odot\Delta_-)& & & & & &  \\
&&&&&&(\Sigma_+\odot\Sigma_-)\otimes(\Delta_+\odot\Delta_+)\ar@/_/[rrrrrr]& & & & & & \Lambda^2\mathbb F\\
&&&&&&(\Sigma_+\odot\Sigma_-)\otimes(\Delta_-\odot\Delta_-)\ar@/_/[urrrrrr]& & & & & & \\
\mathbb F&&&&&&\ar@(dl,u)[llllll](\Sigma_+\odot\Sigma_-)\otimes(\Delta_+\odot\Delta_-)& & & & & & \\
\mathbb E&&&&&&\ar@(dl,u)[llllll](\Sigma_+\wedge\Sigma_+)\otimes(\Delta_+\wedge\Delta_+)\ar@/_/[rrrrrr]& & & & & & \mathbb C\be_0\wedge \mathbb E\\
&&&&&&\ar@(dl,u)[ullllll](\Sigma_+\wedge\Sigma_+)\otimes(\Delta_-\wedge\Delta_-)\ar@/_/[urrrrrr]& & & & & & \\
&&&&&&(\Sigma_+\wedge\Sigma_+)\otimes(\Delta_+\wedge\Delta_-)\ar@(r,l)[rrrrrr]& & & & & & \mathbb E\wedge\mathbb F\\
&&&&&&\ar@(dl,u)[uuullllll](\Sigma_-\wedge\Sigma_-)\otimes(\Delta_+\wedge\Delta_+)\ar@(r,d)[uuurrrrrr]& & & & & &  \\
&&&&&&\ar@(dl,u)[uuuullllll](\Sigma_-\wedge\Sigma_-)\otimes(\Delta_-\wedge\Delta_-)\ar@(r,d)[uuuurrrrrr]& & & & & & \\
&&&&&&(\Sigma_-\wedge\Sigma_-)\otimes(\Delta_+\wedge\Delta_-)\ar@(r,d)[uuurrrrrr]& & & & & & \\
\mathbb C\be_0&&&&&&\ar@(r,l)[llllll](\Sigma_+\wedge\Sigma_-)\otimes(\Delta_+\wedge\Delta_+)\ar@/_/[rrrrrr]& & & & & & \Lambda^2 \mathbb E\\
&&&&&&\ar@(dl,u)[ullllll](\Sigma_+\wedge\Sigma_-)\otimes(\Delta_-\wedge\Delta_-)\ar@/_/[urrrrrr]& & & & & & \\
&&&&&&(\Sigma_+\wedge\Sigma_-)\otimes(\Delta_+\wedge\Delta_-)\ar@(l,r)[rrrrrr]& & & & & & \mathbb C\be_0\wedge \mathbb F
}
}
$$
\end{Proposition}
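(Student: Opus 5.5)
The plan is to read off both halves of the diagram from the decompositions \eqref{eq:Diraccurrentbroken} and \eqref{eq:higherDiraccurrentbroken}. Each term in these is a tensor product of a bilinear operation on $\Sigma$ with one on $\Delta$. Whether such a term vanishes on a given summand of $\odot^2\mathbb S$ is decided by two pieces of data: the symmetry $\sigma$ of the two factors, and their invariant $\imath$, which says whether the form pairs a semispinor module with itself or with its opposite. First I decompose
$$\odot^2\mathbb S=\odot^2\Sigma\,\odot^2\Delta\oplus\Lambda^2\Sigma\,\Lambda^2\Delta,$$
and then split $\odot^2\Sigma$, $\Lambda^2\Sigma$, $\odot^2\Delta$, $\Lambda^2\Delta$ into their $\Sigma_\pm$, $\Delta_\pm$ pieces. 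This produces exactly the eighteen rows listed in the middle column. For each term I then list the rows on which it can be nonzero.

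For the left half ($\kappa$), I treat the three summands of \eqref{eq:Diraccurrentbroken} in turn.
\begin{itemize}
\item The first summand $\kappa_{\mathbb E}\otimes f_{\vol_F}$. Clifford multiplication by $v\in\mathbb E$ swaps $\Sigma_\pm$, while $f$ has $\imath=-1$, so $\kappa_{\mathbb E}$ lives on $\Lambda^2\Sigma_+\oplus\Lambda^2\Sigma_-$. The form $f_{\vol_F}$ is skew with $\imath=+1$, so it lives on $\Lambda^2\Delta_+\oplus\Lambda^2\Delta_-$. This gives the four arrows to $\mathbb E$ from $(\Sigma_\pm\wedge\Sigma_\pm)\otimes(\Delta_\pm\wedge\Delta_\pm)$. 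They are nonzero because $\Lambda^2\Sigma_\pm\cong\mathbb E$ and $\Lambda^2\Delta_\pm$ is one-dimensional with $f_{\vol_F}$ nondegenerate on $\Delta_\pm$.
\item The second summand $f\otimes\kappa_{\mathbb F}$. Here $f$ is symmetric with $\imath=-1$, so it is supported on $\Sigma_+\odot\Sigma_-$. The current $\kappa_{\mathbb F}$ is supported on $\Delta_+\odot\Delta_-\cong\mathbb F$. This gives the single arrow to $\mathbb F$.
\item The third summand $(f_{\vol_E}\otimes f_{\vol_F})\be_0$. The form $f_{\vol_E}$ has invariants $(+--)$, so it is skew and pairs $\Sigma_+$ with $\Sigma_-$, giving support $\Sigma_+\wedge\Sigma_-$. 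The form $f_{\vol_F}$ gives $\Delta_\pm\wedge\Delta_\pm$. This yields the two arrows to $\mathbb C\be_0$.
\end{itemize}
All other rows lie in $\Ker\kappa$, consistent with the earlier Corollary on the kernel of $\kappa$.

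For the right half ($\gamma^\varphi$), I use \eqref{eq:higherDiraccurrentbroken} together with the formula \eqref{definition-gamma-useful}, $\gamma^\varphi(\omega)=\tfrac13(\cdot)\omega^{(2)}+\tfrac16(\cdot)\omega^{(5)}$. Modulo $\Ker\gamma^\varphi$ only six components of $\omega^{(2)},\omega^{(5)}$ survive, and Proposition \ref{prop:imgamma} tells where each one lands.
\begin{itemize}
\item $\omega^{(2)}_{\mathbb E}\otimes f$ lands in $\Lambda^2\mathbb E$. It needs $\Sigma_+\wedge\Sigma_-$ on the $\Sigma$ side and, since $f$ on $\Delta$ is skew with $\imath=+1$, $\Delta_\pm\wedge\Delta_\pm$ on the $\Delta$ side.
\item $(\kappa_{\mathbb E}\otimes f)\vol_F$ lands in $\mathbb C\be_0\wedge\mathbb E$, via the second arrow of Proposition \ref{prop:imgamma}, from $\Lambda^2\Sigma_\pm\otimes\Lambda^2\Delta_\pm$.
\item $(f_{\vol_E}\otimes f)\be_0\wedge\vol_F$ lands in $\Lambda^2\mathbb E$ by the fourth arrow. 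Its source is again $(\Sigma_+\wedge\Sigma_-)\otimes(\Delta_\pm\wedge\Delta_\pm)$.
\item $\omega^{(2)}_{\mathbb E}\otimes\omega^{(3)}_{\mathbb F}$ lands in $\mathbb C\be_0\wedge\mathbb F$ from $(\Sigma_+\wedge\Sigma_-)\otimes(\Delta_+\wedge\Delta_-)$, via the third arrow.
\item $\widetilde\kappa_{\mathbb E}\otimes\omega^{(3)}_{\mathbb F}$ lands in $\mathbb E\wedge\mathbb F$ from $(\Sigma_\pm\wedge\Sigma_\pm)\otimes(\Delta_+\wedge\Delta_-)$, via the fifth arrow.
\item $\widetilde\omega^{(2)}_{\mathbb E}\otimes\widetilde\omega^{(2)}_{\mathbb F}$ lands in $\Lambda^2\mathbb F$, via the sixth arrow. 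This is the only term acting on $\odot^2\Sigma\,\odot^2\Delta$, and it is supported on $(\Sigma_+\odot\Sigma_-)\otimes(\Delta_\pm\odot\Delta_\pm)$.
\end{itemize}
Nonvanishing of each arrow follows by composing maps already known to be well behaved. The operators $\omega^{(3)}_{\mathbb F}$, $\widetilde\kappa_{\mathbb E}$, $\widetilde\omega^{(2)}_{\mathbb F}$ are isomorphisms, and $\omega^{(2)}_{\mathbb E}$, $\widetilde\omega^{(2)}_{\mathbb E}$ are surjective with one-dimensional kernel. The arrows of Proposition \ref{prop:imgamma} are nonzero in every case, including $\mu=0$, because $\bigstar\varphi\neq0$.

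The main obstacle is bookkeeping rather than conceptual: getting the invariants $\imath$ right for each twisted form, namely $f$, $f_{\vol_E}$ and $f_{\vol_F}$ on $\Sigma$ and on $\Delta$. Related to this are the identifications $\Lambda^2\mathbb F=\Lambda^2_+\mathbb F\oplus\Lambda^2_-\mathbb F$ with $\odot^2\Delta_\pm$, and the use of the complex structure $I$, which makes $\Sigma$ and $\Delta$ complex and lets the tensor products be taken over $\mathbb C$. I would settle these first, checking each invariant against $\vol_F|_{\Delta_\pm}=\pm1$ and $-\vol_E I|_{\Sigma_\pm}=\pm1$. With those fixed, every arrow follows by the selection rule described above.
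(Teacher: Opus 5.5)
Your proposal is correct and takes the paper's own route. The paper obtains the diagram by combining \eqref{eq:Diraccurrentbroken} and \eqref{eq:higherDiraccurrentbroken} with Proposition~\ref{prop:imgamma}, and your selection rule via the invariants $(\sigma,\imath)$ of $f$, $f_{\vol_E}$, $f_{\vol_F}$ reproduces every arrow on both sides exactly. Your explicit nonvanishing check is something the paper leaves implicit; for the two rows landing in $\Lambda^2\mathbb E$ two terms add, but the one through the fourth arrow has rank at most one, so their sum is nonzero.
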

\vskip0.3cm\par

We will now turn to using the presentation of $S$ adapted to $\varphi=\varphi_{(\rho,\lambda,\mu)}\in\Lambda^4 V$ and the results of \S\ref{sec:4} to obtain a priori estimates of the size of the Dirac kernel $\mathfrak D$ of $S'$, in a manner that depends only on the dimension of $S'$. The idea is then very simple: we will exploit that $\mathfrak D$ is sufficiently big to contradict the quite restrictive condition $\gamma^\varphi(\mathfrak D)=\mathfrak h_{(\varphi,S')}\subset\mathfrak{stab}_{\fso(V)}(\varphi)$ (recall that the notion
of a Lie pair is given in Definition  \ref{def:Lie pair} and that Proposition   \ref{prop:LiePair} holds). We consider the case where $\varphi=\varphi_{(\rho,\lambda,\mu)}$ has length $3$ first in \S\ref{sec:5} and then turn to the case of lenght $2$ in \S\ref{sec:6}, which is more involved.
\section{The main theorem: case of fourvectors of rank $6$ and length $3$}
\label{sec:5}
We here establish the following result.
\begin{Theorem}
\label{mainthm:1}
Let $(M,g,F)$ be a supergravity  background with 
$4$-form $F$ of rank $6$, lenght $3$, and the Euclidean support. Then the space $\mathfrak{k}_{\bar 1}$ of Killing spinors has dimension $\dim \mathfrak{k}_{\bar 1}\leq 26$.
\end{Theorem}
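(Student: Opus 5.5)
We argue by contradiction: suppose $\dim S'=N\geq 27$, where $(\varphi,S')$ is the geometric symbol with $\varphi=\varphi_{(\rho,\lambda,\mu)}$ and $\rho\geq\lambda\geq\mu>0$ (length $3$). By Proposition \ref{prop:LiePair}, $(\varphi,S')$ is a Lie pair, so
\[
\gamma^\varphi(\mathfrak D)\subset\mathfrak{stab}_{\fso(V)}(\varphi)=\fso(E^\perp)\oplus\mathfrak{stab}_{\fso(E)}(\varphi),
\]
and $\mathfrak{stab}_{\fso(E)}(\varphi)\subset\mathfrak u(3)$ has dimension at most $9$. The plan is to show that the Dirac kernel $\mathfrak D$ is too large for this inclusion to hold.

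The first step is the estimate on $\mathfrak D$. From the short exact sequence,
\[
\dim\mathfrak D=\tfrac{N(N+1)}{2}-11\geq 367 .
\]
The second step is to bound from above the subspace of $\Lambda^2 V\oplus\Lambda^5 V$ that $\gamma^\varphi$ sends into $\mathfrak{stab}_{\fso(V)}(\varphi)$.

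Since $\mu>0$, Proposition \ref{prop:imgamma} says the arrows out of $\Lambda^2E$, $\Lambda^1E\wedge\Lambda^4F$ and $\mathbb R\be_0\wedge\Lambda^1E\wedge\Lambda^3F$ are bijections. The remaining arrows are surjective with explicit kernels. The targets $\mathbb R\be_0\wedge E$, $E\wedge F$ and most of $\Lambda^2E$ meet the stabilizer trivially. Hence the condition $\gamma^\varphi(\omega)\in\mathfrak{stab}$ forces:
\begin{enumerate}
\item the $\Lambda^1E\wedge\Lambda^4F$ and $\be_0\wedge\Lambda^1E\wedge\Lambda^3F$ components of $\omega^{(5)}$ to vanish;
\item the component of $\omega^{(5)}$ in $\Lambda^2E\wedge\Lambda^3F$ to lie in $(\bigstar\varphi)^\perp\wedge\Lambda^3F$;
\item the $\Lambda^2E$ parts of $\omega^{(2)}$ and $\omega^{(5)}$ to combine into $\gamma^{\varphi}$-preimages of $\mathfrak{stab}_{\fso(E)}(\varphi)$.
\end{enumerate}
Counting dimensions gives a subspace $W\subset\Lambda^2V\oplus\Lambda^5V$ of codimension $c$ with $\mathfrak D\subset W$. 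Roughly, $c$ is $4\cdot 6+6\cdot 6$ (for $E\wedge\Lambda^4F$ and $\be_0\wedge E\wedge\Lambda^3 F$) plus the complement of the stabilizer inside $\Lambda^2E$, i.e.\ $15-\dim\mathfrak{stab}\geq 6$.

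This naive count is not enough: $\dim(\Lambda^2V\oplus\Lambda^5V)=517$, and $517-c$ still exceeds $367$. So the key third step is to use that $\mathfrak D=\odot^2S'\cap(\Lambda^2V\oplus\Lambda^5V)$ sits inside $\odot^2 S'$, not merely inside $\odot^2S$. Complexify and use the block description of Proposition \ref{prop:imgammauseful}. The components of $\odot^2\mathbb S$ on which $\gamma^\varphi$ is injective into the forbidden targets $\mathbb C\be_0\wedge\mathbb E$, $\mathbb E\wedge\mathbb F$ and $\Lambda^2\mathbb E$ are the blocks $(\Sigma_\pm\wedge\Sigma_\pm)\otimes(\Delta\wedge\Delta)$ and $(\Sigma_+\wedge\Sigma_-)\otimes(\Delta_\pm\wedge\Delta_\pm)$. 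On these blocks $\gamma^\varphi$ factors through $\kappa_{\mathbb E}$, $\widetilde\kappa_{\mathbb E}$ and $\omega^{(2)}_{\mathbb E}$.

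The argument then considers the projection of $\mathbb S'$ onto $\Sigma_+\otimes\Delta$ and $\Sigma_-\otimes\Delta$; because $c$ swaps these two summands, both projections have equal rank. Since $N\geq 27$, each projection has kernel of dimension at least $N-16\geq 11$. So $\mathbb S'$ contains at least $11$-dimensional subspaces $P_\pm\subset\Sigma_\pm\otimes\Delta$, and
\[
\Lambda^2P_\pm\subset\Lambda^2\mathbb S'\cap\bigl((\Sigma_\pm\wedge\Sigma_\pm)\otimes\odot^2\Delta\oplus(\Sigma_\pm\odot\Sigma_\pm)\otimes\Lambda^2\Delta\bigr).
\]
Pairing elements of $P_+$ with elements of $P_-$ produces many elements of $\odot^2\mathbb S'$ lying in $\Sigma_+\Sigma_-$-blocks. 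The Dirac current of these elements is constrained to $\mathbb C\be_0\oplus\mathbb F$ (the $\mathbb E$ component vanishes), so subtracting a small correction lands them in $\mathfrak D$. Their $\gamma^\varphi$-images in $\Lambda^2\mathbb E$ and $\mathbb C\be_0\wedge\mathbb F$ can be computed via $\omega^{(2)}_{\mathbb E}$, whose kernel contains no decomposable tensors. The Lie pair condition then forces $\gamma^\varphi(P_+\odot P_-)$ into a space of dimension at most $9+\dim\fso(E^\perp)$. The aim is to show this is impossible, using the non-degeneracy of $\omega^{(2)}_{\mathbb E}$ on decomposables together with a lemma of the form "a linear subspace of $\Sigma_+\otimes\Sigma_-\cong\mathbb C^4\otimes\mathbb C^4$ of dimension $>\!$ some threshold contains enough independent decomposables"; this would give a contradiction for $\dim P_\pm\geq 11$.

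The final step is to handle degenerate values of $(\rho,\lambda,\mu)$. In the cases $\rho=\lambda$ or $\lambda=\mu$ the stabilizer is larger ($\mathfrak u(2)\oplus\mathfrak u(1)$ or $\mathfrak u(3)$), and the estimate must be rechecked with the larger target. The hardest and most delicate point is the third step: extracting enough elements of $\mathfrak D$ whose $\gamma^\varphi$-image has a nonzero component outside $\mathfrak{stab}_{\fso(V)}(\varphi)$ while knowing only $\dim S'$. If the bilinear pairing count fails, I would instead exploit invariance of $S'$ under $\fh\supset\gamma^\varphi(\mathfrak D)$. The map $\gamma^\varphi$ restricted to $\mathfrak D$ has a large image in $\fso(E^\perp)$, which suggests $S'$ is stable under a big subalgebra of $\fso(1,4)$; this would force $S'$ to be a sum of $\Sigma$-isotypic pieces of dimension $8$ or $16$, which is incompatible with $27\leq N<32$. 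It would then remain to exclude $N=32$ directly via the classification of maximally supersymmetric backgrounds.
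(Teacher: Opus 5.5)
\paragraph{A genuine gap in the third step.} Your third step is the only place where a bound could come from, since, as you note, the naive count fails. As written it is not an argument. It rests on an unstated ``lemma of the form \dots'' about decomposable tensors in $\Sigma_+\otimes\Sigma_-$, and on an unspecified ``small correction'' that is supposed to push elements of $P_+\odot P_-$ into $\mathfrak D$.

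The pairing you chose is also the unfavourable one, for three reasons:
\begin{enumerate}
\item Elements of $(\Sigma_+\otimes\Delta)\odot(\Sigma_-\otimes\Delta)$ generally have non-zero Dirac current in $\mathbb C\be_0\oplus\mathbb F$.
\item Any correction would have to be taken from $\odot^2\mathbb S'$, which you cannot control knowing only $\dim S'$, and it could change the $\gamma^\varphi$-image.
\item By Proposition \ref{prop:imgammauseful}, the parts of $\gamma^\varphi$ on these blocks that land in $\Lambda^2\mathbb F$ or $\mathbb C\be_0\wedge\mathbb F$ lie in $\fso(E^\perp)\subset\stab_{\fso(V)}(\varphi)$. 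So only a $\Lambda^2\mathbb E$-component, taken modulo the stabilizer, could ever produce a contradiction.
\end{enumerate}
There is also a symmetry-type slip: $\mathfrak D$ sits inside $\odot^2\mathbb S'$, so your displayed inclusion for $\Lambda^2P_\pm$ concerns the wrong object. The closing plan (case split over degenerate $(\rho,\lambda,\mu)$, and $\fh$-invariance forcing isotypic pieces) is speculative and turns out to be unnecessary.

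\paragraph{The missing idea.} Pair spinors of the \emph{same} $\Sigma$-chirality but \emph{opposite} $\Delta$-chirality. Take
\[
s^+=\sigma_1\otimes\epsilon^+_1+\sigma_2\otimes\epsilon^+_2\in\mathbb S'\cap(\Sigma_+\otimes\Delta_+),\qquad s^-=\varsigma_1\otimes\epsilon^-_1+\varsigma_2\otimes\epsilon^-_2\in\mathbb S'\cap(\Sigma_+\otimes\Delta_-).
\]
\begin{itemize}
\item Automatically $\kappa(s^+,s^-)=0$ (this is the corollary on the kernel of $\kappa$), so $s^+\odot s^-\in\mathfrak D\otimes\mathbb C$ with no correction needed.
\item $\gamma^\varphi(s^+,s^-)$ is the $\gamma^\varphi$-image of $(\widetilde\kappa_{\mathbb E}\otimes\omega^{(3)}_{\mathbb F})(s^+,s^-)\be_0$.
\item Since $\mu\neq0$, $\gamma^\varphi$ is injective on $\mathbb C\be_0\wedge\Lambda^1\mathbb E\wedge\Lambda^3\mathbb F$ with image $\mathbb E\wedge\mathbb F$ (Proposition \ref{prop:imgamma}).
\item $\mathbb E\wedge\mathbb F$ meets $\stab_{\fso(V)}(\varphi)=\fso(E^\perp)\oplus\stab_{\fso(E)}(\varphi)$ trivially, for every $\rho\ge\lambda\ge\mu>0$. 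So no case split is needed.
\end{itemize}
The Lie pair condition therefore forces $\widetilde\kappa_{\mathbb E}(\sigma_i,\varsigma_j)=0$ for all $i,j$. Hence all the $\sigma_i,\varsigma_j$ lie on a single line $\mathbb C\sigma$. If both intersections $\mathbb S'\cap(\Sigma_+\otimes\Delta_\pm)$ are non-zero, each is contained in $\mathbb C\sigma\otimes\Delta_\pm$ and so has dimension at most $2$.

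Your own count then finishes the proof. From $\dim\big(\mathbb S'\cap(\Sigma_+\otimes\Delta)\big)\ge N-16\ge 11$ inside the $16$-dimensional $\Sigma_+\otimes\Delta$, you get $\dim\big(\mathbb S'\cap(\Sigma_+\otimes\Delta_\pm)\big)\ge 3$ for both signs. This contradicts the bound of $2$. The paper phrases the same conclusion via the projection $\mathbb S'\to\mathbb S/(\Sigma_+\otimes\Delta_+)$, whose kernel has dimension at most $2$, giving $N\le 26$.
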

\begin{proof}
If $(M,g,F)$ is not highly supersymmetric, the claim is obvious. Let $(M,g,F)$ be highly supersymmetric and
let $symb(M,g,F)=(\varphi,S')$ be the corresponding geometric symbol, where 
$\varphi=\varphi_{(\rho,\lambda,\mu)}$ with $\mu\neq 0$ by assumptions and $S'\subset S$. We set $\mathbb S':=S'\otimes_{\mathbb R}\mathbb C\subset \mathbb S=\Sigma\otimes_{\mathbb C}\Delta$.
\vskip0.2cm\par
If $(\Sigma_+\otimes\Delta_+)\cap\mathbb S'=(0)$, then we may consider the natural projection
\begin{equation}
\label{eq:projection}
\pi:\mathbb S\rightarrow\mathbb S/(\Sigma_+\otimes\Delta_+)\;,
\end{equation}
which is injective restricted to $\mathbb S'$, thus $\dim\mathbb S'\leq 24$. A completely similar argument holds if $(\Sigma_+\otimes\Delta_-)\cap\mathbb S'=(0)$. (The other two cases are automatic by conjugation, see Lemma \ref{lemma:action}.)
\vskip0.2cm\par
Let us now assume that both $(\Sigma_+\otimes\Delta_+)\cap\mathbb S'$ and $(\Sigma_+\otimes\Delta_-)\cap\mathbb S'$ are non-trivial. We fix a basis $\left\{\epsilon_1^\pm,\epsilon_2^\pm\right\}$ of $\Delta_{\pm}$ and consider
two non-zero elements
\begin{equation}
\label{eq:first-step-elements}
\begin{aligned}
s^+&=\sigma_1\otimes\epsilon_1^+ +\sigma_2\otimes\epsilon_2^+\in(\Sigma_+\otimes\Delta_+)\cap\mathbb S'\;,\\
s^-&=\varsigma_1\otimes\epsilon_1^-+\varsigma_2\otimes\epsilon_2^-\in(\Sigma_+\otimes\Delta_-)\cap\mathbb S'\;,
\end{aligned}
\end{equation}
where $\sigma_1,\ldots,\varsigma_2\in\Sigma_{+}$. Now $\kappa(s^+,s^-)=0$ by Proposition \ref{prop:imgammauseful}, so $s^+\odot s^-\in\mathfrak D\otimes_{\mathbb R}\mathbb C$. Moreover, combining \eqref{eq:higherDiraccurrentbroken} with the right side of the diagram of Proposition \ref{prop:imgammauseful},we get
\begin{equation}
\label{eq:crucialevaluation}
\gamma^\varphi(s^+,s^-)=\gamma^\varphi\big\{\underbrace{\big(\widetilde\kappa_{\mathbb E}\otimes\omega^{(3)}_{\mathbb F}(s^+,s^-)\big)\be_0}_{\text{element of}\;\mathbb C\be_0\wedge\Lambda^1\mathbb E\wedge\Lambda^3\mathbb F}\big\}\;,
\end{equation}
where $\widetilde\kappa_{\mathbb E}:\Lambda^2\Sigma_+\to\mathbb E$ and $\omega^{(3)}_{\mathbb F}:\Delta_+\wedge\Delta_-\to\Lambda^3\mathbb F$ are isomorphisms. Since the restriction of $\gamma^\varphi$ to 
$\mathbb C\be_0\wedge\Lambda^1\mathbb E\wedge\Lambda^3\mathbb F$ is an isomorphism with image $\mathbb E\wedge\mathbb F\subset\mathfrak{so}(\mathbb V)$ by Proposition \ref{prop:imgamma} (here we are using the assumption that $\mu\neq 0$), we see that \eqref{eq:crucialevaluation} vanishes if and only 
$$\widetilde\kappa_{\mathbb E}(\sigma_1,\varsigma_1)=\widetilde\kappa_{\mathbb E}(\sigma_1,\varsigma_2)=\widetilde\kappa_{\mathbb E}(\sigma_2,\varsigma_1)=\widetilde\kappa_{\mathbb E}(\sigma_2,\varsigma_2)=0\;.$$
In other words, if and only if $\sigma_1,\ldots,\varsigma_2$ lie on the same complex line in $\Sigma_+$, i.e., 
there exists a $\sigma\in\Sigma_+$ such that $s^+\in \mathbb C\sigma\otimes\Delta_+$ and $s^-\in\mathbb C\sigma\otimes\Delta_-$.

We first claim that this is necessarily the case. Otherwise, \eqref{eq:crucialevaluation} would not vanish and $\gamma^\varphi(\mathfrak D)$ would include a non-zero element in $E\wedge F$, contradicting the identity $\gamma^\varphi(\mathfrak D)\subset \stab_{\mathfrak{so}(V)}(\varphi)$ for highly-supersymmetric backgrounds. By genericity of $s^+$ and $s^-$, we thus infer
\begin{equation*}
\begin{aligned}
(\Sigma_+\otimes\Delta_+)\cap\mathbb S'&\subset\mathbb C\sigma\otimes\Delta_+\;,\\
(\Sigma_+\otimes\Delta_-)\cap\mathbb S'&\subset\mathbb C\sigma\otimes\Delta_-\;,
\end{aligned}
\end{equation*}
and $\dim\mathbb S'\leq 26$ arguing with the projection \eqref{eq:projection} as in the beginning of the proof.
\end{proof}

\section{The main theorem: case of fourvectors of rank $6$ and length $2$}
\label{sec:6}
We turn to the case where the $4$-form $F$ has rank $6$ and Euclidean support, but lenght $2$. 
We shall devote the remainder of the article to prove the following.
\begin{Theorem}
\label{mainthm:2}
Let $(M,g,F)$ be a supergravity  background with 
$4$-form $F$ of rank $6$, lenght $2$, and the Euclidean support. Then the space $\mathfrak{k}_{\bar 1}$ of Killing spinors has dimension $\dim \mathfrak{k}_{\bar 1}\leq 26$.
\end{Theorem}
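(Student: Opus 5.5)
The plan is to follow the proof of Theorem~\ref{mainthm:1}, now with $\mu=0$, so that $\varphi=\rho\be_{1234}+\lambda\be_{1256}$ with $\rho\ge\lambda>0$. The new difficulty is that, by Proposition~\ref{prop:imgamma}(5), $\gamma^\varphi$ restricted to $\mathbb R\be_0\wedge\Lambda^1E\wedge\Lambda^3F$ is no longer injective. Its kernel is $\mathbb R\be_0\wedge\langle\be_1,\be_2\rangle\wedge\Lambda^3F$ and its image is $\langle\be_3,\dots,\be_6\rangle\wedge F$, so the key evaluation \eqref{eq:crucialevaluation} only detects the component of $\widetilde\kappa_{\mathbb E}$ along $\langle\be_3,\dots,\be_6\rangle$.

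\emph{First step: trivial cases.} As in \S\ref{sec:5}, if $(\Sigma_+\otimes\Delta_+)\cap\mathbb S'=0$ or $(\Sigma_+\otimes\Delta_-)\cap\mathbb S'=0$, the projection \eqref{eq:projection} gives $\dim\mathbb S'\le 24$. So I assume both intersections are nonzero and take $s^\pm$ as in \eqref{eq:first-step-elements}.

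\emph{Second step: refine the Lie pair condition.} Since $s^+\odot s^-\in\mathfrak D\otimes\mathbb C$ and $\gamma^\varphi(\mathfrak D)\subset\stab_{\fso(V)}(\varphi)$, which meets $E\wedge F$ trivially, I must have
\[
\mathrm{pr}_{\langle\be_3,\dots,\be_6\rangle}\,\widetilde\kappa_{\mathbb E}(\sigma_i,\varsigma_j)=0\quad\text{for all } i,j .
\]
To make this usable, decompose $\Sigma_+=A\oplus B$ under $\fso(2)\oplus\fso(4)\subset\fso(E)$, with $A$ and $B$ the two $2$-dimensional eigenspaces of Clifford multiplication by $\be_{12}$. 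Then $\Lambda^2\Sigma_+=\Lambda^2A\oplus\Lambda^2B\oplus(A\otimes B)$, where $\Lambda^2A\oplus\Lambda^2B$ corresponds under $\widetilde\kappa_{\mathbb E}$ to $\langle\be_1,\be_2\rangle_{\mathbb C}$ and $A\otimes B$ corresponds to $\langle\be_3,\dots,\be_6\rangle_{\mathbb C}$. Writing $W_\pm\subset\Sigma_+$ for the spans of the $\Sigma_+$-factors of elements of $(\Sigma_+\otimes\Delta_\pm)\cap\mathbb S'$, the condition says that the $A\otimes B$-component of $w_+\wedge w_-$ vanishes for all $w_\pm\in W_\pm$. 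A short linear algebra argument should then classify the possibilities: either $W_+$ and $W_-$ lie on a common line (the situation of \S\ref{sec:5}, giving $\dim\le26$), or $W_+,W_-\subset A$, or $W_+,W_-\subset B$.

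\emph{Third step: the cases $W_\pm\subset A$ (or $B$).} Here the naive projection estimate only yields $\dim\mathbb S'\le 28$, so more input is needed; this is the step I expect to be the main obstacle. I would try, in order:
\begin{enumerate}
\item[(a)] Apply the Lie pair condition to further elements of $\mathfrak D$. These come from $\mathbb S'\cap\big((\Sigma_+\wedge\Sigma_-)\otimes(\Delta_+\wedge\Delta_-)\big)$-type combinations and from the $\Lambda^2\mathbb E$ and $\mathbb C\be_0\wedge\mathbb F$ targets in Proposition~\ref{prop:imgammauseful}. Their images must lie in $\fso(E^\perp)\oplus\stab_{\fso(E)}(\varphi)$. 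When $\rho=\lambda$, the Corollary after Proposition~\ref{prop:imgamma} further cuts $\stab$ to $\fso(E^\perp)\oplus\langle\be_{12},\be_{34}+\be_{56}\rangle$.
\item[(b)] Use that $\fh=\gamma^\varphi(\mathfrak D)$ must stabilize $S'$. The components of $\fh$ in $\fso(E^\perp)$ and $\fso(\langle\be_3,\dots,\be_6\rangle)$ act on $A\otimes\Delta_\pm$, and requiring $S'$ to be stable should force $W_\pm$ to be either a line or all of $A$.
\item[(c)] If $W_\pm=A$, invoke \eqref{eq:symbolII}. Applying $\beta^\varphi_v+X_v$ with $v\in\langle\be_3,\dots,\be_6\rangle$ to $A\otimes\Delta_\pm$ produces components in $B\otimes\Delta_\pm$, contradicting $W_\pm\subset A$ unless these components are cancelled. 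That cancellation is impossible for dimension reasons once $\dim S'>26$.
\end{enumerate}

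\emph{Conclusion.} Combining these, $\dim(\Sigma_+\otimes\Delta)\cap\mathbb S'$ should be at most $2+2$ in the surviving configurations, and the projection argument of \S\ref{sec:5} then gives $\dim\mathbb S'\le 26$.
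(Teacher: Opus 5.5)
Your first two steps are correct and match the paper's first step. Since $\gamma^\varphi$ kills exactly $\mathbb C\be_0\wedge\langle\be_1,\be_2\rangle\wedge\Lambda^3\mathbb F$ on that summand, the $\Sigma_+$-factors of $(\Sigma_+\otimes\Delta_\pm)\cap\mathbb S'$ lie on a common line, or jointly in one $\be_{12}$-eigenspace ($\Upsilon$ or $\Omega$ in the paper's notation). Together with the bound $\dim\mathbb S'\le 24+\dim\big((\Sigma_i\otimes\Delta_j)\cap\mathbb S'\big)$, this reduces everything to the case where all four intersections have dimension $3$ or $4$.

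Your third step, which carries the whole difficulty, is not an argument.
\begin{itemize}
\item Items (a) and (b) exhibit no new element of $\mathfrak D$ and identify no element of $\mathfrak h$.
\item The dichotomy in (b) is not the relevant one: $W_\pm=\Upsilon$ is compatible with intersections of any dimension from $1$ to $4$.
\item Item (c) fails as stated. Condition \eqref{eq:symbolII} involves $X_v=\gamma^\varphi(\Sigma(v))$, which depends on a section $\Sigma$ of $\odot^2S'\to V$ that you cannot control without knowing $S'$. This is why the paper uses only the Lie pair condition $\gamma^\varphi(\mathfrak D)\subset\stab_{\fso(V)}(\varphi)\cap\stab_{\fso(V)}(S')$.
\item Even granting that $\beta^\varphi_v(s)+X_vs\in\mathbb S'$ has a component in $J\Omega\otimes\Delta_\pm$ (not $\Omega\otimes\Delta_\pm$, since $\beta^\varphi_v$ exchanges $\Sigma_\pm$), this contradicts nothing. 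Your constraint concerns only the intersections $\mathbb S'\cap(\Sigma_i\otimes\Delta_j)$, and $\mathbb S'$ is not known to be compatible with that decomposition. The claim that ``cancellation is impossible for dimension reasons'' has no support.
\end{itemize}

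The missing idea is to use the large intersections to enlarge $\mathfrak h=\gamma^\varphi(\mathfrak D)$ until stability under $\mathfrak h$ forces $\mathbb S'$ to split, and only then look for a forbidden value of $\gamma^\varphi$. The paper proceeds as follows.
\begin{itemize}
\item For $s\in(\Sigma_+\otimes\Delta_\pm)\cap\mathbb S'$ and $t\in(\Sigma_-\otimes\Delta_\pm)\cap\mathbb S'$, it shows $\gamma^\varphi(s,t)\in\langle\be_{12},\be_{34},\be_{56}\rangle\oplus\Lambda^2\mathbb F$, using that $s,t$ have opposite $\be_{12}$-eigenvalues.
\item It builds explicit combinations of such products $s\odot t$ in which the $\be_0$-components of $\kappa$ cancel. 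These give $\be_{34}+\be_{56}\in\mathfrak h$ in the $\Upsilon$-case with $\rho=\lambda$. There $\varphi$ annihilates $(\Upsilon\oplus J\Upsilon)\otimes\Delta$, so no $\Lambda^2\mathbb F$-component survives. In all other cases they give $\fso(F)\subset\mathfrak h$.
\item Consequently $\mathbb S'$ is compatible with the eigenspaces of $\be_{34}+\be_{56}$, resp.\ with $\Sigma\otimes\Delta_+\oplus\Sigma\otimes\Delta_-$; in the latter case (say the $\Upsilon$-case) $\mathbb S'\supset(\Upsilon\oplus J\Upsilon)\otimes\Delta$.
\item A dimension count ($\dim\mathbb S'>24$, resp.\ $>16$) produces an element $\omega_+\otimes\epsilon\in\mathbb S'$, resp.\ a nonzero element of $\mathbb S'\cap(\Omega\oplus J\Omega)\otimes\Delta_+$.
\item Pairing it with elements of $J\Upsilon\otimes\Delta_+$, resp.\ $\Upsilon\otimes\Delta_-$ or $J\Upsilon\otimes\Delta_-$, gives a Dirac-kernel element whose $\gamma^\varphi$-image has a nonzero component in $\Lambda^2\mathbb E$, resp.\ $\mathbb E\wedge\mathbb F$, outside $\stab_{\fso(V)}(\varphi)$.
\end{itemize}
Your plan contains none of this mechanism, and in particular misses the essential split between $\rho=\lambda$ and $\rho\neq\lambda$.
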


Again, we may assume that $(M,g,F)$ is highly supersymmetric, otherwise there is nothing to prove.
Then the geometric symbol $symb(M,g,F)=(\varphi,S')$ where 
$\varphi=\varphi_{(\rho,\lambda,\mu)}$ with $\mu=0$. {\it We will then have $\dim S'>16$ and $\mu=0$ tacitly from now.}

We depart with the following simple but very useful result.
\begin{Lemma}
\label{lemma:eigenvectorsbasis}
There exists a complex basis $\left\{\upsilon_+,\upsilon_-,\omega_+,\omega_-\right\}$ of $\Sigma_+$ consisting of eigenvectors for the Clifford action of $\be_{12}$, $\be_{34}$, $\be_{56}$ of $\Lambda^2E\cong \mathfrak{so}(E)$ with eigenvalues as follows:
\vskip0.1cm\par\noindent
{\footnotesize
\begin{align*}
 \begin{array}{|c|c|c|c|c|} \hline
\text{Basis elements of $\Sigma_{+}$}  &\text{Eigenvalues of $\be_{12}$} & \text{Eigenvalues of $\be_{34}$} & \text{Eigenvalues of $\be_{56}$} & \text{Eigenvalues of $\varphi$}\\ \hline\hline
 \upsilon_+& +I & +I & -I &-\rho+\lambda\\ \hline
 \upsilon_- & +I & -I &  +I&\rho-\lambda\\ \hline
 \omega_+& -I & +I &  +I&\rho+\lambda\\ \hline
 \omega_-&	-I & -I &  -I&-\rho-\lambda\\ \hline
 \end{array}
 \end{align*}
}
\vskip0.1cm\par\noindent
Moreover the basis may be assumed orthonormal w.r.t. Hermitian form $if_J$ of Corollary \ref{cor:hermitianforms}.
\end{Lemma}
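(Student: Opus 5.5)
The plan is to diagonalize a maximal torus of $\mathfrak{so}(E)$ on $\Sigma_+$, read off the weights, and then choose an orthonormal weight basis.

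\textbf{Step 1: simultaneous diagonalization.} The elements $\be_{12},\be_{34},\be_{56}$ pairwise commute in $\Cl(E)$, since they involve disjoint pairs of basis vectors. Each squares to $-1$ in the Clifford algebra, because $\be_{12}\cdot\be_{12}=-\be_1^2\be_2^2=-1$ when $\be_i^2=1$ in the conventions $\be_i\cdot\be_i=-2\eta_{ii}/2$ with $\eta_{ii}=-1$. They also preserve $\Sigma_\pm$, being even elements. Under the complex structure $I$, their eigenvalues are therefore $\pm I$, and they can be simultaneously diagonalized on the $4$-dimensional complex space $\Sigma_+$.

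\textbf{Step 2: identifying the weights.} The action of $\mathfrak{so}(E)$ is half the Clifford action, so these eigenvectors are weight vectors for the Cartan subalgebra. The weights of a complex semispinor module of $\mathfrak{so}(6)\cong\mathfrak{su}(4)$ are $\tfrac12(\pm1,\pm1,\pm1)$ with a fixed parity of minus signs. Hence the four sign patterns on $\Sigma_+$ are exactly the four patterns of a fixed parity, each with multiplicity one.

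To fix which parity occurs, use that the product $\be_{12}\be_{34}\be_{56}=\vol_E$ (up to the ordering sign $\be_{123456}$). Since $-\vol_E I$ acts as $\pm1$ on $\Sigma_\pm$, the product of the three eigenvalues is constant on $\Sigma_+$. With the normalization chosen (if necessary, swap the labels of $\Sigma_\pm$ or reverse the orientation of $I$), this constant is $I^3\cdot(\text{sign})$, giving an odd number of $-I$'s. This yields precisely the four patterns in the table.

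\textbf{Step 3: eigenvalues of $\varphi$.} Write $\varphi$ in Clifford terms as
\[
\varphi=\rho\,\be_{12}\be_{34}+\lambda\,\be_{12}\be_{56},
\]
using $\mu=0$ and the fact that decomposable products of orthogonal vectors coincide with wedges. On a weight vector with eigenvalues $(a,b,c)\in\{\pm I\}^3$, $\varphi$ acts by $\rho ab+\lambda ac$, where $I^2=-1$. For instance, for $\upsilon_+$ one gets $\rho(I)(I)+\lambda(I)(-I)=-\rho+\lambda$, and the other three rows follow in the same way. This is routine.

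\textbf{Step 4: orthonormality.} The Hermitian form $if_J$ is $\mathrm{Spin}(E)$-invariant, and each $\be_{ij}$ acts skew-Hermitianly with respect to it. Eigenvectors with distinct purely imaginary eigenvalue tuples are therefore orthogonal: from
\[
\langle Au,v\rangle=-\langle u,Av\rangle,
\]
we get $a_u=\bar a_v\cdot(-1)^{\ldots}$, which forces equality of the eigenvalues. Each weight space is one-dimensional, so it suffices to normalize each vector using the definiteness from Corollary~\ref{cor:hermitianforms}.

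\textbf{Main obstacle.} I expect the main difficulty to be the sign bookkeeping in Step 2, that is, pinning down the parity convention on $\Sigma_+$ and the sign of $I$ so that the table matches exactly. I would fix this by a direct check on one vector, for example the one annihilated by the appropriate complex lowering operators, in an explicit model of $\Cl(E)$.
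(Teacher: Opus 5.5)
Your proof is correct. Its outer structure matches the paper's:
\begin{itemize}
\item simultaneous diagonalization of the commuting operators $\be_{12},\be_{34},\be_{56}$, each squaring to $-1$;
\item the parity of the sign patterns fixed by $\vol_E$;
\item the eigenvalues of $\varphi$ read off from $\be_{1234}=\be_{12}\be_{34}$ and $\be_{1256}=\be_{12}\be_{56}$;
\item orthonormality from the $\mathfrak{so}(E)$-invariance of $if_J$.
\end{itemize}

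The genuine difference is how you show that each admissible sign pattern occurs exactly once on $\Sigma_+$. You import this from the weight theory of the half-spin representations of $\mathfrak{so}(6,\mathbb C)\cong\mathfrak{sl}_4(\mathbb C)$. That tacitly requires identifying $(\Sigma_+,I)$ as a complex half-spin module and matching $\be_{12},\be_{34},\be_{56}$ with the standard Cartan coordinates.

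The paper stays inside the Clifford algebra. It shows directly that the $\pm I$-eigenspaces of $\be_{12}$ on $\Sigma_+$ are both $2$-dimensional. Otherwise $\be_{12}$ would be scalar, or the even Clifford algebra of $\langle\be_3,\dots,\be_6\rangle$, which commutes with $\be_{12}$, would have to act on a $3$-dimensional eigenspace. It then splits further using the linear independence of the three operators and $\vol_E=I$ on $\Sigma_+$. Your route is quicker if the weight list is taken as known; the paper's is self-contained in the formalism it has just set up.

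The sign bookkeeping you flag as the main obstacle is in fact immediate. No relabelling of $\Sigma_\pm$ or reversal of $I$ is needed, and none would be allowed, since both are fixed by the paper's conventions. Literally $\vol_E=\be_1\cdots\be_6=\be_{12}\be_{34}\be_{56}$, with no ordering sign. Also $-\vol_E I=+1$ on $\Sigma_+$ gives $\vol_E=I$ there. Hence the joint eigenvalues $a,b,c\in\{\pm I\}$ satisfy $abc=I=-I^3$, which forces an odd number of $-I$'s: exactly the four rows of the table.

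Your Step 4 formula is garbled; the clean version, with $h=if_J$ antilinear in its second slot and $b\in i\mathbb R$, is $a\,h(u,v)=h(Au,v)=-h(u,Av)=-\bar b\,h(u,v)=b\,h(u,v)$. So $h(u,v)=0$ whenever $a\neq b$.
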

\begin{proof}
First note that the operators $\be_{12}$, $\be_{34}$, $\be_{56}$,  square to $-\Id$ and pairwise commute.
The operator $\be_{12}$ is thus diagonalizable, and the multiplicities of its eigenvalues $\pm I$ have to coincide (otherwise we get a contradiction: either $\be_{12}$ would be a multiple of the identity or the even part $\mathbb C(2)\oplus\mathbb C(2)$ of the complex Clifford algebra  generated by $\be_3,\ldots,\be_6$ would act on a $3$-dimensional eigenspace, thus trivially on a line).  
The claim then follows because $\be_{12}$, $\be_{34}$, $\be_{56}$ are linearly independent and $\vol_E=I$ on $\Sigma_+$, and then the eigenvalues of $\varphi$ are obtained immediately.

The last claim is immediate from Corollary \ref{cor:hermitianforms}, the equivariancy of $if_J$ w.r.t. $\mathfrak{so}(E)$, and the eigenvalue structure of $\be_{12},\be_{34},\be_{56}$ detailed above.
\end{proof}

Another nice consequence of Lemma \ref{lemma:eigenvectorsbasis} is the following. Setting 
\begin{align*}
\Upsilon&:=\langle\upsilon_+,\upsilon_-\rangle\;,\\
\Omega&:=\langle\omega_+,\omega_-\rangle\;,
\end{align*} we have $\Sigma_+=\Upsilon\oplus\Omega$, $\Sigma_-=J\Upsilon\oplus J\Omega$ with $f(\Upsilon,J\Omega)=f(\Omega,J\Upsilon)=0$. In particular $f$ is a non-degenerate pairing when restricted to $\Upsilon\otimes J\Upsilon$ and, respectively, to $\Omega\otimes J\Omega$.
\vskip0.4cm\par
We will split the proof of Theorem \ref{mainthm:2} into several steps.
\vskip0.4cm\par\noindent
\underline{\it First step}
First we consider non-zero elements $s^\pm\in (\Sigma_+\otimes\Delta_\pm)\cap\mathbb S'$ as in 
\eqref{eq:first-step-elements} and use \eqref{eq:crucialevaluation}. In this case, the restriction of $\gamma^\varphi$ to $\mathbb C\be_0\wedge\Lambda^1\mathbb E\wedge\Lambda^3\mathbb F$ has kernel $\mathbb C\be_0\wedge\langle\be_1,\be_2\rangle\wedge\Lambda^3\mathbb F$, which is nothing but the fixed point set of $A:=\be_{34}+\be_{56}\in\mathfrak{so}(\mathbb E)$. Therefore \eqref{eq:crucialevaluation} vanishes if and only if
$A\big(\widetilde\kappa_{\mathbb E}(\sigma_i,\varsigma_j)\big)=0$ for all $i,j=1,2$, which in turn reads
\vskip0.2cm\par\noindent
$$
A(\sigma_i\wedge\varsigma_j)=0\quad\text{for all}\;i,j=1,2\;,
$$
\vskip0.2cm\par\noindent
since 
$\widetilde\kappa_{\mathbb E}:\Lambda^2\Sigma_+\to\mathbb E$ is an isomorphism. 

The kernel of $A$ acting on $\Lambda^2\Sigma_+$ is $\langle\upsilon_+\wedge\upsilon_-,\omega_+\wedge\omega_-\rangle$ thanks to Lemma \ref{lemma:eigenvectorsbasis} and we just showed that the {\it decomposable} $\sigma_i\wedge\varsigma_j$ is an element therein. Therefore $\sigma_i\wedge\varsigma_j$ vanishes or it is either proportional to $\upsilon_+\wedge\upsilon_-$ or proportional to $\omega_+\wedge\omega_-$, for all $i,j=1,2$. We thus reach the following dichotomy: we either have
\begin{equation}
\label{eq:firsttype}
\begin{aligned}
(\Sigma_+\otimes\Delta_\pm)\cap\mathbb S'&\subset \Upsilon\otimes\Delta_\pm\;,\\
(\Sigma_-\otimes\Delta_\pm)\cap\mathbb S'&\subset J\Upsilon\otimes\Delta_\pm\;,
\end{aligned}
\end{equation}
or
\begin{equation}
\label{eq:secondtype}
\begin{aligned}
(\Sigma_+\otimes\Delta_\pm)\cap\mathbb S'&\subset \Omega\otimes\Delta_\pm\;,\\
(\Sigma_-\otimes\Delta_\pm)\cap\mathbb S'&\subset J\Omega\otimes\Delta_\pm\;,
\end{aligned}
\end{equation}
where we also used that $\mathbb S'$ is invariant under conjugation $c$.
\vskip0.4cm\par\noindent
\underline{\it Second step}
We now consider two elements $s\in(\Sigma_+\otimes\Delta_\pm)\cap\mathbb S'$ and $t\in (\Sigma_-\otimes\Delta_\pm)\cap\mathbb S'$ and set to establish that $\gamma^\varphi(s,t)\in\langle\be_{12},\be_{34},\be_{56}\rangle\oplus\Lambda^2 \mathbb F$.
(We note that $\kappa(s,t)\in\mathbb C\be_0$ is not necessarily zero; however, this fact does not play any role  here.)

Thanks to Proposition \ref{prop:imgamma} and Proposition \ref{prop:imgammauseful}, it is sufficient to show that $\omega^{(2)}(s,t)$ lands in the kernel of the composition 
\begin{equation}
\label{eq:restrictionmap}
\pi\circ\gamma^\varphi|_{\Lambda^2\mathbb E}:\Lambda^2\mathbb E\longrightarrow\gamma^\varphi(\Lambda^2\mathbb E)\longrightarrow\frac{\gamma^\varphi(\Lambda^2\mathbb E)}{\gamma^\varphi(\Lambda^2\mathbb E)\cap\stab_{\mathfrak{so}(\mathbb E)}(\varphi)}
\end{equation}
of the restriction of the map $\gamma^\varphi$ to $\Lambda^2\mathbb E$ together with the natural projection from $\gamma^\varphi(\Lambda^2\mathbb E)$ to $\gamma^\varphi(\Lambda^2\mathbb E)\cap\stab_{\mathfrak{so}(\mathbb E)}(\varphi)=\langle\be_{12},\rho\be_{34}+\lambda\be_{56}\rangle$. The kernel of \eqref{eq:restrictionmap} consists of the kernel of $\gamma^\varphi|_{\Lambda^2\mathbb E}$ as detailed in (1) of Proposition \ref{prop:imgamma} together with $\rho\be_{34}+\lambda\be_{56}$, $\be_{12}$. All in all, we get $\langle\be_{35}, \be_{36},\be_{45},\be_{46},\be_{34},\be_{56},\be_{12}\rangle$, which is precisely the centralizer  of $\be_{12}$ in $\Lambda^2\mathbb E$. Since $s$ and $t$ are eigenvectors for $\be_{12}$ relative to opposite eigenvalues (thanks to \eqref{eq:firsttype}-\eqref{eq:secondtype} of the first step, Lemma \ref{lemma:eigenvectorsbasis}, and the fact that $I$ and $J$ anticommute), the claim of the second step is proved.
\vskip0.4cm\par\noindent
\underline{\it Third step}
If $\dim\big((\Sigma_i\otimes\Delta_j)\cap\mathbb S'\big)\leq 2$ for some $i,j\in\left\{+,-\right\}$, 
then $\dim\mathbb S'\leq 26$, arguing as in the beginning of the proof of Theorem \ref{mainthm:1}. In view of 
\eqref{eq:firsttype}-\eqref{eq:secondtype} as well, 
\vskip0.4cm\par\noindent
\begin{equation}
\label{eq:assumption-star}
\tag{$\star$}
\text{\it We assume from now on that}\;\; 3\leq\dim\big((\Sigma_i\otimes\Delta_j)\cap\mathbb S'\big)\leq 4\;\;\text{\it for all}\;\; 
i,j\in\left\{+,-\right\}
\end{equation}
\vskip0.5cm\par\noindent
\underline{\it Fourth step}
{\bf We first study the case \eqref{eq:firsttype} when $\rho=\lambda$} (so we assume w.l.o.g. $\rho=\lambda=1$). This separate analysis is due to the fact that
$\varphi$ acts trivially on $(\Upsilon\oplus J\Upsilon)\otimes\Delta$, according to Lemma \ref{lemma:eigenvectorsbasis}. Therefore, from \eqref{lem:reformulation} of Lemma \ref{lem:mitoccaaggiungere}, we see that
$
\eta(\gamma^\varphi(s,t)v,w)=
\tfrac12\langle s,\imath_v\imath_w\varphi\cdot t\rangle
$
for all $s,t$ in the spaces on the left of \eqref{eq:firsttype}. 

Thanks to the second step, the only possible 
non-zero contributions may occur when $s\in(\Sigma_+\otimes\Delta_\pm)\cap\mathbb S'$, $t\in (\Sigma_-\otimes\Delta_\pm)\cap\mathbb S'$, and $v\wedge w\in\left\{\be_{12},\be_{34},\be_{56}\right\}$:
\vskip0.2cm\par\noindent
\begin{enumerate}
\item If $v\wedge w=\be_{12}$, then $\imath_v\imath_w\varphi=-(\be_{34}+\be_{56})$ and $\imath_v\imath_w\varphi\cdot t=0$ again by Lemma \ref{lemma:eigenvectorsbasis};
\item If $v\wedge w=\be_{34}$, then $\imath_v\imath_w\varphi=-\be_{12}$ and 
\begin{equation}
\begin{aligned}
\label{eq:nonlosofran}
\eta(\gamma^\varphi(s,t)\be_3,\be_4)&=-\tfrac12\langle s,\be_{12}\cdot t\rangle=\tfrac12\langle s,I\cdot t\rangle\\
&=i\langle s,t\rangle\;,
\end{aligned}
\end{equation}
\item If $v\wedge w=\be_{56}$, then $\eta(\gamma^\varphi(s,t)\be_5,\be_6)=i\langle s,t\rangle$ exactly as above.
\end{enumerate}
\vskip0.2cm\par\noindent
In summary the only possible non-zero output of $\gamma^\varphi$ so far is given by the element $\be_{34}+\be_{56}$. Since
$3\leq\dim\big((\Sigma_i\otimes\Delta_j)\cap\mathbb S'\big)\leq 4$ for all $i,j\in\left\{+,-\right\}$, 
this can be achieved choosing
$s^\pm\in (\Sigma_+\otimes\Delta_\pm)\cap\mathbb S'$ and $t^\pm\in(\Sigma_-\otimes\Delta_\pm)\cap\mathbb S'$ so that
\vskip0.2cm\par\noindent
\begin{equation}
\begin{aligned}
\langle s^+,t^+\rangle&=-i\Longrightarrow\gamma^\varphi(s^+,t^+)=\be_{34}+\be_{56}\;,\\
\langle s^-,t^-\rangle&=-i\Longrightarrow\gamma^\varphi(s^-,t^-)=\be_{34}+\be_{56}\;.
\end{aligned}
\end{equation}
\vskip0.2cm\par\noindent
Moreover, since $\langle-,-\rangle=f\otimes f$ while the $\mathbb C\be_0$-component of the Dirac current is $f_{\vol_E}\otimes f_{\vol_F}$, we can check that $s^+\odot t^+ + s^-\odot t^-$ is an element of the (complexified) Dirac kernel $\mathfrak D\otimes_{\mathbb R}\mathbb C$.
We thus arrived at the following rather rewarding result.
\begin{Proposition}
\label{prop:e34+e56}
If $\mu=0$, \eqref{eq:firsttype} holds with $\rho=\lambda$, and \eqref{eq:assumption-star} is in force, then the stabilizer Lie algebra $\mathfrak h$ includes $\be_{34}+\be_{56}$. In particular, $\mathbb S'$ is compatible with the eigenspace decomposition of $\be_{34}+\be_{56}$ on $\mathbb S$. 
\end{Proposition}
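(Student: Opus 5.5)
The plan is to produce an explicit element $\omega$ of the complexified Dirac kernel $\mathfrak D\otimes_{\mathbb R}\mathbb C$ with $\gamma^\varphi(\omega)=\be_{34}+\be_{56}$. We then use Proposition \ref{prop:LiePair}, which gives $\fh=\fh_{(\varphi,S')}=\gamma^\varphi(\mathfrak D)$. Since $\be_{34}+\be_{56}$ is real and $\gamma^\varphi$, $\kappa$ commute with the conjugation $c$, the real part of $\omega$ (or $\tfrac12(\omega+c\omega)$) lies in $\mathfrak D$ and still maps to $\be_{34}+\be_{56}$. This gives the first claim. The second claim is then formal: $\fh\subset\stab_{\fso(V)}(S')$ because $(\varphi,S')$ is a Lie pair (Definition \ref{def:Lie pair}), so $\mathbb S'$ is invariant under the Clifford action of $\be_{34}+\be_{56}$. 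This operator is semisimple, being skew with respect to $\langle-,-\rangle$ on the real form, with eigenvalues in $\{0,\pm 2I\}$ up to the factor $\tfrac12$. An invariant subspace therefore splits as the direct sum of its intersections with the eigenspaces.

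For the construction, we take $s^\pm\in(\Sigma_+\otimes\Delta_\pm)\cap\mathbb S'$ and $t^\pm\in(\Sigma_-\otimes\Delta_\pm)\cap\mathbb S'$, which by \eqref{eq:firsttype} lie in $\Upsilon\otimes\Delta_\pm$ and $J\Upsilon\otimes\Delta_\pm$ respectively. Since $\varphi$ kills $(\Upsilon\oplus J\Upsilon)\otimes\Delta$ when $\rho=\lambda$ (Lemma \ref{lemma:eigenvectorsbasis}), formula \eqref{lem:reformulation} reduces $\gamma^\varphi(s,t)$ to $\tfrac12\langle s,\imath_v\imath_w\varphi\cdot t\rangle$. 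The second step already restricts the output to $\langle\be_{12},\be_{34},\be_{56}\rangle\oplus\Lambda^2\mathbb F$. Since $\imath_v\imath_w\varphi\in\Lambda^2\mathbb E$, the $\Lambda^2\mathbb F$ and mixed components vanish. The case analysis (1)--(3) then shows the only surviving value is $i\langle s,t\rangle(\be_{34}+\be_{56})$.

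The remaining task is nondegeneracy: we must show $\langle-,-\rangle$ pairs $(\Sigma_+\otimes\Delta_+)\cap\mathbb S'$ nontrivially with $(\Sigma_-\otimes\Delta_+)\cap\mathbb S'$, and likewise for $\Delta_-$. This is the step I expect to be the main obstacle. Here we use \eqref{eq:assumption-star}: each intersection has dimension $\ge3$ inside the $4$-dimensional spaces $\Upsilon\otimes\Delta_\pm$ and $J\Upsilon\otimes\Delta_\pm$. The pairing $f\otimes f$ between $\Upsilon\otimes\Delta_\pm$ and $J\Upsilon\otimes\Delta_\pm$ is nondegenerate, since $f$ pairs $\Upsilon$ with $J\Upsilon$ nondegenerately and $f$ on $\Delta$ has $\imath=+1$, so it pairs $\Delta_\pm$ with itself. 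Two subspaces of dimension $\ge3$ in dual $4$-dimensional spaces cannot annihilate each other, because the annihilator of a $3$-dimensional subspace is $1$-dimensional. Hence suitable $s^\pm,t^\pm$ with $\langle s^\pm,t^\pm\rangle=-i$ exist.

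Finally, we check that $\omega=s^+\odot t^++s^-\odot t^-$ is in the Dirac kernel. By \eqref{eq:Diraccurrentbroken} and the $\imath$-invariants, each $\kappa(s^\pm,t^\pm)$ only has a $\mathbb C\be_0$-component, equal to $(f_{\vol_E}\otimes f_{\vol_F})(s^\pm,t^\pm)$. On $\Sigma_+$, $\vol_E$ acts by $I$ times a sign, and $\vol_F=\pm1$ on $\Delta_\pm$. So this component equals $\pm\,\epsilon\, i\langle s^\pm,t^\pm\rangle$, with opposite signs for the two choices $\Delta_+$ and $\Delta_-$. Normalizing both pairings to $-i$ then makes the two contributions cancel, so $\kappa(\omega)=0$, while $\gamma^\varphi(\omega)=2(\be_{34}+\be_{56})$. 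Rescaling $\omega$ gives the required element.
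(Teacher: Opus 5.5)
Your proposal follows the paper's fourth-step argument: you reduce $\gamma^\varphi(s,t)$ via \eqref{lem:reformulation} to $\tfrac12\langle s,\imath_v\imath_w\varphi\cdot t\rangle$ and obtain $i\langle s,t\rangle(\be_{34}+\be_{56})$, then take $s^+\odot t^++s^-\odot t^-$ with both pairings normalized to $-i$ so that the $\be_0$-components of $\kappa$ cancel; your nondegeneracy count (annihilator of a $3$-dimensional subspace is $1$-dimensional) and the passage from $\mathfrak D\otimes_{\mathbb R}\mathbb C$ back to the real $\mathfrak D$ correctly fill in details the paper leaves implicit. One small fix is needed in the second claim: being skew with respect to the symplectic form $\langle-,-\rangle$ does not imply semisimplicity, since $\mathfrak{sp}$ contains nilpotent elements; justify diagonalizability of $\be_{34}+\be_{56}$ instead by noting that $\be_{34}$ and $\be_{56}$ are commuting operators squaring to $-\Id$ (cf. Lemma \ref{lemma:eigenvectorsbasis}), or by compactness of $\mathfrak{so}(E)$.
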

\vskip0.4cm\par\noindent
\underline{\it Fifth step}
The eigenspaces of $\be_{34}+\be_{56}$ on $\mathbb S$ are the $16$-dimensional $\mathbb S_0:=(\Upsilon\oplus J\Upsilon)\otimes\Delta$ (w.r.t. eigenvalue $0$), and the $8$-dimensional $\mathbb S_{\pm 2i}:=\langle\omega_\pm,J\omega_\mp\rangle\otimes\Delta$
(w.r.t. eigenvalue $\pm 2I$). We note that $\mathbb S_{\pm 2i}=\mathbb S_{\pm 2i,+}\oplus \mathbb S_{\pm 2i,-}$w.r.t. to the decomposition $\Sigma=\Sigma_+\oplus\Sigma_-$, where
\vskip0.2cm\par\noindent
\begin{equation}
\label{eq:decompfinal}
\mathbb S_{\pm 2i,+}:=\mathbb C\omega_\pm\otimes\Delta\;,\qquad \mathbb S_{\pm 2i,-}:=\mathbb C J\omega_\mp\otimes\Delta\;.
\end{equation}
\vskip0.2cm\par\noindent
If we assume that $\dim\mathbb S'> 24$, then 
Proposition \ref{prop:e34+e56} and the fact that $\mathbb S'$ is invariant under the conjugation $c$ tell us that 
$\dim(\mathbb S'\cap \mathbb S_{\pm 2i} )\geq 5$. However both spaces in \eqref{eq:decompfinal} are $4$-dimensional, so $\dim(\mathbb S'\cap \mathbb S_{\pm 2i,+} )\geq 1$ and $\dim(\mathbb S'\cap \mathbb S_{\pm 2i,-} )\geq 1$. We may then consider $s=\omega_+\otimes\epsilon\in\mathbb S'$, for some $\epsilon=\epsilon^++\epsilon^-\in\Delta=\Delta_+\oplus\Delta_-$, with, say, $\epsilon^+\neq 0$. It is convenient to write $s=s^++s^-$, where $s^\pm=\omega_+\otimes\epsilon^\pm$.

Now take
$t\in (\Sigma_-\otimes\Delta_+)\cap\mathbb S'\subset J\Upsilon\otimes\Delta_+$ and note that
\vskip0.2cm\par\noindent
$$\gamma^\varphi(s^-,t)=\kappa(s^-,t)=\kappa(s^+,t)=0\;,$$
\vskip0.2cm\par\noindent
because the element $s^\pm\odot t$ is an eigenvector for the operator $\be_{34}+\be_{56}$ with eigenvalue $2I$, while $\gamma^\varphi(s^-,t)\in\mathbb C\be_0\wedge\mathbb F$, $\kappa(s^-,t)\in\mathbb F$, and $\kappa(s^+,t)\in\mathbb C\be_0$. Thus
$s\odot t$ is an element of the Dirac kernel and
\vskip0.2cm\par\noindent
\begin{equation*}
\begin{aligned}
\gamma^\varphi(s,t)&=\gamma^\varphi(s^+,t)\in\Lambda^2\mathbb E\oplus\Lambda^2\mathbb F\;,\\
\end{aligned}
\end{equation*}
\vskip0.2cm\par\noindent
with the component in $\Lambda^2\mathbb F$ absent, again by the eigenvalue structure of the operator $\be_{34}+\be_{56}$. 
Finally
\vskip0.2cm\par\noindent
\begin{enumerate}
	\item we may choose $t$ so that
$\omega^{(2)}_{\mathbb E}\otimes f$ evaluated on $s^+\odot t$ is non-zero (this is true because $\dim\big((\Sigma_-\otimes\Delta_+)\cap\mathbb S'\big)\geq 3$ and $\omega^{(2)}_{\mathbb E}$ does not vanish on the decomposable tensors),
	\item the element $s^+\odot t$ is an eigenvector for the operator $\be_{12}$ with the eigenvalue $-2I$, 
\end{enumerate}  
\vskip0.2cm\par\noindent
so $s^+\odot t$ does not belong to the kernel of the map \eqref{eq:restrictionmap}. Summing things up, we obtained an element $s\odot t$ inside the Dirac kernel with the property that $\gamma^\varphi(s,t)$ does {\it not} stabilize $\varphi$.
As already advertised, {\it this is a contradiction and it shows that $\dim\mathbb S'\leq 24$}.
\vskip0.4cm\par\noindent
\underline{\it Sixth step}
{\bf We study the case \eqref{eq:firsttype} when $\rho\neq\lambda$ and the case \eqref{eq:secondtype} simultaneously}. 
We note that Clifford multiplication by $\varphi$ is injective on   
$(\Sigma_i\otimes\Delta_j)\cap\mathbb S'$ for all $i,j\in\left\{+,-\right\}$.
For concreteness of exposition, we will now treat the case \eqref{eq:firsttype} when $\rho\neq\lambda$, but all of our arguments work equally well in the case \eqref{eq:secondtype}.

We consider two elements $s\in(\Sigma_+\otimes\Delta_\pm)\cap\mathbb S'$ and $t\in (\Sigma_-\otimes\Delta_\pm)\cap\mathbb S'$ as in the second step and focus on the component in $\Lambda^2\mathbb F$ of $\gamma^\varphi(s,t)\in\langle\be_{12},\be_{34},\be_{56}\rangle\oplus\Lambda^2\mathbb F$. Now $\mathfrak{so}(\mathbb F)\cong\Lambda^2 \mathbb F$ and the decomposition $\Lambda^2 \mathbb F=\Lambda^2 _+ \mathbb F\oplus\Lambda^2_-\mathbb F$ into self-dual and anti-self-dual forms corresponds to the obvious splitting of $\mathfrak{so}(\mathbb F)$ into two ideals. By applying \eqref{lem:reformulation} of Lemma \ref{lem:mitoccaaggiungere} with $v,w\in\mathbb F$, we see
$\eta(\gamma^\varphi(s,t)v,w)=-\tfrac{1}{12}\langle\varphi\cdot s,v\wedge w \cdot t\rangle
-\tfrac{1}{12}\langle\varphi\cdot t,v\wedge w \cdot s\rangle
=-\tfrac{1}{6}\langle\varphi\cdot s,v\wedge w \cdot t\rangle
$
and so
\vskip0.2cm\par\noindent
\begin{equation}
\label{eq:Lambda^2E}
\gamma^\varphi(s,t)=-\tfrac{1}{6}\big(f_{\varphi}\otimes\omega^{(2)}_{\mathbb F}\big)(s,t)\;,
\end{equation} 
\vskip0.2cm\par\noindent
where $f_\varphi$ is the composition of $f$ with $\varphi$ in the second entry, with a little abuse of notation.
Now the map $\omega^{(2)}_{\mathbb F}:\odot^2\Delta_\pm\to\Lambda^2_\pm\mathbb F$ is an isomorphism, whereas $f_\varphi$ is a pairing of $\Upsilon$ with $J\Upsilon$ (as advertised above, here we are using in a crucial way that $\rho\neq \lambda$).
\vskip0.2cm\par\noindent

The goal of this step is to prove the following result, which can be regarded as the substitute of 
the former Proposition \ref{prop:e34+e56}.
\begin{Proposition}
\label{prop:8}
If $\mu=0$, \eqref{eq:firsttype} holds with $\rho\neq\lambda$, and \eqref{eq:assumption-star} is in force, then the stabilizer Lie algebra $\mathfrak h$ includes the whole $\mathfrak{so}(F)$.
\end{Proposition}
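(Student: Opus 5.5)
Under \eqref{eq:firsttype} with $\rho\neq\lambda$ and \eqref{eq:assumption-star}, I will produce enough elements of the complexified Dirac kernel $\mathfrak D\otimes_{\mathbb R}\mathbb C$ whose images under $\gamma^\varphi$ span $\Lambda^2\mathbb F=\Lambda^2_+\mathbb F\oplus\Lambda^2_-\mathbb F$. Since $\mathfrak h=\gamma^\varphi(\mathfrak D)$ by Proposition \ref{prop:LiePair}, and both $\mathfrak h$ and $\mathfrak{so}(F)$ are real forms compatible with $c$, this gives $\mathfrak{so}(F)\subset\mathfrak h$.

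The candidates are combinations $\sum_k s_k\odot t_k$ with $s_k\in(\Sigma_+\otimes\Delta_\pm)\cap\mathbb S'\subset\Upsilon\otimes\Delta_\pm$ and $t_k\in(\Sigma_-\otimes\Delta_\pm)\cap\mathbb S'\subset J\Upsilon\otimes\Delta_\pm$, taken with the same sign $\pm$. By Proposition \ref{prop:imgammauseful}, such products lie in $(\Sigma_+\odot\Sigma_-)\otimes(\Delta_\pm\odot\Delta_\pm)$, whose Dirac current vanishes identically. So every such combination already lies in the Dirac kernel, with no balancing needed, unlike the fourth step. By the second step, $\gamma^\varphi(s,t)\in\langle\be_{12},\be_{34},\be_{56}\rangle\oplus\Lambda^2\mathbb F$, and by \eqref{eq:Lambda^2E} its $\Lambda^2\mathbb F$-component is $-\tfrac16 (f_\varphi\otimes\omega^{(2)}_{\mathbb F})(s,t)$.

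The core step is nondegeneracy. Because $\rho\neq\lambda$, Lemma \ref{lemma:eigenvectorsbasis} shows $\varphi$ acts invertibly on $\Upsilon$ (eigenvalues $\pm(\rho-\lambda)$), so $f_\varphi$ is a nondegenerate pairing $\Upsilon\otimes J\Upsilon\to\mathbb C$. Under \eqref{eq:assumption-star}, $W^\pm_+:=(\Sigma_+\otimes\Delta_\pm)\cap\mathbb S'$ has codimension at most one in the $4$-dimensional $\Upsilon\otimes\Delta_\pm$, and likewise $W^\pm_-$ in $J\Upsilon\otimes\Delta_\pm$. The bilinear map
\[
W^\pm_+\times W^\pm_-\longrightarrow \odot^2\Delta_\pm\cong\Lambda^2_\pm\mathbb F,\qquad (s,t)\mapsto (f_\varphi\otimes\omega^{(2)}_{\mathbb F})(s,t)
\]
should then be onto the $3$-dimensional target. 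To see this, identify $\Upsilon\otimes\Delta_\pm\cong\mathbb C^2\otimes\mathbb C^2$, viewing $s$ as a $2\times2$ matrix $M_s$, and $t$ similarly via the $f_\varphi$-dual basis. The pairing then becomes the symmetrization of $M_s^{T}M_t$, viewed as a symmetric form on $\Delta_\pm^*$. Each of $W^\pm_\pm$ is a hyperplane or all of $\mathbb C^4$, so each contains a $2$-dimensional space of invertible matrices (or contains everything). Choosing $s$ invertible and letting $t$ range over a hyperplane, the products $\mathrm{sym}(M_s^TM_t)$ span at least a $3$-dimensional space, because the map $t\mapsto\mathrm{sym}(M_s^TM_t)$ from $4$ to $3$ dimensions has a $1$-dimensional kernel (the antisymmetric part), and a hyperplane could lose at most that. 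If the hyperplane happens to be exactly that kernel's complement issue, a second choice of $s$ fixes it. This dimension count is the main obstacle, and I would verify it by direct matrix computation.

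The stray components in $\langle\be_{12},\be_{34},\be_{56}\rangle$ need to be removed. I would first try to show they vanish for $s,t$ both in the $\Delta_\pm$ sector, using \eqref{lem:reformulation} with $v\wedge w=\be_{12},\be_{34},\be_{56}$; here $\imath_v\imath_w\varphi$ and $\varphi$ act diagonally on $\Upsilon$ with computable eigenvalues. If they do not vanish, I would note they lie in $\mathfrak h\cap\mathfrak{so}(E)$. Since $\mathfrak h$ is a Lie algebra and $[\Lambda^2\mathbb F,\langle\be_{12},\ldots\rangle]=0$, I would then take brackets inside $\mathfrak h$: $[\mathfrak h,\mathfrak h]$ has $\Lambda^2\mathbb F$-part equal to the commutators of the projections, and since $\Lambda^2_\pm\mathbb F\cong\mathfrak{sl}_2$ is perfect, this already yields the full $\Lambda^2_\pm\mathbb F$ inside $\mathfrak h$. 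Doing this for both signs gives $\mathfrak{so}(\mathbb F)\subset\mathfrak h\otimes\mathbb C$, hence $\mathfrak{so}(F)\subset\mathfrak h$.
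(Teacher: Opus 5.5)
There is a genuine gap in your second paragraph. The products $s\odot t$ with $s\in(\Sigma_+\otimes\Delta_\pm)\cap\mathbb S'$ and $t\in(\Sigma_-\otimes\Delta_\pm)\cap\mathbb S'$ do \emph{not} lie only in $(\Sigma_+\odot\Sigma_-)\otimes(\Delta_\pm\odot\Delta_\pm)$. For $s=\sigma\otimes\epsilon$ and $t=\tau\otimes\epsilon'$, the bispinor $s\odot t$ also has the component $(\sigma\wedge\tau)\otimes(\epsilon\wedge\epsilon')\in(\Sigma_+\wedge\Sigma_-)\otimes(\Delta_\pm\wedge\Delta_\pm)$. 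In the diagram of Proposition~\ref{prop:imgammauseful} this summand is sent by $\kappa$ to $\mathbb C\be_0$ through $f_{\vol_E}\otimes f_{\vol_F}$, which pairs $\Sigma_+$ with $\Sigma_-$ and $\Delta_\pm$ with itself, and is generically nonzero.

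This is exactly the remark in the second step that $\kappa(s,t)\in\mathbb C\be_0$ need not vanish. It is also why the fourth step, built from the very same kind of elements, has to use the balanced sum $s^+\odot t^+ + s^-\odot t^-$. The paper's Dirac current table confirms it, e.g.\ $\kappa(w_1,c(w_1))\propto(1+|a|^2)\be_0\neq 0$. (That same $\wedge$-component is what produces the stray $\langle\be_{12},\be_{34},\be_{56}\rangle$ terms you worried about.) So your elements are not in $\mathfrak D\otimes_{\mathbb R}\mathbb C$, and nothing guarantees that their $\gamma^\varphi$-images lie in $\mathfrak h$.

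Repairing this changes the argument. Imposing the single linear condition $\kappa=0$ on the span of $W^\pm_+\odot W^\pm_-$ can cost one dimension of the image in $\Lambda^2_\pm\mathbb F$. Perfectness of $\mathfrak{sl}_2(\mathbb C)$ is then not enough, since a $2$-dimensional subspace may be a Borel subalgebra. The paper handles this with explicit balanced bispinors such as $\frac{w_1\otimes c(w_1)}{1+|a|^2}-\frac{w_2\otimes c(w_2)}{1+|b|^2}$ and a case split on $a,b,c$. This produces $H$ and some $E+e^{i\theta}F$ modulo $\langle\be_{12},\be_{34},\be_{56}\rangle$, which do generate $\Lambda^2_+\mathbb F$.

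Within your framework you could instead note that the balanced space is $c$-stable, since $c$ swaps $W^\pm_+$ and $W^\pm_-$ and $\kappa$ is real. Its image under $\gamma^\varphi$, projected to $\Lambda^2_\pm\mathbb F$, is then the complexification of a real subspace of dimension at least $2$ of $\mathfrak{so}(F)\cap\Lambda^2_\pm\mathbb F\cong\mathfrak{su}(2)$. Since $\mathfrak{su}(2)$ has no $2$-dimensional subalgebras, your bracket argument then finishes.

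That repair still needs the unbalanced image to be all of $\Lambda^2_\pm\mathbb F$, and your count there is incomplete. For fixed $s$, the kernel of $t\mapsto\mathrm{sym}(M_s^TM_t)$ is spanned by $J_0M_s$, with $J_0$ the standard $2\times 2$ symplectic matrix. So a hyperplane $W^\pm_-$ may give only a $2$-dimensional image, and if $W^\pm_-=J_0W^\pm_+$ this happens for every $s$ at once, so ``a second choice of $s$'' does not help. Excluding this requires the relation $W^\pm_-=c(W^\pm_+)$, which your argument never uses, or an explicit computation like the paper's tables.
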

\begin{proof}
By Corollary \ref{cor:hermitianforms}, we may fix a basis $\left\{\epsilon_1^+,\epsilon_2^+\right\}$ of $\Delta_{+}$ with the property that $$
\epsilon_2^+=J\epsilon_1^+\qquad\text{and}\qquad f(\epsilon_1^+,\epsilon_2^+)=+1\;.$$ 
In particular the isomorphism $\odot^2\Delta_+\cong\Lambda^2_+\mathbb F\cong\mathfrak{sl}_2(\mathbb C)$ reads as
\begin{equation*}
\begin{aligned}
\tfrac12\epsilon_1^+\epsilon_1^+&\cong E:=\begin{pmatrix}0 & 1 \\ 0 & 0\end{pmatrix}\;.\\
-\tfrac12\epsilon_2^+\epsilon_2^+&\cong F:=\begin{pmatrix}0 & 0 \\ 1 & 0\end{pmatrix}\;.\\
-\epsilon_1^+\epsilon_2^+&\cong H:=\begin{pmatrix}1 & 0 \\ 0 & -1\end{pmatrix}\;.
\end{aligned}
\end{equation*}
Since $3\leq\dim\big((\Sigma_+\otimes\Delta_+)\cap\mathbb S'\big)\leq 4$, we may assume, up to a reordering of the basis elements of $\Sigma_+$ and $\Delta_+$, that the space $(\Sigma_+\otimes\Delta_+)\cap\mathbb S'$ includes at least three vectors of the form
\begin{equation}
\label{eq:++elements}
\begin{aligned}
w_1&=\upsilon_+\otimes \epsilon_1^+ +a \upsilon_-\otimes\epsilon_2^+\;,\\
w_2&=\upsilon_+\otimes \epsilon_2^+ + b \upsilon_-\otimes \epsilon_2^+\;,\\
w_3&=\upsilon_-\otimes \epsilon_1^++c \upsilon_-\otimes\epsilon_2^+\;,
\end{aligned}
\end{equation}
for some $a,b,c\in\mathbb C$. Since $\mathbb S'$ is invariant under conjugation, the space $(\Sigma_-\otimes\Delta_+)\cap\mathbb S'$ includes
\begin{equation}
\label{eq:-+elements}
\begin{aligned}
c(w_1)&=J\upsilon_+\otimes \epsilon_2^+ -\overline{a} J\upsilon_-\otimes\epsilon_1^+\;,\\
c(w_2)&=-J\upsilon_+\otimes \epsilon_1^+ - \overline{b} J\upsilon_-\otimes \epsilon_1^+\;,\\
c(w_3)&=J\upsilon_-\otimes \epsilon_2^+ -\overline{c} J\upsilon_-\otimes\epsilon_1^+\;.
\end{aligned}
\end{equation}
Using \eqref{eq:Lambda^2E} and the action of $\varphi$ detailed in Lemma \ref{lemma:eigenvectorsbasis}, we may tabulate the contribution in $\mathfrak{so}(\mathbb F)\cong\Lambda^2\mathbb F$ arising from pairing elements \eqref{eq:++elements} with \eqref{eq:-+elements}.  For convenience of exposition, we omit the non-zero multiplicative factor $\tfrac{i}{6}(\rho-\lambda)$ that appears overall.
\vskip0.2cm\par\noindent
{\footnotesize
\begin{align*}
 \begin{array}{|c|c|c|c|} \hline
\Lambda^2_+\mathbb F  &c(w_1) & c(w_2)  & c(w_3)\\ \hline\hline
 w_1& \begin{gathered} \\ -(1+|a|^2) \epsilon_1^+\epsilon_2^+ \\ \\ \end{gathered}& 
\begin{gathered} 
\\
\epsilon_1^+\epsilon_1^+          
-a\overline{b} \epsilon_1^+\epsilon_2^+
\\
\\
\end{gathered}  &
\begin{gathered}
\\
a \epsilon_2^+\epsilon_2^+              
 -a\overline{c}   \epsilon_1^+  \epsilon_2^+ 
\\
\\
\end{gathered}
\\ \hline
 w_2& \begin{gathered} 
\\
-\epsilon_2^+\epsilon_2^+          
-\overline{a}b \epsilon_1^+\epsilon_2^+
\\
\\
\end{gathered} & \begin{gathered}
\\
(1- |b|^2)  \epsilon_1^+\epsilon_2^+
\\
\\
\end{gathered}
 &
\begin{gathered}\\
 b  \epsilon_2^+ \epsilon_2^+
-b\overline{c}    \epsilon_1^+\epsilon_2^+
\\
\\
\end{gathered}
\\ \hline
 w_3& \begin{gathered}
\\
-\overline{a} \epsilon_1^+\epsilon_1^+              
 -\overline{a}c   \epsilon_1^+  \epsilon_2^+ 
\\
\\
\end{gathered}	 & \begin{gathered}\\
 -\overline{b}  \epsilon_1^+ \epsilon_1^+
-\overline{b}c    \epsilon_1^+\epsilon_2^+
\\
\\
\end{gathered} &
\begin{gathered}
\\
(1-|c|^2)\epsilon_1^+\epsilon_2^+ 
-\overline{c} \epsilon_1^+\epsilon_1^+
+c\epsilon_2^+\epsilon_2^+ 
\\
\\
\end{gathered}\\ \hline
 \end{array}
 \end{align*}
}
\vskip0.2cm\par\noindent\par\noindent
Similarly, we provide the contribution coming from the Dirac current, which is proportional to $\be_0$. We write the various components along $\be_0$ up to an overall sign factor, for simplicity.

{\footnotesize
\begin{align*}
 \begin{array}{|c|c|c|c|} \hline
\kappa  &c(w_1) & c(w_2)  & c(w_3)\\ \hline\hline
 w_1& \begin{gathered} \\ 1+|a|^2 \\ \\ \end{gathered}& 
\begin{gathered} 
\\
a\overline{b} 
\\
\\
\end{gathered}  &
\begin{gathered}
\\
a\overline{c}   
\\
\\
\end{gathered}
\\ \hline
 w_2& \begin{gathered} 
\\
\overline{a}b 
\\
\\
\end{gathered} & \begin{gathered}
\\
1+ |b|^2  
\\
\\
\end{gathered}
 &
\begin{gathered}\\
 b\overline{c}    
\\
\\
\end{gathered}
\\ \hline
 w_3& \begin{gathered}
\\           
 \overline{a}c  
\\
\\
\end{gathered}	 & \begin{gathered}\\
\overline{b}c   
\\
\\
\end{gathered} &
\begin{gathered}
\\
1+|c|^2
\\
\\
\end{gathered}\\ \hline
 \end{array}
 \end{align*}
}
\par\noindent
The complex line generated by the bispinor $$\frac{w_1\otimes c(w_1)}{1+|a|^2}-\frac{w_2\otimes c(w_2)}{1+|b|^2}$$
is in the kernel of the Dirac current operator and its image in $\Lambda^2_+\mathbb F$ via the map \eqref{eq:Lambda^2E} is $\mathbb C\epsilon_1^+\epsilon_2^+$. If $ab=0$, then we may also consider the two bispinors $w_1\otimes c(w_2)$ and $w_2\otimes c(w_1)$, thus we get all of $\Lambda^2_+\mathbb F$. The same holds if $ab\neq 0$ and $c=0$, by considering the two bispinors $w_3\otimes c(w_1)$ and $w_1\otimes c(w_3)$. Finally, if $abc\neq 0$, the complex line generated by
$$\frac{w_3\otimes c(w_3)}{1+|c|^2}-\frac{w_1\otimes c(w_1)}{1+|a|^2}\;,$$
is in the kernel of the Dirac current and its image in $\Lambda^2_+\mathbb F$ is $\mathbb C(\overline{c}\epsilon_1^+\epsilon_1^+ -c\epsilon_2^+\epsilon_2^+)$.

According to the second step, we have therefore established that the stabilizer $\mathfrak h$ includes at least
$H,E+e^{i\theta}F$ for some $\theta\in\mathbb R$,
modulo $\langle\be_{12},\be_{34},\be_{56}\rangle\subset\Lambda^2\mathbb  E$.
Since $\mathfrak h$ is closed under Lie brackets, we see that $\mathfrak h$ includes exactly the whole $\Lambda^2_+\mathbb F$.

We may repeat the same argument using the spaces $(\Sigma_+\otimes\Delta_-)\cap\mathbb S'$ and $(\Sigma_-\otimes\Delta_-)\cap\mathbb S'$ to get $\mathfrak h\supset \Lambda^2_-\mathbb F$. This proves the proposition. 
\end{proof}
Since the proof of the following result is as for Proposition \ref{prop:8}, we omit it.
\begin{Proposition}
\label{prop:9}
If $\mu=0$, \eqref{eq:secondtype} holds, and \eqref{eq:assumption-star} is in force, then the stabilizer Lie algebra $\mathfrak h$ includes the whole $\mathfrak{so}(F)$.
\end{Proposition}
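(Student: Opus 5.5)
The plan is to copy the proof of Proposition \ref{prop:8} step by step, after swapping the roles of $\Upsilon$ and $\Omega$. Under \eqref{eq:secondtype} we have $(\Sigma_+\otimes\Delta_\pm)\cap\mathbb S'\subset\Omega\otimes\Delta_\pm$ and $(\Sigma_-\otimes\Delta_\pm)\cap\mathbb S'\subset J\Omega\otimes\Delta_\pm$.

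First, the second step still applies for $s\in(\Sigma_+\otimes\Delta_\pm)\cap\mathbb S'$ and $t\in(\Sigma_-\otimes\Delta_\pm)\cap\mathbb S'$. It only used that $s,t$ are eigenvectors of $\be_{12}$ with opposite eigenvalues, and $\omega_\pm$ both have eigenvalue $-I$. Hence $\gamma^\varphi(s,t)\in\langle\be_{12},\be_{34},\be_{56}\rangle\oplus\Lambda^2\mathbb F$.

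Second, formula \eqref{eq:Lambda^2E} gives the $\Lambda^2\mathbb F$-component as $-\tfrac16(f_\varphi\otimes\omega^{(2)}_{\mathbb F})(s,t)$. This needs $\varphi$ to be injective on the relevant spaces, and it is: by Lemma \ref{lemma:eigenvectorsbasis}, $\varphi$ acts on $\omega_\pm$ by $\pm(\rho+\lambda)\neq 0$, with no condition relating $\rho$ and $\lambda$. This is why no case split on $\rho=\lambda$ is needed here, in contrast with the first type. So $f_\varphi$ is a nondegenerate pairing of $\Omega$ with $J\Omega$, diagonal in the basis $\omega_\pm$ with entries $\pm(\rho+\lambda)$ times the entries of $f$ (orthonormal basis for $if_J$). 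The overall factor $\tfrac{i}{6}(\rho-\lambda)$ in the table is replaced by $\tfrac{i}{6}(\rho+\lambda)$. There may also be relative sign changes between the $\omega_+$ and $\omega_-$ contributions, since the eigenvalues of $\varphi$ on them are opposite.

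Third, using \eqref{eq:assumption-star}, I would normalize three elements $w_1,w_2,w_3\in(\Sigma_+\otimes\Delta_+)\cap\mathbb S'$ as in \eqref{eq:++elements}, with $\upsilon_\pm$ replaced by $\omega_\pm$, and take their conjugates $c(w_i)$. I would then recompute the two tables: the $\Lambda^2_+\mathbb F$-component of $\gamma^\varphi$ and the $\be_0$-component of $\kappa$.

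The main obstacle is the sign issue. Because $\varphi$ has opposite eigenvalues on $\omega_+$ and $\omega_-$, the diagonal entries become, for example, $(1-|a|^2)$ in the $\gamma$-table against $(1+|a|^2)$ in the $\kappa$-table, or the reverse. One must check that Dirac-kernel combinations still produce nonzero multiples of $H$. The first combination $\frac{w_1 c(w_1)}{1+|a|^2}-\frac{w_2 c(w_2)}{1+|b|^2}$ gives $H$ with coefficient proportional to $\frac{1-|a|^2}{1+|a|^2}+\frac{1-|b|^2}{1+|b|^2}$ or a similar expression, which can vanish for special $a,b$. In that degenerate case I would use instead the combination of $w_3 c(w_3)$ with $w_1 c(w_1)$, or the off-diagonal pairings.

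Fourth, the off-diagonal pairings give elements of the form $E+e^{i\theta}F$ modulo $\langle\be_{12},\be_{34},\be_{56}\rangle$, exactly as in the cases $ab=0$, $c=0$, $abc\neq0$. If $H$ cannot be reached directly, two independent elements of this kind still generate all of $\mathfrak{sl}_2(\mathbb C)\cong\Lambda^2_+\mathbb F$ under brackets. Since $\mathfrak h$ is a Lie algebra, this gives $\Lambda^2_+\mathbb F\subset\mathfrak h$; the components in the abelian $\langle\be_{12},\be_{34},\be_{56}\rangle$ should not obstruct this, because they commute with $\mathfrak{so}(F)$, but I expect this to be the point needing the most care.

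Finally, repeating the argument with $\Delta_-$ gives $\Lambda^2_-\mathbb F\subset\mathfrak h$, hence $\mathfrak{so}(F)\subset\mathfrak h$.
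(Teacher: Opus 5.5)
Your route is the paper's: its proof of this proposition is just ``as for Proposition~\ref{prop:8}'' with $\Upsilon$ replaced by $\Omega$. Your remarks on the second step, on the non-degeneracy of $f_\varphi$ on $\Omega\times J\Omega$, on why no case split $\rho=\lambda$ is needed, and on the centrality of $\langle\be_{12},\be_{34},\be_{56}\rangle$ in the bracket argument are all correct. However, the ``main obstacle'' you single out does not exist. Your fallback for it is also not justified as written: you never show that two independent elements $E+e^{i\theta}F$ are available in every subcase. Fortunately it is not needed.

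The point you missed is that $\varphi$ already has opposite eigenvalues on $\Upsilon$. By Lemma~\ref{lemma:eigenvectorsbasis}, $\varphi$ acts as $(\lambda-\rho)\,\mathrm{diag}(1,-1)$ on $(\upsilon_+,\upsilon_-)$ and as $(\rho+\lambda)\,\mathrm{diag}(1,-1)$ on $(\omega_+,\omega_-)$. The same holds on $J\Upsilon$ and $J\Omega$, since $\varphi$ commutes with $J$ and its eigenvalues are real. Moreover, $\{\omega_\pm\}$ is $if_J$-orthonormal exactly as $\{\upsilon_\pm\}$ is, and $\vol_E$ acts by a single scalar multiple of $I$ on all of $\Sigma_-$. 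Hence, after substituting $\upsilon_\pm\mapsto\omega_\pm$ in \eqref{eq:++elements}--\eqref{eq:-+elements}:
\begin{itemize}
\item the $\kappa$-table is literally unchanged;
\item the $\Lambda^2_+\mathbb F$-table is unchanged, except that the omitted factor $\tfrac{i}{6}(\rho-\lambda)$ becomes $-\tfrac{i}{6}(\rho+\lambda)$, which is nonzero.
\end{itemize}
The mismatch you anticipate between diagonal $\gamma$- and $\kappa$-entries is already present in the paper's table (entries $w_2\otimes c(w_2)$ and $w_3\otimes c(w_3)$: $1-|b|^2$ against $1+|b|^2$, and $1-|c|^2$ against $1+|c|^2$). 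It is harmless. The Dirac-kernel element $\frac{w_1\otimes c(w_1)}{1+|a|^2}-\frac{w_2\otimes c(w_2)}{1+|b|^2}$ maps to $-\bigl(1+\frac{1-|b|^2}{1+|b|^2}\bigr)\epsilon_1^+\epsilon_2^+=-\frac{2}{1+|b|^2}\,\epsilon_1^+\epsilon_2^+$, which never vanishes. Your guessed coefficient $\frac{1-|a|^2}{1+|a|^2}+\frac{1-|b|^2}{1+|b|^2}$ is not what comes out. So there is no degenerate case. The case analysis ($ab=0$; $ab\neq0$, $c=0$; $abc\neq0$) and the bracket-closure argument go through verbatim, first for $\Delta_+$ and then for $\Delta_-$.
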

 \begin{Corollary}
\label{cor:5}
If $\mu=0$, \eqref{eq:firsttype} holds with $\rho\neq\lambda$ or \eqref{eq:secondtype} holds, and \eqref{eq:assumption-star} is in force, then the subspace $\mathbb S'$ of $\mathbb S$ is compatible with the decomposition $\mathbb S=\big(\Sigma\otimes\Delta_+\big)\oplus\big(\Sigma\otimes\Delta_-\big)$ of $\mathbb S$.
\end{Corollary}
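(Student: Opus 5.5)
The plan is to combine Propositions \ref{prop:8} and \ref{prop:9} with the Lie pair property. Under the hypotheses of the corollary, either Proposition \ref{prop:8} (case \eqref{eq:firsttype} with $\rho\neq\lambda$) or Proposition \ref{prop:9} (case \eqref{eq:secondtype}) applies. In both cases the stabilizer Lie algebra $\fh=\fh_{(\varphi,S')}$ contains all of $\mathfrak{so}(F)$, hence its complexification contains $\mathfrak{so}(\mathbb F)\cong\Lambda^2\mathbb F$.

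By Proposition \ref{prop:LiePair}, $(\varphi,S')$ is a Lie pair, so $\fh\subset\stab_{\fso(V)}(S')$. Consequently $\mathbb S'=S'\otimes_{\RR}\mathbb C$ is preserved by the Clifford action of every element of $\Lambda^2\mathbb F$, and in particular by the action of any product $\be_{ab}\be_{cd}$ of such elements with $a,b,c,d\in\{7,8,9,\natural\}$ distinct. Under Lemma \ref{lemma:action}, an element $v\in F$ acts on $\mathbb S=\Sigma\otimes_{\mathbb C}\Delta$ as $1\otimes v$. Hence $\be_{78}\be_{9\natural}=\pm\vol_F$ acts as $\pm 1\otimes\vol_F$, up to the sign fixed by the conventions. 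Since $\vol_F$ has eigenvalues $\pm1$ exactly on $\Delta_\pm$, the operator $1\otimes\vol_F$ on $\mathbb S$ has eigenspaces $\Sigma\otimes\Delta_+$ and $\Sigma\otimes\Delta_-$.

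The associative algebra generated by $\fh$ acting on $\mathbb S$ therefore contains a semisimple operator whose eigenspace decomposition is $\mathbb S=(\Sigma\otimes\Delta_+)\oplus(\Sigma\otimes\Delta_-)$. Any subspace invariant under a diagonalizable operator splits as the direct sum of its intersections with the eigenspaces, so $\mathbb S'=(\mathbb S'\cap\Sigma\otimes\Delta_+)\oplus(\mathbb S'\cap\Sigma\otimes\Delta_-)$. This is exactly the claim.

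An equivalent formulation avoids products and uses only Lie algebra elements. The center of $\mathfrak{so}(4)\cong\mathfrak{so}(3)\oplus\mathfrak{so}(3)$ is trivial, but each ideal $\Lambda^2_\pm\mathbb F\cong\mathfrak{sl}_2$ acts trivially on $\Delta_\mp$ and irreducibly on $\Delta_\pm$. By Schur's lemma, any $\Lambda^2\mathbb F$-submodule of $\Sigma\otimes\Delta\cong\Sigma\otimes\Delta_+\oplus\Sigma\otimes\Delta_-$, which is an isotypic sum of two inequivalent types, splits along the isotypic decomposition. The only point needing care is that $\mathfrak{so}(F)\subset\fh$ is meant literally, not merely modulo $\langle\be_{12},\be_{34},\be_{56}\rangle$. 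The proof of Proposition \ref{prop:8} obtained it that way, and $\fh$ is a Lie subalgebra, so brackets with $\Lambda^2\mathbb F$ isolate genuine elements of $\Lambda^2\mathbb F$. I expect no real obstacle beyond this bookkeeping.
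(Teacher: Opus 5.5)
Your proposal is correct and is essentially the paper's proof. The paper likewise takes $\mathfrak{so}(F)\subset\fh\subset\stab_{\fso(V)}(S')$ from Propositions \ref{prop:8}--\ref{prop:9} together with the Lie pair property, and then concludes because $\Delta_\pm$ are irreducible and inequivalent $\mathfrak{so}(F)$-modules, which is your second (isotypic) formulation; your main route, using $\vol_F=\be_7\be_8\be_9\be_\natural$ acting as $1\otimes\vol_F$ on $\mathbb S'$, is just a concrete rephrasing of that same splitting.
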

\begin{proof}
Immediate from the fact that $\mathbb S'$ is stable under $\mathfrak{so}(F)$ by Propositions \ref{prop:8} and \ref{prop:9}, and the fact that $\Delta_+$ and $\Delta_-$ are irreducible and inequivalent under the action of  $\mathfrak{so}(F)$.
\end{proof} 
\vskip0.1cm\par\noindent
\underline{\it Seventh and last step}
 We are now ready to complete the proof of the main Theorem \ref{mainthm:2}.
\vskip0.3cm\par

Case \eqref{eq:firsttype} with $\rho=\lambda$ has already been settled in the fifth step. Thus we may assume that the assumptions of Corollary \ref{cor:5} hold and $\mathfrak h\supset \mathfrak{so}(F)$. 
In particular $\dim\big((\Sigma_i\otimes\Delta_j)\cap\mathbb S'\big)=4$ for all $i,j\in\left\{+,-\right\}$ and
either
\begin{equation}
\label{eq:firsttypefinal}
\begin{aligned}
(\Sigma_+\otimes\Delta_\pm)\cap\mathbb S'&=\Upsilon\otimes\Delta_\pm\;,\\
(\Sigma_-\otimes\Delta_\pm)\cap\mathbb S'&=J\Upsilon\otimes\Delta_\pm\;,
\end{aligned}
\end{equation}
or
\begin{equation}
\label{eq:secondtypefinal}
\begin{aligned}
(\Sigma_+\otimes\Delta_\pm)\cap\mathbb S'&=\Omega\otimes\Delta_\pm\;,\\
(\Sigma_-\otimes\Delta_\pm)\cap\mathbb S'&=J\Omega\otimes\Delta_\pm\;,
\end{aligned}
\end{equation}
thanks to the second step. Let us deal with \eqref{eq:firsttypefinal} just for concreteness.

Since $(M,g,F)$ is highly supersymmetric and $\mathbb S'\supset  
(\Upsilon\oplus J\Upsilon)\otimes\Delta$, there exists a non-zero element
$s\in\mathbb S'$ that lies into $(\Omega\oplus J\Omega)\otimes\Delta$. By Corollary \ref{cor:5}, we may assume that it lies into $(\Omega\oplus J\Omega)\otimes\Delta_+$ (the alternative case $(\Omega\oplus J\Omega)\otimes\Delta_-$ is analogous) and we write $s=s_++s_-$ according to the decomposition $\Sigma=\Sigma_+\oplus\Sigma_-$. 
Then
\begin{equation*}
\begin{aligned}
\kappa(s,\Upsilon\otimes\Delta_-)=\kappa(s,J\Upsilon\otimes\Delta_-)\in\mathbb F\;,
\end{aligned}
\end{equation*}
thanks to \eqref{eq:Diraccurrentbroken} and the fact that $f_{\vol_F}$ has the invariant $\imath=+1$. However, by the eigenvalue structure of Lemma \ref{lemma:eigenvectorsbasis} and the fact that $\be_{12},\be_{34},\be_{56}$ act trivially on $\mathbb F$ we finally see that both $\kappa(s,\Upsilon\otimes\Delta_-)$ and $\kappa(s,J\Upsilon\otimes\Delta_-)$ do vanish. In summary the spaces
$$
\mathbb Cs\otimes \big(\Upsilon\otimes\Delta_-)\;,\qquad \mathbb Cs\otimes \big(J\Upsilon\otimes\Delta_-\big)\;,
$$
are included in the Dirac kernel. If $s_+\neq 0$, then 
\vskip0.2cm\par\noindent
\begin{equation}
\label{eq:celafacciamo}
\widetilde\kappa_{\mathbb E}\otimes\omega^{(3)}_{\mathbb F}(s,\Upsilon\otimes\Delta_-)=\widetilde\kappa_{\mathbb E}\otimes\omega^{(3)}_{\mathbb F}(s_+,\Upsilon\otimes\Delta_-)
\end{equation}
\vskip0.2cm\par\noindent
and therefore the components in $\mathbb E\wedge \mathbb F$ of $\gamma^\varphi(s,\Upsilon\otimes\Delta_-)$ and $\gamma^\varphi(s_+,\Upsilon\otimes\Delta_-)$ coincide. Since $s_+\in\Omega\otimes\Delta_+$, the input on the r.h.s. of  
\eqref{eq:celafacciamo} is not annihilated by $\be_{34}+\be_{56}$ and we then get a non-trivial contribution in $\mathbb E\wedge \mathbb F$, according to the initial lines of the first step. {\it This is a contradiction and it shows that this case cannot happen in the highly-supersymmetric regime}.

If $s_+=0$ then $s_-\neq 0$, and the same conclusion is obtained using the space $J\Upsilon\otimes\Delta_-$ instead of the space $\Upsilon\otimes\Delta_-$.

\section*{Acknowledgments}
The third author 
acknowledges the MIUR Excellence Department Project MatMod@TOV, which has been
awarded to the Department of Mathematics, University of Rome Tor Vergata, CUP E83C23000330006.  This article/publication was also supported by the ``National Group for
Algebraic and Geometric Structures, and their Applications'' GNSAGA-INdAM (Italy) and it is
based upon work from COST Action CaLISTA CA21109 supported by COST
(European Cooperation in Science and Technology), 
{\scriptsize\url{https://www.cost.eu}}.
%

\bibliographystyle{abbrv}

\end{document}